\renewcommand{\epsilon}{\varepsilon}
\renewcommand{\leq}{\leqslant}
\renewcommand{\geq}{\geqslant}
\newtheorem{theorem}{Theorem}[section]
\newtheorem{corollary}{Corollary}[section]
\newtheorem{proposition}{Proposition}[section]
\newtheorem{definition}{Definition}[section]
\newtheorem{assumption}{Assumption}[section] % propre compteur, indépendant
\newcommand{\assumptionname}[1]{%
  \gdef\@currentlabelname{#1}%
}
\title{Treatment Effect Estimation in Causal Survival Analysis: Practical Recommendations}
\author{Charlotte Voinot \and Clément Berenfeld \and Imke Mayer \and Bernard Sebastien \and Julie Josse}
\date{}
\begin{document}
\maketitle
\begin{abstract}
\textcolor{black}{The restricted mean survival time (RMST) difference offers an interpretable causal contrast to estimate the treatment effect for time-to-event outcomes, yet a wide range of available estimators leaves limited guidance for practice. We provide a unified review of RMST estimators for randomized trials and observational studies, establish identification and asymptotic properties, and supply new derivations where needed. Our extensive simulation study compares simple nonparametric methods (such as unweighted Kaplan–Meier estimators) alongside parametric and nonparametric implementations of the G-formula, weighting approaches, Buckley–James transformations, and augmented estimators under diverse censoring mechanisms and model specifications.
Across scenarios, classical Kaplan–Meier estimators (weighted when required by the censoring process) and G-formula methods perform well in randomized settings, while in observational data G-formula estimators remain competitive; however, augmented estimators such as AIPTW–AIPCW generally offer robustness to model misspecification and a favorable bias–variance trade-off. Parametric estimators perform best under correct specification, whereas nonparametric methods avoid functional assumptions but require large sample sizes to achieve reliable performance. We offer practical recommendations for estimator choice and provide open-source R code to support reproducibility and application.}

\end{abstract}

\section{Introduction}\label{sec-intro}

Causal survival analysis is a rapidly evolving field that integrates causal inference \citep{rubin_1974, Hernan2020} with survival analysis \citep{kalbfleisch2002} to evaluate the impact of treatments on time-to-event outcomes, while accounting for censoring, i.e. situations where only partial information about an event's occurrence is available.
In this litterature, the Restricted Mean Survival Time (RMST) difference has emerged as a promising causal measure  that offers an intuitive interpretation of the average survival time over a specified time horizon. 

\textcolor{black}{Despite its recent emergence, the field has expanded quickly, though often in a fragmented manner. As a result, researchers are faced with a broad spectrum of estimators, from weighting and regression-based approaches to more complex hybrid strategies, and must navigate numerous methodological decisions. These include, for example, the choice of how to estimate nuisance components, whether through parametric models, or nonparametric techniques, frequently without clear guidance. The situation is further complicated by the limited availability of accessible software implementations. Addressing these challenges is especially crucial in real-world contexts where randomized trials are not feasible and carefully designed observational studies, grounded in explicit causal assumptions, represent the primary source of evidence on treatment effects \citep{Makary2025}. These considerations underscore the need for a structured review that consolidates existing approaches, clarifies their underlying assumptions, and evaluates estimator performance to support informed methodological decisions, particularly in finite-sample settings. Such a survey could also help to identify key methodological gaps that warrant further research.}

\textcolor{black}{Related efforts have already been made to clarify methodological options in survival analysis. For instance, \citet{denz2023} focus on confounders-adjusted survival curves under proportional hazards and assess the impact of model misspecification. Yet, most existing reviews remain confined to the Cox model framework and thus overlook the broader range of estimators and causal estimands now available.}

\textcolor{black}{
Despite its widespread use, we chose not to cover the estimation of the hazard ratio (HR), typically derived from a Cox model, because of several well-documented limitations. It relies on the proportional hazards assumption, which may not hold in practice. Moreover, the HR does not represent a causal contrast \citep{Martinussen2020}, even in randomized trials, due to inherent selection bias that can distort group comparisons unless the treatment has no effect \citep{Martinussen2013, Hernan2010_HR}. The HR is also non-collapsible, meaning marginal effects cannot be inferred from conditional ones \citep{huitfeldt2019collapsibility, greenland1999}. Finally, its clinical interpretation could be difficult, as it reflects an instantaneous risk rather than a cumulative or absolute survival benefit \citep{averbuch2025}.}

In this paper, after introducing the notations and the definition of the restricted mean survival time (Section~\ref{sec-notations}), we present the statistical framework required to define and estimate RMST in both randomized controlled trials (Section~\ref{sec-theoryRCT}) and observational settings (Section~\ref{sec-theoryOBS}). We characterize their key statistical properties, such as consistency and asymptotic normality, and provide a unified, self-contained set of proofs. Where existing results are incomplete or unavailable, we supply original derivations; where they exist, we rederive them within a common formalism that enhances clarity and facilitates direct comparison between estimators. In Section~\ref{sec-simulation}, we complement this theoretical analysis with a simulation study assessing estimator performance under various conditions, including independent and conditionally independent censoring and both correctly specified and misspecified models. Finally, in Section~\ref{sec-conclusion}, we offer practical recommendations for estimator selection, grounded in considerations such as finite-sample behavior and computational complexity.

\subsection{Notations and definition of the Restricted Mean Survival Time}\label{sec-notations}

We set the analysis in the potential outcome framework, where a patient,
described by a vector of covariates \(X \in \mathbb{R}^p\), either
receives a treatment (\(A=1\)) or is in the control group (\(A=0\)). 
The patient’s survival time would then be \(T^{(0)} \in \mathbb{R}^+\) under control and \(T^{(1)} \in \mathbb{R}^+\) under treatment. In practice, both potential outcomes cannot be observed simultaneously for the same individual, since each patient receives only one treatment condition. Instead, we observe the realized survival time \(T\), defined as follows:
\begin{assumption}
\phantomsection
\assumptionname{SUTVA}
\label{ass:sutva}
\textbf{(Stable Unit Treatment Value Assumption — SUTVA)} $T = A T^ {(1)} + (1 - A) T^ {(0)}$.
\end{assumption}

Because of potential censoring, the outcome \(T\) is not always fully observed. The most common form is right-censoring, which occurs when the event has not occurred by the end of follow-up \citep{Turkson2021}. In what follows, we focus on this case: for each subject, we observe an \(n\)-sample \((X, A, \tilde T, \Delta)\), where \(\tilde T = \min(T, C)\), \(C \in \mathbb{R}_+\), and \(\Delta = \mathbb{I}\{T \le C\}\) (equal to \(1\) if the event time is observed and \(0\) otherwise). These observed variables are derived from the  unobservable tuple \((X, A, T^{(0)}, T^{(1)}, C)\) illustrated in Table~\ref{tbl-example_data}.

\begin{table}[h!]
\centering
\caption{Example of a survival dataset. Each observation corresponds to a patient. In practice, only \(X\), \(A\), \(\Delta\), and \(\widetilde{T}\) are observed. The column \textit{Obs. Outcome} represents the observed survival outcome.}
\label{tbl-example_data}
\renewcommand{\arraystretch}{1.2}
\setlength{\tabcolsep}{6pt}

\hspace*{-0.9cm}
\begin{tabular}{|
|>{\columncolor{gray!15}}c|>{\columncolor{gray!15}}c|
>{\columncolor{gray!15}}c|
>{\columncolor{gray!15}}c|
c|c|c|c|
>{\columncolor{gray!15}}c|
>{\columncolor{gray!15}}c||}
\hline
\multicolumn{3}{||>{\columncolor{gray!15}}c|}{\small \textbf{Covariates}} 
& \textbf{\small Treatment} 
& \multicolumn{2}{c|}{\small \textbf{Potential outcome}} 
& \textbf{\small Outcome} 
& \textbf{\small Censoring} 
& \textbf{\small Status} 
& \textbf{\small Obs. Outcome} \\
 \(X_1\) & \(X_2\) & \(X_3\) 
& \(A\) 
& \(\quad T^ {(0)} \; \; \) & \(T^ {(1)}\) 
& \(T\) 
& \(C\) 
& \(\Delta\) 
& \(\widetilde{T}\) \\
\hline
1 & 1.5 & 4 & 1 & ? & 200 & 200 & ? & 1 & 200 \\
5 & 1 & 2 & 0 & 100 & ? & 100 & ? & 1 & 100 \\
9 & 0.5 & 3 & 1 & ? & ? & ? & 200 & 0 & 200 \\
\hline
\end{tabular}
\end{table}

Our aim is to estimate the Average Treatment Effect (ATE) defined as the difference between the Restricted Mean Survival Time \citep{royston2013}.

\begin{definition}[] (Causal effect: Difference between Restricted Mean Survival Time)\protect\hypertarget{def-ATE}{}\label{def-ATE} We define 
$\theta_{\mathrm{RMST}}=\mathbb{E}\!\left[T^{(1)} \wedge \tau - T^{(0)} \wedge \tau\right],$
where \(a \wedge b := \min(a,b)\) for \(a,b \in \mathbb{R}\), and \(\tau>0\) is a pre-specified time horizon over which survival is evaluated.
\end{definition}

We define the survival functions \(S^{(a)}(t):=\mathbb{P}(T^ {(a)} > t)\)
for \(a \in \{0,1\}\), i.e., the probability that an individual will survive beyond a given time \(t\) if treated or if not treated. Likewise,
we let \(S(t) := \mathbb{P}(T >t)\), and
\(S_C(t) := \mathbb{P}(C > t)\). We also let
\(G(t) := \mathbb{P}(C \geqslant t)\) be the left-limit of the survival
function \(S_C\). Because \(T^ {(a)} \wedge \tau\) are non-negative random
variables, one can easily express the restricted mean survival time
using the survival functions:
\begin{equation}\phantomsection\label{eq-rmstsurv}{
\mathbb{E}(T^ {(a)} \wedge \tau) = \int_{0}^{\infty} \mathbb{P}(T^ {(a)}\wedge \tau > t)dt = \int_{0}^{\tau}S^{(a)}(t)dt.
}\end{equation}
Consequently, \(\theta_{\mathrm{RMST}}\) can be interpreted as the mean
difference between the survival function of treated and control until a
fixed time horizon \(\tau\) (See Figure~\ref{fig-RMST}). A RMST difference
\(\theta_{\mathrm{RMST}} = 10\) days with \(\tau=200\) means that, on
average, the treatment increases the survival time by 10 days at 200
days.

Although the present work focuses on the estimation of the difference of RMST, we would like to stress that the causal effect can be assessed through other measures, such as for instance the difference between the survival functions $\theta_{\mathrm{SP}} := S^{(1)}(\tau) - S^{(0)}(\tau)$ for some time \(\tau\) \citep{Ozenne20_AIPTW_AIPCW}. 

\begin{figure}[H]
\centering{
\hspace*{-0.05\textwidth}
\includegraphics[width=0.5\textwidth]{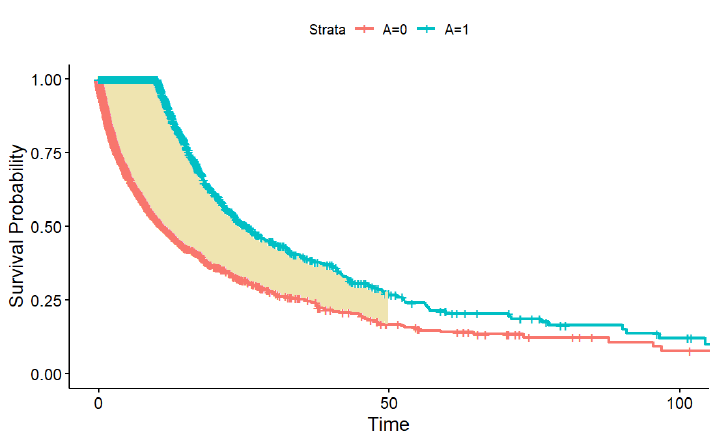}
}
\caption{\label{fig-RMST}Estimated survival curves on synthetic data.  The \(\theta_{\mathrm{RMST}}\) at \(\tau=50\)
corresponds to the yellow area between them.
Curves are estimated with the Kaplan-Meier estimator, see
Section~\ref{sec-theoryRCT_indc}.}
\end{figure}%

We provide in Table~\ref{tbl-reminder_notations} a summary of the
notations used throughout the paper.
\begin{longtable}[]{@{}
  >{\raggedright\arraybackslash}p{0.25\linewidth}
  >{\raggedright\arraybackslash}p{0.75\linewidth}@{}}
\caption{Summary of the notations.}
\label{tbl-reminder_notations}\\
\toprule\noalign{}
\begin{minipage}[b]{\linewidth}\raggedright
Symbol
\end{minipage} & \begin{minipage}[b]{\linewidth}\raggedright
Description
\end{minipage} \\
\midrule\noalign{}
\endfirsthead

\toprule\noalign{}
\begin{minipage}[b]{\linewidth}\raggedright
Symbol
\end{minipage} & \begin{minipage}[b]{\linewidth}\raggedright
Description
\end{minipage} \\
\midrule\noalign{}
\endhead

\bottomrule\noalign{}
\endlastfoot

\(X\) & Covariates \\
\(A\) & Treatment indicator \((A=1\) for treatment, \(A=0\) for
control\()\) \\
\(T\) & Survival time \\
\(T^ {(a)}, a \in \{0,1\}\) & Potential survival time respectively with and
without treatment \\
\(S^{(a)},a \in \{0,1\}\) & Survival function
\(S^{(a)}(t) =\mathbb{P}(T^ {(a)} > t)\) of the potential survival times \\
\(\lambda^{(a)},a \in \{0,1\}\) & Hazard function
\(\lambda^{(a)}(t) =\lim_{h \to 0^+}\mathbb{P}(T^ {(a)} \in [t,t+h) |T^ {(a)}\geqslant t)/h\)
of the potential survival times \\
\(\Lambda^{(a)},a \in \{0,1\}\) & Cumulative hazard function \(\Lambda^{(a)}(t) := \int_0^t \lambda^{(a)}(s)\mathop{}\!\mathrm{d}s\) of the
potential survival times \\
\(C\) & Censoring time \\
\(S_C\) & Survival function \(S_C(t) =\mathbb{P}(C > t)\) of the
censoring time \\
\(G\) & Left-limit of the survival function
\(G(t) =\mathbb{P}(C \geqslant t)\) of the censoring time \\
\(\widetilde{T}\) & Observed time (\(T \wedge C=\min(T,C)\)) \\
\(\Delta\) & Event status \(\mathbb{I}\{T \leqslant C \}\) \\
\(\Delta^{\tau}\) & Event status of the restricted time
\(\Delta^{\tau} = \max\{\Delta, \mathbb{I}\{\widetilde{T} \geqslant\tau\}\}\) \\
\(\{t_{1},t_{2},\dots,t_{K}\}\) & Discrete times \\
\(e(x)\) & Propensity score \(\mathbb{E} [A| X = x]\) \\
\(\mu(x,a), a \in \{0,1\}\) &
\(\mathbb{E}[T \wedge \tau \mid X=x,A=a ]\) \\
\(S(t|x,a), a \in \{0,1\}\) & Conditional survival function,
\(\mathbb{P}(T > t | X=x, A =a)\). \\
\(\lambda^{(a)}(t|x), a \in \{0,1\}\) & Conditional hazard functions of
the potential survival times \\
\(G(t|x,a), a \in \{0,1\}\) & left-limit of the conditional survival
function of the censoring \(\mathbb{P}(C\geqslant t|X=x,A=a)\) \\
\(Q_{S}(t|x,a), a \in \{0,1\}\) &
\(\mathbb{E}[T \wedge \tau \mid X=x,A=a, T \wedge \tau>t]\) \\
\end{longtable}

\section{Causal survival analysis in Randomized Controlled Trials}\label{sec-theoryRCT}

Randomized Controlled Trials (RCTs) are the gold standard for
establishing the effect of a treatment on an outcome, because treatment
allocation is controlled through randomization, which ensures
(asymptotically) the balance of covariates between treated and controls,
and thus avoids problems of confounding between treatment groups. In a classical RCT, treatment is randomly assigned \citep{rubin_1974}, and everyone could, in principle, be assigned to either arm.

\begin{assumption}
\phantomsection
\assumptionname{Random Treatment Assignment}
\label{ass:rta}
\textbf{(Random Treatment Assignment)}
$A \perp\mkern-9.5mu\perp T^ {(0)}, T^ {(1)}, X$.
\end{assumption}

\begin{assumption}
\phantomsection
\assumptionname{Trial Positivity}
\label{ass:trialpositivity}
\textbf{(Trial Positivity)}
$0 < \mathbb{P}(A=1) < 1$.
\end{assumption}

\noindent Under Assumptions~\ref{ass:sutva} (\nameref{ass:sutva}), \ref{ass:rta} (\nameref{ass:rta}) and \ref{ass:trialpositivity} (\nameref{ass:trialpositivity}), classical causal
identifiability equations can be written to express
\(\theta_{\mathrm{RMST}}\) without potential outcomes.
\begin{equation}\phantomsection\label{eq-RMSTkm}{
\begin{aligned}
    \theta_{\mathrm{RMST}} &=  \mathbb{E}[T^ {(1)} \wedge \tau - T^ {(0)} \wedge \tau]\\
    &= \mathbb{E}[T^ {(1)} \wedge \tau | A = 1] - \mathbb{E}[ T^ {(0)} \wedge \tau| A= 0]  && \text{(Random treatment assignment)} \\
       &= \mathbb{E}[T \wedge \tau | A = 1] - \mathbb{E}[ T \wedge \tau| A= 0].  && \text{(SUTVA)}
\end{aligned}
}\end{equation}

\noindent 
However, Equation~\ref{eq-RMSTkm} still depends on \(T\), which is only partially observed due to censoring. To ensure identifiability of treatment effects, standard assumptions on the censoring mechanism are required, namely independent and conditionally independent censoring. 

When censoring is independent, standard survival estimators, such as the Kaplan–Meier estimator, directly accommodate censoring, allowing RMST to be obtained from the estimated survival curve via Equation~\ref{eq-rmstsurv} (Section~\ref{sec-theoryRCT_indc}). Under the more general condition of conditionally independent censoring, censoring-unbiased transformations of the event time can be used to recover \(\mathbb{E}[T \wedge \tau \mid A = a]\)  (Section~\ref{sec-condcens}). These transformations enable two estimation strategies:
(i) direct estimation, by computing the transformed outcome and averaging it and (ii) adjusted survival-curve estimation, in which the transformation is incorporated within a survival estimator, and RMST is then derived using Equation~\ref{eq-rmstsurv}. These two families of methods form the basis of the estimators reviewed in the next sections.

\subsection{Independent censoring: the Kaplan-Meier estimator}\label{sec-theoryRCT_indc}

\begin{assumption}
\phantomsection
\assumptionname{Independent Censoring}
\label{ass:independentcensoring}
\textbf{(Independent Censoring)}
$C \perp\mkern-9.5mu\perp T^ {(0)}, T^ {(1)}, X, A.$
\end{assumption}
\noindent Under Assumption~\ref{ass:independentcensoring}, subjects censored at time
\(t\) are representative of all subjects who remain at risk at time
\(t\). 
\textcolor{black}{Figure~\ref{fig-RCT_ind_causalgraph} shows the causal graph where the treatment assignment is random, the time $T$ depends on \((A,X)\), and the censoring time \(C\) is independent of all variables.
}

\begin{figure}[h]
  \centering
  \includegraphics[width=0.25\textwidth]{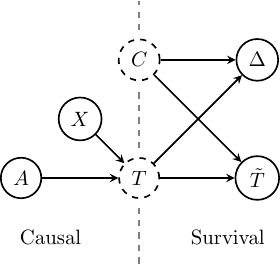}
  \caption{\label{fig-RCT_ind_causalgraph}Causal graph in RCT survival data with independent censoring.}
\end{figure}

We assume that there is no almost-sure upper bound on the censoring
time before \(\tau\).

\begin{assumption}
\phantomsection
\assumptionname{Positivity of the Censoring Process}
\label{ass:poscen}
\textbf{(Positivity of the Censoring Process)}
There exists \(\varepsilon > 0\) such that $G(\tau) \geqslant \varepsilon$.
\end{assumption}
\noindent If indeed it was the case that \(\mathbb{P}(C < t) = 1\) for some
\(t < \tau\), then we would not be able to infer anything on the
survival function on the interval \([t,\tau]\) as all observation times
\(\widetilde T_i\) would be in \([0,t]\) almost surely. In practice,
adjusting the threshold time \(\tau\) can help satisfy this positivity
assumption. 

\textcolor{black}{Under Assumptions~\ref{ass:sutva} (\nameref{ass:sutva}), \ref{ass:rta} (\nameref{ass:rta}), \ref{ass:trialpositivity} (\nameref{ass:trialpositivity}), \ref{ass:independentcensoring} (\nameref{ass:independentcensoring}), and \ref{ass:poscen} (\nameref{ass:poscen}), the potential-outcome survival $S^{(a)}(t)$ is identified as a functional of the joint law of ($\tilde{T}, \Delta, A$)}. This enables several estimation strategies, the most well-known being the Kaplan-Meier, the Maximum Likelihood
Estimator (MLE) of the survival functions, see for instance \cite{Kaplan1958_1}. 

\begin{definition}[]\protect\hypertarget{def-km}{}\label{def-km}

(Kaplan-Meier estimator, \citep{Kaplan1958_1})

$$
    \widehat{S}_{\mathrm{KM}}(t|A=a) := \prod_{t_k \leqslant t}\left(1-\frac{D_k(a)}{N_k(a)}\right).
$$

with $D_k(a) := \sum_{i=1}^n \mathbb{I}(\widetilde T_i = t_k, \Delta_i = 1, A=a) \quad\text{and}\quad N_k(a) := \sum_{i=1}^n \mathbb{I}(\widetilde T_i \geqslant t_k, A=a),$ which correspond respectively to the number of deaths \(D_k(a)\) and of individuals at risk \(N_k(a)\) at time \(t_k\) in the treated group ($a=1$) or in the control group ($a=0$).
\end{definition}

Furthermore, since \(D_k(a)\) and \(N_k(a)\) are sums of i.i.d. random variables, the Kaplan--Meier estimator inherits convenient statistical properties. \textcolor{black}{In Appendix~\ref{sec-proof21}, we show that it exhibits exponentially fast bias decay (Proposition~\ref{prp-km}) and asymptotic normality with explicitly characterized variances (Proposition~\ref{prp-varkm}).} Finally, the estimator in Definition~\ref{def-km} extends naturally to the continuous setting by replacing the fixed grid \(\{t_k\}\) with observed event times \((\widetilde T_i = t_k, \Delta_i = 1)\). We refer to \cite{Breslow1974} for a study of this version and to \cite[Sec.~3.2]{Aalen2008} for a broader point-process perspective.

The associated RMST estimator is then  defined as
\begin{equation}\phantomsection\label{eq-thetaunadjKM}{
\widehat{\theta}_{\mathrm{KM}} = \int_{0}^{\tau}\widehat{S}_{\mathrm{KM}}(t|A=1)-\widehat{S}_{\mathrm{KM}}(t|A=0)dt.
}\end{equation} 

\textcolor{black}{This estimator inherits KM properties. It is consistent for the target RMST and its bias decays exponentially fast. In RCTs, the independence of arms delivers asymptotic normality with variance equal to the sum of the arm-specific RMST variances.}

\subsection{Conditionally independent censoring}\label{sec-condcens}

Conditionally independent censoring relaxes the assumption of independent censoring by allowing censoring to depend on treatment or covariates — for instance, when treated patients withdraw from a study due to side effects.

\begin{assumption}
\phantomsection
\assumptionname{Conditionally Independent Censoring}
\label{ass:condindepcensoring}
\textbf{(Conditionally Independent Censoring)}
$C \perp\mkern-9.5mu\perp T^ {(0)}, T^ {(1)} \mid X, A$
\end{assumption}

Under Assumption~\ref{ass:condindepcensoring}, subjects within a same
stratum defined by \(X=x\) and \(A=a\) have equal probability of
censoring at time \(t\), for all \(t\). In case of conditionally
independent censoring, we also need to assume that all subjects have a
positive probability to remain uncensored at their time-to-event.

\begin{assumption}
\phantomsection
\assumptionname{Conditional Positivity / Overlap for Censoring}
\label{ass:positivitycensoring}
\textbf{(Conditional Positivity / Overlap for Censoring)}
There exists \(\varepsilon> 0\) such that for all \(t \in [0,\tau]\), it almost surely holds $G(t|A,X) \geqslant\varepsilon.$
\end{assumption}

\textcolor{black}{Figure~\ref{fig-RCT_dep_causalgraph} represents the causal graph when treatment assignment is randomized and when there is a conditionally independent censoring mechanism.}

\begin{figure}[h]
  \centering
  \includegraphics[width=0.25\textwidth]{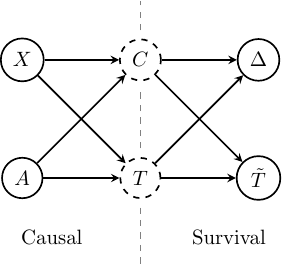}
  \caption{\label{fig-RCT_dep_causalgraph}Causal graph in RCT survival
data with dependent censoring.}
\end{figure}

Under dependent censoring, the Kaplan-Meier estimator as defined in
Definition~\ref{def-km} is not suited to estimate survival. Alternatives
include plug-in G-formula estimators (Section~\ref{sec-Gformula}) and
unbiased transformations (Section~\ref{sec-unbiasedtransfo}).

\subsubsection{The G-formula}\label{sec-Gformula}

Building on Equation~\ref{eq-RMSTkm}, one can express the RMST conditioning on the covariates: \begin{align*}
 \theta_{\mathrm{RMST}} = \mathbb{E}\left[\mathbb{E}[T \wedge \tau |X, A = 1]\right] - \mathbb{E}\left[ T \wedge \tau|X, A= 0]\right] = \mathbb{E}[\mu(X,1)-\mu(X,0)].
\end{align*}
with $\mu(x,a) := \mathbb{E}[T \wedge \tau |X = x, A = a].$

An estimator of the RMST difference via the \emph{G-formula} arises by plugging in any estimator \(\widehat\mu\) of \(\mu\):
\begin{equation}\phantomsection\label{eq-gformula}{ \widehat{\theta}_{\mathrm{G-formula}} = \frac{1}{n} \sum_{i=1}^n \widehat\mu\left(X_i, 1\right)-\widehat\mu\left(X_i, 0\right). }\end{equation}

\textcolor{black}{The G-formula accommodates right censoring under conditionally independent censoring, without specifying a separate censoring model: the target (e.g., \(S^{(a)}(t)\) or \(\mathrm{RMST}^{(a)}\)) is expressed as a functional of the conditional outcome regression \(\mu(\cdot)\).
The same G-formula can be applied to observational settings (in Section~\ref{sec-obscondcens}) to address non-random treatment assignment. Different strategies can be used to estimate \(\mu\), ranging from semi-parametric models such as the Cox model to fully non-parametric approaches like survival forests.
}
\textcolor{black}{For any chosen model, implementation follows one of two paradigms: \emph{T-learner} approaches (i.e., two separate learners), where \(\mu(\cdot,1)\) and \(\mu(\cdot,0)\) are estimated separately; and \emph{S-learner} approaches (i.e., a single learner), where \(\widehat{\mu}\) is obtained by fitting one model using covariates \((X,A)\).}\\

\textbf{Cox Model}. The Cox proportional hazards
model \citep{Cox1972} relies on a semi-parametric model of the
conditional hazard functions and specifies the function under treatment $a$,  \(\lambda^{(a)}(t|X)\), as 
\[
\lambda^{(a)}(t|X) = \lambda_0^{(a)}(t) \exp(X^\top\beta^{(a)}),
\] 
where \(\lambda^{(a)}_0\) is a baseline hazard function under treatment $a$ and
\(\beta^{(a)}\) has the same dimension as the vector of covariate \(X\). The model is semi-parametric in the sense that the baseline hazard is left unspecified, while the effect of covariates enters through a multiplicative proportional hazards structure. \textcolor{black}{Alternative semi-parametric models include Accelerated Failure Time (AFT) \citep{saikia2017} which directly models the survival time and does not rely on the proportional hazards (PH) assumption.}\\

\textbf{Weibull Model}. A popular parametric model for survival is the
Weibull Model, which amounts to assume that $\lambda^{(a)}(t|X) =  t^{\gamma (a)} \exp(X^\top\beta)$ where \(\gamma(a) >0\). We refer to \cite{zhang2016} for a study of this model.\\

\textbf{Survival Forests}. On the nonparametric side, survival forests \citep{ishwaran2008} aggregate survival trees that use log-rank splitting and terminal-node Nelson–Aalen estimates to learn \(S(t\mid X)\) under right censoring, capturing nonlinearities and high-order interactions without parametric form.

\subsubsection{Censoring-unbiased transformations (CUT) }\label{sec-unbiasedtransfo}
Censoring-unbiased transformations involve applying a transformation to
\(T\). Specifically, we compute a new time \(T^*\) of the form
\begin{equation}\phantomsection\label{eq-defcut}{
T^* := T^*(\widetilde T,X,A,\Delta) = \begin{cases}
\phi_0(\widetilde T \wedge \tau,X,A) \quad &\text{if} \quad \Delta^\tau = 0, \\
\phi_1(\widetilde T \wedge \tau,X,A) \quad &\text{if} \quad \Delta^\tau = 1.
\end{cases}
}\end{equation} 
for two wisely chosen transformations \(\phi_0\) and
\(\phi_1\), and where $\Delta^{\tau}:=\mathbb{I}\{T \wedge \tau \leqslant C\} = \Delta+(1-\Delta)\mathbb{I}(\widetilde T \geqslant\tau)$ indicates whether an individual is either uncensored or censored after time \(\tau\). Since we only need the expectation of survival time thresholded at \(\tau\), any censoring beyond \(\tau\) can be treated as uncensored
 (\(\Delta^\tau = 1\)).
A censoring-unbiased transformation \(T^*\) is defined such that, for any \(a \in \{0,1\}\),
\begin{equation}\phantomsection\label{eq-cut}{
\mathbb{E}[T^*|A=a,X] = \mathbb{E}[T^ {(a)} \wedge \tau |X] \quad\text{almost surely.}
}\end{equation} A notable advantage of this approach is that it enables
the use of the full transformed dataset \((X_i,A_i,T^*_i)\) as if no
censoring occurred. 
Moreover, by the reweighting identity
\begin{equation}\phantomsection\label{eq-cutcond}{
\mathbb{E}[\mathbb{E}[T^*|A=a,X]] = \mathbb{E}\left[\frac{\mathbb{I}\{A=a\}}{\mathbb{P}(A=a)} T^*\right],
}\end{equation} this equality links the arm-specific mean of \(T^*\) to its full-sample inverse-probability–weighted representation. The RMST difference can be estimated by \begin{equation}\phantomsection\label{eq-cutest}{
\widehat\theta = \frac1n\sum_{i=1}^n \left(\frac{A_i}{\pi}-\frac{1-A_i}{1-\pi} \right) T^*_i,
}\end{equation} where \(\pi = \mathbb{P}(A=1) \in (0,1)\) by Assumption
\ref{ass:trialpositivity} and where
\(T^*_i = T^*(\widetilde T_i,X_i,A_i,\Delta_i)\). 

The following proposition establishes the theoretical guarantees for the class of estimators constructed from bounded censoring-unbiased transformations.

\begin{proposition}[Consistency of CUT as RMST estimator with known $\pi$]\protect\hypertarget{prp-cutest}{}\label{prp-cutest}
Under Assumptions \ref{ass:rta} (\nameref{ass:rta}) and \ref{ass:trialpositivity} (\nameref{ass:trialpositivity}), the estimator
\(\widehat\theta\) derived as in Equation~\ref{eq-cutest} from square
integrable censoring-unbiased transformations satisfying
Equation~\ref{eq-cut} is an unbiased, strongly consistent, and
asymptotically normal estimator of the RMST difference.

\end{proposition}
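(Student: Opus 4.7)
The plan is to show that each summand in the estimator is an i.i.d.\ draw of a random variable with mean $\theta_{\mathrm{RMST}}$ and finite variance; once this is established, unbiasedness is immediate and both strong consistency and asymptotic normality follow from the classical SLLN and CLT for i.i.d.\ sequences.

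The core step, and the only one with real content, is the computation of $\mathbb{E}\!\left[\bigl(\tfrac{A}{\pi}-\tfrac{1-A}{1-\pi}\bigr)T^*\right]$. I would treat the two arms separately. For the treated arm, I would write
\[
\mathbb{E}\!\left[\tfrac{A}{\pi}T^*\right]
\;=\; \tfrac{1}{\pi}\,\mathbb{E}\!\bigl[\mathbb{E}[A\, T^*\mid X]\bigr],
\]
then invoke Assumption~\ref{ass:rta} (\nameref{ass:rta}) to get $\mathbb{E}[A\,T^*\mid X]=\mathbb{P}(A=1)\,\mathbb{E}[T^*\mid A=1,X]=\pi\,\mathbb{E}[T^*\mid A=1,X]$. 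Plugging in and applying the defining CUT property (Equation~\ref{eq-cut}) yields $\mathbb{E}\bigl[\tfrac{A}{\pi}T^*\bigr]=\mathbb{E}[T^{(1)}\wedge\tau]$. The symmetric argument, using Assumption~\ref{ass:trialpositivity} (\nameref{ass:trialpositivity}) to ensure $1-\pi>0$, gives $\mathbb{E}\bigl[\tfrac{1-A}{1-\pi}T^*\bigr]=\mathbb{E}[T^{(0)}\wedge\tau]$. Subtracting delivers unbiasedness of the per-observation summand, hence $\mathbb{E}[\widehat\theta]=\theta_{\mathrm{RMST}}$ exactly (this essentially re-proves Equation~\ref{eq-cutcond} in the present setting).

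For the asymptotic statements, set $Z_i=\bigl(\tfrac{A_i}{\pi}-\tfrac{1-A_i}{1-\pi}\bigr)T^*_i$. The weight is bounded by $\max(1/\pi,1/(1-\pi))$ by Assumption~\ref{ass:trialpositivity} (\nameref{ass:trialpositivity}), and $T^*$ is square integrable by hypothesis; consequently $Z_i\in L^2$. The $Z_i$ are i.i.d.\ because $(X_i,A_i,\widetilde T_i,\Delta_i)$ are. The strong law of large numbers then gives $\widehat\theta\to\theta_{\mathrm{RMST}}$ almost surely, and the central limit theorem gives
\[
\sqrt n\,\bigl(\widehat\theta-\theta_{\mathrm{RMST}}\bigr)\;\xrightarrow{\;d\;}\;\mathcal N\bigl(0,\Var(Z_1)\bigr).
\]

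There is no real obstacle here; the only subtlety worth flagging is that Assumption~\ref{ass:rta} (\nameref{ass:rta}) is genuinely needed to push the indicator $\mathbb{I}\{A=a\}$ through the conditional expectation given $X$, even though $T^*$ already incorporates a censoring-unbiased transformation. In the observational sections this step will no longer work with a fixed $\pi$ and must be replaced by weighting with the propensity score $e(X)$, which foreshadows the IPTW-type estimators discussed later.
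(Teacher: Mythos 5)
Your proof is correct and follows the same route the paper takes (implicitly: the paper states this proposition without a written-out proof, relying on the reweighting identity in Equation~\ref{eq-cutcond} together with the SLLN and CLT applied to the i.i.d.\ summands). Your computation of the summand's mean via Assumption~\ref{ass:rta} and the CUT property of Equation~\ref{eq-cut}, plus the $L^2$ bound obtained from Assumption~\ref{ass:trialpositivity} and square integrability of $T^*$, fill in exactly the steps the paper leaves to the reader.
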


Square integrability will be ensured any time the transformation is
bounded, which will always be the case of the ones considered in this
work. 

It is natural in a RCT setting to assume that probability of being
treated \(\pi\) is known. If not, it is usual to replace \(\pi\) by its
empirical counterpart \(\widehat\pi = n_1/n\) where
\(n_a = \sum_{i} \mathbbm{1}\{A=a\}\). The resulting estimator takes the
form \begin{equation}\phantomsection\label{eq-cutestnopi}{
\widehat\theta = \frac1{n_1}\sum_{A_i = 1}  T^*_i - \frac1{n_0}\sum_{A_i = 0}  T^*_i.
}\end{equation} While this estimator introduces a slight bias due to the estimation of \(\pi\) (see for instance \cite{colnet2022}, Lemma 2), it remains strongly consistent and asymptotically normal under the same assumptions:

\begin{proposition}[Consistency of CUT as RMST estimator with estimated $\hat{\pi}$]\protect\hypertarget{prp-cutestnopi}{}\label{prp-cutestnopi}

Under Assumptions \ref{ass:rta} (\nameref{ass:rta}) and \ref{ass:trialpositivity} (\nameref{ass:trialpositivity}), the estimator
\(\widehat\theta\) derived as in Equation~\ref{eq-cutestnopi} from a
square integrable censoring-unbiased transformations satisfying
Equation~\ref{eq-cut} is a strongly consistent, and asymptotically
normal estimator of the RMST difference.

\end{proposition}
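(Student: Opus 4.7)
The plan is to rewrite $\widehat\theta$ as a smooth function of empirical means of i.i.d. square-integrable vectors, then apply the strong law of large numbers together with the multivariate central limit theorem and the Delta method.

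I would begin by expressing each arm-specific average as a ratio,
\[
\frac{1}{n_a}\sum_{i:\,A_i=a}T^*_i \;=\; \frac{\tfrac{1}{n}\sum_{i=1}^{n}\mathbb{I}\{A_i=a\}\,T^*_i}{\tfrac{1}{n}\sum_{i=1}^{n}\mathbb{I}\{A_i=a\}},
\]
so that $\widehat\theta = f(U_n,W_n,V_n,Z_n)$ with $f(u,w,v,z) = u/w - v/z$, where $(U_n,W_n,V_n,Z_n)$ denotes the empirical mean of the i.i.d. vectors $\big(A_iT^*_i,\,A_i,\,(1-A_i)T^*_i,\,1-A_i\big)$. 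Combining the censoring-unbiased identity~(\ref{eq-cut}) with Assumption~\ref{ass:rta} (which in particular renders $A$ independent of $X$) yields $\mathbb{E}[A T^*] = \pi\,\mathbb{E}[T^{(1)}\wedge\tau]$ and $\mathbb{E}[(1-A)T^*] = (1-\pi)\,\mathbb{E}[T^{(0)}\wedge\tau]$, while Assumption~\ref{ass:trialpositivity} ensures $\mathbb{E}[A] = \pi \in (0,1)$, so that the limit of $(U_n,W_n,V_n,Z_n)$ lies away from the singular set of $f$.

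Strong consistency is then immediate: componentwise SLLN yields almost-sure convergence of $(U_n,W_n,V_n,Z_n)$ to its mean, and since the third and fourth coordinates converge to $\pi$ and $1-\pi$ which are strictly positive, the continuous mapping theorem gives $\widehat\theta \to \theta_{\mathrm{RMST}}$ almost surely. For the limiting distribution, square-integrability of $T^*$ ensures a finite covariance matrix $\Sigma$ for the underlying vector, the multivariate CLT delivers a joint Gaussian limit at rate $\sqrt{n}$, and the Delta method (applicable since $f$ is $\mathcal{C}^1$ at the limit point) concludes that $\sqrt{n}(\widehat\theta - \theta_{\mathrm{RMST}})$ is asymptotically Gaussian with variance $\nabla f^{\top}\Sigma\,\nabla f$ evaluated at that limit. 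No substantial obstacle arises: this is a routine SLLN/CLT/Delta-method pattern, and the only items needing care are the strict positivity of the limiting denominators (from trial positivity) and the square-integrability required for the CLT (from the assumption on $T^*$). An alternative route would start from Proposition~\ref{prp-cutest} and expand $\widehat\theta$ around its known-$\pi$ counterpart, but since $\widehat\pi - \pi = O_{\mathbb{P}}(n^{-1/2})$ the correction is of the same order as the estimator itself, so the variance would still need to be recomputed, making the Delta-method treatment more direct.
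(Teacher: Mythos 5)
Your proposal is correct. The paper does not actually write out a proof of Proposition~\ref{prp-cutestnopi} in the appendix: it only remarks in the main text that replacing $\pi$ by $\widehat\pi=n_1/n$ introduces a small bias (citing Lemma~2 of \cite{colnet2022}) while preserving strong consistency and asymptotic normality, and its other consistency proofs (e.g.\ Propositions~\ref{prp-ipcwkm} and \ref{prp-iptwkm}) are handled exactly in the spirit you use, namely the law of large numbers plus the $\delta$-method. Your argument supplies the missing details cleanly: the ratio-of-empirical-means representation $f(U_n,W_n,V_n,Z_n)=U_n/W_n-V_n/Z_n$, the identification $\mathbb{E}[AT^*]=\pi\,\mathbb{E}[T^{(1)}\wedge\tau]$ (which correctly requires both Equation~\ref{eq-cut} and the fact that Assumption~\ref{ass:rta} makes $A$ independent of $X$, so that conditioning on $A=1$ does not change the law of $X$), and Assumption~\ref{ass:trialpositivity} to keep the limiting denominators away from zero. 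The only cosmetic point worth flagging is that the covariance matrix $\Sigma$ of $(AT^*,A,(1-A)T^*,1-A)$ is singular (since the second and fourth coordinates sum to one), which is harmless for the $\delta$-method but could be avoided by dropping the redundant coordinate.
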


For the later, we consider only estimator with estimated probability of being treated in Equation~\ref{eq-cutestnopi}.

The two most popular transformations are
Inverse-Probability-of-Censoring Weighting \citep{koul1981} and Buckley-James \citep{buckley1979}, both illustrated
in Figure~\ref{fig-trans} and detailed below. 
\begin{figure}[h]
\centering{
\includegraphics[width=0.9\textwidth]{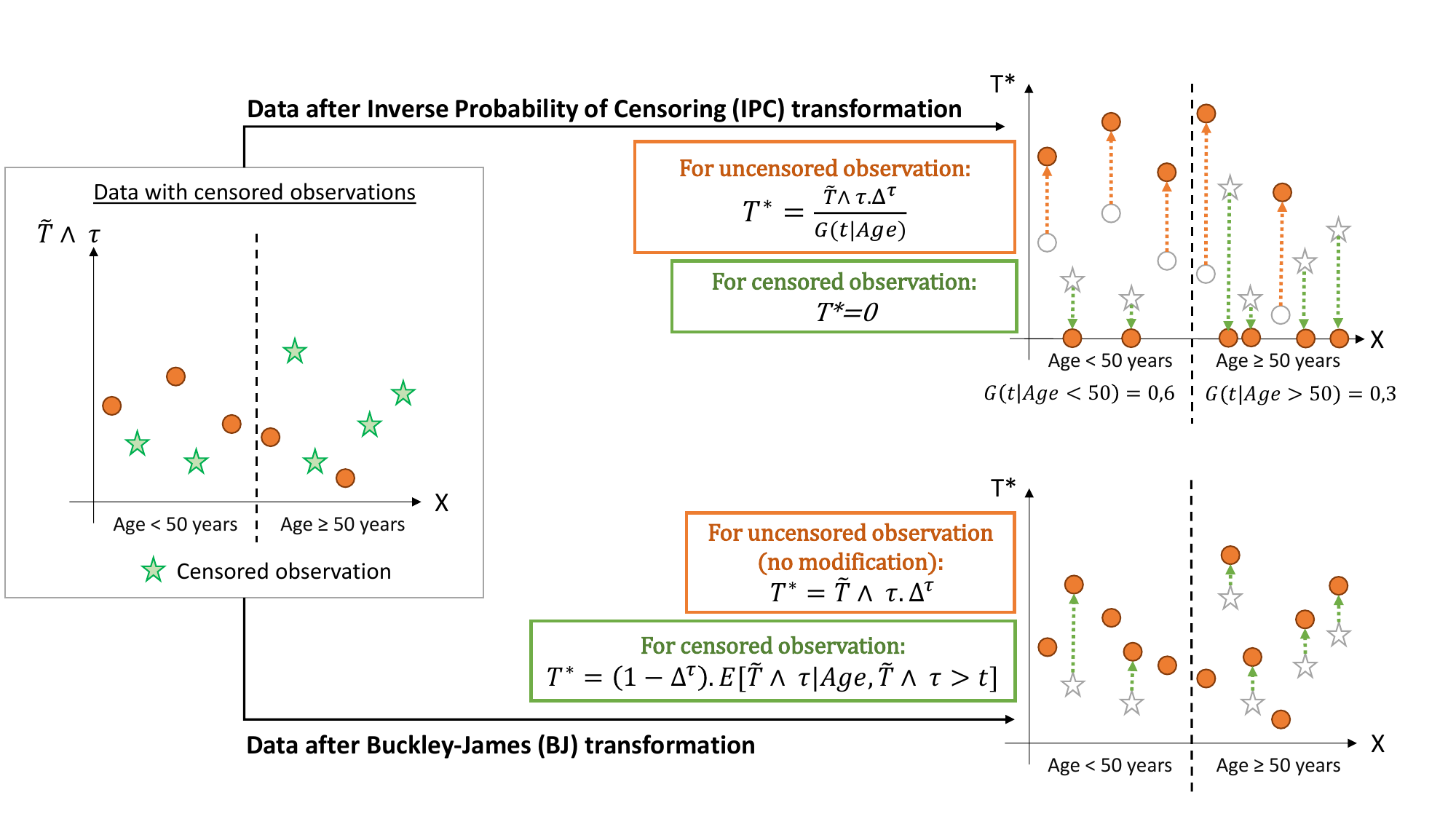}
}
\caption{\label{fig-trans}Illustration of Inverse-Probability-of-Censoring and Buckley-James transformations.}
\end{figure}%
In the former, only non-censored observations are considered and they are weighted while in
the latter, censored observations are imputed with an estimated survival
time.\\

\textbf{The Inverse-Probability-of-Censoring Weighted transformation}

The Inverse-Probability-of-Censoring Weighted (IPCW) transformation,
introduced by \cite{koul1981} in the context of
censored linear regression, involves discarding censored observations
and applying weights to uncensored data (see Figure~\ref{fig-trans}). More precisely, we let
\begin{equation}\phantomsection\label{eq-defipcw}{
T^*_{\mathrm{IPCW}}=\frac{\Delta^\tau}{G(\widetilde{T}\wedge \tau|X,A)} \widetilde{T} \wedge \tau,
}\end{equation} where we recall that
\(G(t|X,A) :=\mathbb{P}(C \geqslant t|X,A)\) is the left limit of the
conditional survival function of the censoring. This estimator assigns
higher weights to uncensored subjects within a covariate group that is
highly prone to censoring, thereby correcting for conditionally
independent censoring and removing selection bias \citep{howe2016}.

\textcolor{black}{Using proposition~\ref{prp-cutest} and \ref{prp-cutestnopi}, we show in Appendix that the IPCW transformation is a censoring-unbiased transformation in the sense of Equation~\ref{eq-cut} (Proposition~\ref{prp-ipcw}).} 

The IPCW depends on the unknown conditional survival function of the
censoring \(G(\cdot|X,A)\), which thus needs to be estimated. Estimating
conditional censoring or the conditional survival function can be
approached similarly, as both involve estimating a time, whether for
survival or censoring. Consequently, we can use parametric methods, or semi-parametric methods,
such as the Cox model, or non-parametric approaches like survival
forests (see Section~\ref{sec-Gformula}). Once \(\widehat G(\cdot|A,X)\) is provided, one can construct an estimator of the difference of
RMST based on Equation~\ref{eq-cutestnopi},
\begin{equation}\phantomsection\label{eq-ipcwdefnopi}{
\widehat\theta_{\mathrm{IPCW}} = \frac1{n_1}\sum_{A_i = 1}  T^*_{\mathrm{IPCW},i} - \frac1{n_0}\sum_{A_i = 0}  T^*_{\mathrm{IPCW},i}.
}\end{equation} 

\textcolor{black}{Alternatively, instead of rescaling by $n_a$, the estimator can be stabilized by dividing by the sum of the weights, yielding a Hájek-type normalization that reduces variance without affecting consistency \citep{Hajek1971,Robins2000,Hernan2020}.}

By Proposition~\ref{prp-cutest} (\nameref{prp-cutest}), Proposition~\ref{prp-cutestnopi} (\nameref{prp-cutestnopi}) and
Proposition~\ref{prp-ipcw}  (\nameref{prp-ipcw}) (in Appendix~\ref{sec-proof22}), we easily deduce that
\(\widehat\theta_{\mathrm{IPCW}}\) is asymptotically consistent as soon
as \(\widehat G\) is.

\begin{corollary}[]\protect\hypertarget{cor-ipcwcons}{}\label{cor-ipcwcons}

Under
Assumptions~\ref{ass:sutva} (\nameref{ass:sutva}), \ref{ass:rta} (\nameref{ass:rta}), \ref{ass:trialpositivity} (\nameref{ass:trialpositivity}), \ref{ass:condindepcensoring} (\nameref{ass:condindepcensoring})
and \ref{ass:positivitycensoring} (\nameref{ass:positivitycensoring}), if \(\widehat G\) is uniformly weakly
(resp. strongly) consistent then so is
\(\widehat\theta_{\mathrm{IPCW}}\), defined in Equation~\ref{eq-ipcwdefnopi}.
\end{corollary}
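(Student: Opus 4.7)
The plan is to decompose $\widehat\theta_{\mathrm{IPCW}}$ into an \emph{oracle} IPCW estimator $\widetilde\theta_{\mathrm{IPCW}}$, obtained by substituting the true conditional censoring survival $G$ for $\widehat G$, and a plug-in residual. The oracle will be handled by the previous propositions, and the residual will be controlled by the uniform consistency of $\widehat G$ together with the positivity Assumption~\ref{ass:positivitycensoring}.

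For the oracle piece, Proposition~\ref{prp-ipcw} ensures, under Assumptions~\ref{ass:sutva}, \ref{ass:condindepcensoring} and \ref{ass:positivitycensoring}, that the true-$G$ IPCW transformation is censoring-unbiased in the sense of Equation~\ref{eq-cut}. Assumption~\ref{ass:positivitycensoring} also yields the deterministic bound $T^*_{\mathrm{IPCW}} \leq \tau/\varepsilon$, so the transformation is bounded and hence square-integrable. Proposition~\ref{prp-cutestnopi} then delivers strong consistency of $\widetilde\theta_{\mathrm{IPCW}}$ for $\theta_{\mathrm{RMST}}$.

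For the residual, I would use the identity $\tfrac{1}{\widehat G} - \tfrac{1}{G} = \tfrac{G - \widehat G}{\widehat G\, G}$ together with $\Delta_i^\tau (\widetilde T_i \wedge \tau) \leq \tau$ to obtain
\[
\bigl|\widehat\theta_{\mathrm{IPCW}} - \widetilde\theta_{\mathrm{IPCW}}\bigr| \;\leq\; 2\tau \sup_{t\in[0,\tau],\,x,a} \frac{|\widehat G(t|x,a) - G(t|x,a)|}{\widehat G(t|x,a)\, G(t|x,a)}.
\]
Positivity gives $G \geq \varepsilon$ on $[0,\tau]$, and uniform (weak or strong) consistency of $\widehat G$ forces $\widehat G \geq \varepsilon/2$ uniformly, eventually, in the same mode of convergence. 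The right-hand side is then dominated by $(4\tau/\varepsilon^2)\|\widehat G - G\|_\infty$, which vanishes by hypothesis. The triangle inequality transfers the consistency from the oracle to $\widehat\theta_{\mathrm{IPCW}}$.

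The delicate point is the control of the reciprocal $1/\widehat G$: without a uniform lower bound on the estimated denominator, even localized errors in $\widehat G$ could inflate the plug-in residual arbitrarily. This is precisely why the uniform nature of the consistency hypothesis, combined with Assumption~\ref{ass:positivitycensoring}, is essential, and why the argument must proceed through the oracle rather than attacking $\widehat\theta_{\mathrm{IPCW}}$ directly.
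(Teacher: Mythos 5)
Your proof is correct and takes essentially the same route as the paper, which justifies the corollary in one line by noting that $\widehat\theta_{\mathrm{IPCW}}$ depends continuously on $\widehat G$ and that $G$ is lower-bounded by Assumption~\ref{ass:positivitycensoring}, with the oracle handled by Propositions~\ref{prp-cutestnopi} and~\ref{prp-ipcw}. Your oracle-plus-residual decomposition simply makes that continuity argument quantitative, and the attention you draw to the uniform lower bound on $\widehat G$ is exactly the point the paper's remark is gesturing at.
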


This result simply comes from the fact that
\(\widehat\theta_{\mathrm{IPCW}}\) depends continuously on
\(\widehat G\) and that \(G\) is lower-bounded (Assumption~\ref{ass:positivitycensoring} \nameref{ass:positivitycensoring}). Surprisingly, we found limited use of this estimator in the literature, beside its first introduction in \cite{koul1981}.

A more common variant is the IPCW–Kaplan–Meier \citep{IPCWrobins, IPCW_consistency, IPCW}, defined as follows. In this approach, censoring weights are dynamically evaluated at each distinct event time, allowing the estimator to update risk sets and weighted failures throughout the follow-up period.

\begin{definition}[]\protect\hypertarget{def-ipcwkm}{}\label{def-ipcwkm}
(IPCW-Kaplan-Meier) We let again \(\widehat G(\cdot|X,A)\) be an
estimator of the (left limit of) the conditional censoring survival
function and we introduce

\begin{align*}
D_k^{\mathrm{IPCW}}(a) &:= \sum_{i=1}^n \frac{\Delta_i^\tau}{\widehat G(t_k| X_i,A=a)} \mathbb{I}(\widetilde T_i = t_k, A_i=a) \\
\quad\text{and}\quad N^{\mathrm{IPCW}}_k(a) &:= \sum_{i=1}^n \frac{1}{\widehat G(t_k  | X_i,A=a)} \mathbb{I}(\widetilde T_i \geqslant t_k, A_i=a),
\end{align*}
be the weight-corrected numbers of deaths and of individuals at risk at
time \(t_k\). The IPCW version of the KM estimator is defined as:
\[
\begin{aligned}
\widehat{S}_{\mathrm{IPCW}}(t | A=a) &= \prod_{t_k \leqslant t}\left(1-\frac{D_k^{\mathrm{IPCW}}(a)}{N_k^{\mathrm{IPCW}}(a)}\right). 
\end{aligned}
\]
\end{definition}
Note that the quantity \(\pi\) is not present in the definition of
\(D_k^{\mathrm{IPCW}}(a)\) and \(N_k^{\mathrm{IPCW}}(a)\) because it
would simply disappear in the ratio
\(D_k^{\mathrm{IPCW}}(a)/N_k^{\mathrm{IPCW}}(a)\). The subsequent RMST
estimator then takes the form
\begin{equation}\phantomsection\label{eq-thetaIPCWKM}{
\widehat{\theta}_{\mathrm{IPCW-KM}} = \int_{0}^{\tau}\widehat{S}_{\mathrm{IPCW}}(t|A=1)-\widehat{S}_{\mathrm{IPCW}}(t|A=0)dt.
}\end{equation} Like before for the classical KM estimator, this new
reweighted KM estimator enjoys good statistical properties.

\textcolor{black}{We show in Proposition~\ref{prp-ipcwkm} (Appendix~\ref{appendix-a-proofs}) that the oracle version of this estimator is strongly consistent and asymptotically normal. These properties also hold for the estimator using a consistently estimated $\widehat{G}$.}

\begin{corollary}[]\protect\hypertarget{cor-ipcwkmcons}{}\label{cor-ipcwkmcons}

Under Assumptions~\ref{ass:sutva} (\nameref{ass:sutva}), \ref{ass:rta} (\nameref{ass:rta}), \ref{ass:condindepcensoring} (\nameref{ass:condindepcensoring})
and \ref{ass:positivitycensoring} (\nameref{ass:positivitycensoring}), if \(\widehat G\) is uniformly weakly
(resp. strongly) consistent then so is
\(\widehat S_{\mathrm{IPCW}}(t|A=a)\).

\end{corollary}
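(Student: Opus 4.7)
The plan is to reduce the problem to the oracle case. Let me fix $a \in \{0,1\}$ and denote by $\widetilde S_{\mathrm{IPCW}}(t|A=a)$ the estimator obtained by plugging the true $G$ in place of $\widehat G$ in Definition~\ref{def-ipcwkm}. By Proposition~\ref{prp-ipcwkm}, $\widetilde S_{\mathrm{IPCW}}(t|A=a)$ converges uniformly (in the relevant mode of convergence) to $S^{(a)}(t)$ on $[0,\tau]$. It therefore suffices to control the plug-in error
$$
\sup_{t \in [0,\tau]}\bigl|\widehat S_{\mathrm{IPCW}}(t|A=a) - \widetilde S_{\mathrm{IPCW}}(t|A=a)\bigr| \xrightarrow[n\to\infty]{} 0,
$$
in probability (resp. almost surely), and then conclude by the triangle inequality.

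For the plug-in error, I would first leverage Assumption~\ref{ass:positivitycensoring}, which provides a uniform lower bound $G(\cdot|X,A)\geqslant \varepsilon$. Combined with the uniform consistency of $\widehat G$, this guarantees that, on an event of probability tending to one (resp. almost surely for large $n$), $\widehat G(\cdot|X,A) \geqslant \varepsilon/2$. Hence the map $g \mapsto 1/g$ is Lipschitz on the range of $\widehat G$ and $G$, and so
$$
\sup_{t\in[0,\tau]}\sup_{i}\Bigl|\tfrac{1}{\widehat G(t|X_i,A=a)} - \tfrac{1}{G(t|X_i,A=a)}\Bigr| = o_{\bbP}(1) \quad (\text{resp. } o_{\text{a.s.}}(1)).
$$
From this, the weighted sums $D^{\mathrm{IPCW}}_k(a)/n$ and $N^{\mathrm{IPCW}}_k(a)/n$ computed with $\widehat G$ differ from their oracle analogues by $o_{\bbP}(1)$ uniformly in $k$.

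Next, I would transfer this uniform perturbation of the hazard increments to the product. Using the elementary inequality $\bigl|\prod_{k}(1-a_k) - \prod_{k}(1-b_k)\bigr| \leqslant \sum_k |a_k - b_k|$ valid for $a_k,b_k \in [0,1]$, together with the fact that on the ``good'' event denominators stay bounded below by a positive constant (so increments $D_k^{\mathrm{IPCW}}/N_k^{\mathrm{IPCW}}$ lie in $[0,1]$ for both the oracle and plug-in versions), the product-form error is bounded by a sum of differences of ratios, each of which is $o_{\bbP}(1)$ uniformly. Since the number of distinct event times $t_k \leqslant \tau$ is at most $n$, I would need to argue that only the jumps with non-negligible $N_k^{\mathrm{IPCW}}$ matter, which is standard and again follows from the uniform lower bound on $G$ together with the empirical law of $\tilde T$ staying away from pathological regions before $\tau$.

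The main obstacle is this last uniform control over the product: keeping the propagation of the $o_{\bbP}(1)$ weight errors from blowing up when multiplied across potentially $O(n)$ jump times. The standard workaround is the logarithmic representation $\log \widehat S_{\mathrm{IPCW}}(t|A=a) = \sum_{t_k \leqslant t} \log(1 - D^{\mathrm{IPCW}}_k/N^{\mathrm{IPCW}}_k)$, combined with the functional delta method for the Nelson--Aalen--type cumulative hazard, so that uniform consistency on $[0,\tau]$ is obtained as a consequence of the uniform consistency of the underlying weighted counting processes. Conclusion then follows by combining this plug-in bound with the oracle convergence from Proposition~\ref{prp-ipcwkm}.
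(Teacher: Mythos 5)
Your proposal is correct and follows essentially the same route as the paper, which treats this corollary as an immediate consequence of the oracle result (Proposition~\ref{prp-ipcwkm}) together with the continuity of the estimator in $\widehat G$ and the lower bound $G(\cdot|X,A)\geqslant\varepsilon$ from Assumption~\ref{ass:positivitycensoring}; your argument simply makes that continuity step explicit. The one obstacle you flag --- propagating the $o_{\mathbb{P}}(1)$ weight errors across potentially $O(n)$ jump times --- does not actually arise here, because the paper's Kaplan--Meier product is taken over a fixed discrete grid $\{t_1,\dots,t_K\}$, so the product has a bounded number of factors and the elementary inequality you cite already closes the argument without any appeal to the functional delta method.
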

Because the evaluation of \(N_k^{\textrm{IPCW}}(a)\) now depends on information gathered after time \(t_k\), the previous derivations of bias (Appendix~\ref{sec-proof21}, proposition~\ref{prp-km}) and asymptotic variance (Appendix~\ref{sec-proof21}, proposition~\ref{prp-varkm}) no longer hold. \textcolor{black}{\citet{Overgaard2025} derived asymptotic variance expressions for related Kaplan–Meier–based inverse probability-of-censoring weighted estimators using influence-function representations. While these results theoretically extend to our IPCW–Kaplan–Meier formulation, the resulting expressions remain difficult to interpret or implement in practice. Based on our own results, we can nevertheless establish that when \(G\) is unknown, consistency still holds provided that the estimator \(\widehat G(\cdot|A,X)\) is itself consistent.}

\textbf{The Buckley-James transformation}

The Buckley-James (BJ) transformation, introduced by \cite{buckley1979}, takes a different path by leaving all uncensored values
untouched, while replacing the censored ones by an extrapolated value (see Figure~\ref{fig-trans}).
Formally, it is defined as follows:
\begin{equation}\phantomsection\label{eq-defbj}{
\begin{aligned}
T^*_{\mathrm{BJ}} &= \Delta^\tau (\widetilde{T}\wedge\tau) + (1-\Delta^\tau) Q_S(\widetilde T \wedge \tau|X,A),
\end{aligned}
}\end{equation}

where, for \(t \leqslant\tau\),
\[Q_S(t|X,A) :=\mathbb{E}[T \wedge \tau | X,A,T \wedge \tau > t]= \frac{1}{S(t|X,A)}\int_{t}^{\tau} S(u|X,A) \mathop{}\!\mathrm{d}u\]
and \(S(t|X,A=a) := \mathbb{P}(T^ {(a)} > t|X)\) is the conditional
survival function.

\textcolor{black}{This transformation satisfies the censoring-unbiasedness condition defined in Equation~\ref{eq-cut} (Proposition~\ref{prp-bj} in Appendix~\ref{appendix-a-proofs}).}
The BJ transformation depends on a
nuisance parameter (\(Q_S(\cdot|X,A)\)) that needs to be estimated. Estimation strategies described in Section~\ref{sec-Gformula} can be used. Once provided with an estimator \(\widehat Q_S(\cdot|A,X)\), a natural estimator of the
RMST based on the BJ transformation and Equation~\ref{eq-cutestnopi} is \begin{equation}\phantomsection\label{eq-BJnopi}{
\widehat\theta_{\mathrm{BJ}} = \frac1{n_1} \sum_{A_i=1}  T^*_{\mathrm{BJ},i}-\frac1{n_0} \sum_{A_i=0}  T^*_{\mathrm{BJ},i}. 
}\end{equation}

Like for the IPCW transformation, the BJ transformation
yields a consistent estimate of the RMST as soon as the model is
well-specified.

\begin{corollary}[]\protect\hypertarget{cor-bjcons}{}\label{cor-bjcons}

Under Assumptions~\ref{ass:sutva} (\nameref{ass:sutva}), \ref{ass:rta} (\nameref{ass:rta}), \ref{ass:condindepcensoring} (\nameref{ass:condindepcensoring})
and \ref{ass:positivitycensoring} (\nameref{ass:positivitycensoring}), if \(\widehat Q_S\) is uniformly weakly
(resp. strongly) consistent then so is \(\widehat\theta_{\mathrm{BJ}}\)
defined as in Equation~\ref{eq-BJnopi}.

\end{corollary}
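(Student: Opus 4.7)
The plan is a standard \emph{oracle + plug-in error} decomposition. Let $\wh\theta^{\mathrm{or}}_{\mathrm{BJ}}$ denote the oracle variant of $\wh\theta_{\mathrm{BJ}}$ in which $\wh Q_S$ has been replaced by the true nuisance $Q_S$ in Equation~\ref{eq-defbj}. I would then split
$$\wh\theta_{\mathrm{BJ}} - \theta_{\mathrm{RMST}} = \bigl(\wh\theta_{\mathrm{BJ}} - \wh\theta^{\mathrm{or}}_{\mathrm{BJ}}\bigr) + \bigl(\wh\theta^{\mathrm{or}}_{\mathrm{BJ}} - \theta_{\mathrm{RMST}}\bigr)$$
and handle the two pieces separately.

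For the oracle piece, Proposition~\ref{prp-bj} (stated in the appendix) already establishes that the BJ transformation built with the true $Q_S$ is censoring-unbiased in the sense of Equation~\ref{eq-cut}. Since the resulting transformed outcome takes values in $[0,\tau]$, it is in particular square-integrable, and Proposition~\ref{prp-cutestnopi} directly yields the strong consistency (and, as a bonus, the asymptotic normality) of $\wh\theta^{\mathrm{or}}_{\mathrm{BJ}}$ towards $\theta_{\mathrm{RMST}}$.

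For the plug-in error, only the censored observations ($\Delta_i^\tau = 0$) contribute, so I may write
$$\wh\theta_{\mathrm{BJ}} - \wh\theta^{\mathrm{or}}_{\mathrm{BJ}} = \frac{1}{n_1}\sum_{A_i=1}(1-\Delta_i^\tau)\bigl(\wh Q_S - Q_S\bigr)(\wt T_i\wedge\tau \mid X_i, A_i) - \frac{1}{n_0}\sum_{A_i=0}(1-\Delta_i^\tau)\bigl(\wh Q_S - Q_S\bigr)(\wt T_i\wedge\tau \mid X_i, A_i),$$
whose absolute value is bounded deterministically by $2\,\sup_{t\in[0,\tau],\,x\in\cX,\,a\in\{0,1\}}|\wh Q_S(t|x,a) - Q_S(t|x,a)|$. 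The uniform (weak or strong) consistency hypothesis on $\wh Q_S$ makes this upper bound vanish in probability (resp.\ almost surely). Since $n_1, n_0 \to \infty$ almost surely by \nameref{ass:trialpositivity} and the strong law of large numbers, the arm-specific averages are well-defined eventually, and a triangle inequality closes the proof.

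The single subtlety---and precisely the reason the assumption on $\wh Q_S$ is phrased \emph{uniformly}---is that the nuisance is evaluated at the random, data-dependent points $(X_i, A_i, \wt T_i\wedge\tau)$; a sup-norm control of $\wh Q_S - Q_S$ sidesteps any joint-randomness considerations without extra work, so I do not anticipate any further obstacle.
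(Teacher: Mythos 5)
Your proposal is correct and follows essentially the same route as the paper, which proves this corollary by combining Proposition~\ref{prp-bj} (BJ is a censoring-unbiased transformation) with Proposition~\ref{prp-cutestnopi} for the oracle estimator and a continuity/plug-in argument for the estimated nuisance; your sup-norm bound on the plug-in error is exactly the explicit form of that continuity step. Your remark that \nameref{ass:trialpositivity} is needed so that $n_0,n_1\to\infty$ is well taken, since the corollary as stated omits it even though Proposition~\ref{prp-cutestnopi} requires it.
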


The proof is an application of Propositions
\ref{prp-cutest} (\nameref{prp-cutest}), \ref{prp-cutestnopi} (\nameref{prp-cutestnopi}) and \ref{prp-bj} (\nameref{prp-bj} in Appendix, Section~\ref{sec-proof22}), and relies on
the continuity of \(S \mapsto Q_S\). \textcolor{black}{Additionally, the Buckley-James transformation enjoys a notable optimality property.}

\begin{theorem}[BJ minimizes the mean squared error among CUT]\protect\hypertarget{thm-bj}{}\label{thm-bj}

For any transformation \(T^*\) of the form \ref{eq-defcut}, it holds \[
\mathbb{E}[(T^*_{\mathrm{BJ}}-T \wedge \tau)^2] \leqslant\mathbb{E}[(T^*-T \wedge \tau)^2].
\]

\end{theorem}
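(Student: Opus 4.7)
The plan is to recognize $T^*_{\mathrm{BJ}}$ as the $L^2$-projection of $T \wedge \tau$ onto the $\sigma$-algebra $\mathcal{G} := \sigma(\widetilde T \wedge \tau, X, A, \Delta^\tau)$ generated by the variables that appear on the right-hand side of Equation~\ref{eq-defcut}. Since every transformation $T^*$ of that form is by construction $\mathcal{G}$-measurable and bounded by $\tau$ up to the choice of $\phi_0, \phi_1$, the statement reduces to showing that $T^*_{\mathrm{BJ}} = \mathbb{E}[T \wedge \tau \mid \mathcal{G}]$ almost surely, after which the minimality property of conditional expectation closes the argument.

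To establish this identification, I would split on the value of $\Delta^\tau$. On the uncensored event $\{\Delta^\tau = 1\}$ one has $T \wedge \tau = \widetilde T \wedge \tau$ deterministically, so $\mathbb{E}[T \wedge \tau \mid \mathcal{G}] = \widetilde T \wedge \tau$, which matches the uncensored branch of $T^*_{\mathrm{BJ}}$. On the censored event $\{\Delta^\tau = 0\}$ one has $\widetilde T = C < \tau$ and $T > C$; conditioning on $(X, A, C = t)$ together with $\{T > t\}$ and invoking Assumption~\ref{ass:condindepcensoring} to strip the conditioning on $C = t$ yields
$$\mathbb{E}[T \wedge \tau \mid X, A, C = t, T > t] \;=\; \mathbb{E}[T \wedge \tau \mid X, A, T > t] \;=\; Q_S(t \mid X, A),$$
which matches the censored branch of $T^*_{\mathrm{BJ}}$.

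To conclude, I would decompose $T^* - T \wedge \tau = (T^* - T^*_{\mathrm{BJ}}) + (T^*_{\mathrm{BJ}} - T \wedge \tau)$, square, and take expectations. The cross term vanishes because $T^* - T^*_{\mathrm{BJ}}$ is $\mathcal{G}$-measurable while $\mathbb{E}[T^*_{\mathrm{BJ}} - T \wedge \tau \mid \mathcal{G}] = 0$ by the previous step. The resulting Pythagorean identity
$$\mathbb{E}[(T^* - T \wedge \tau)^2] = \mathbb{E}[(T^* - T^*_{\mathrm{BJ}})^2] + \mathbb{E}[(T^*_{\mathrm{BJ}} - T \wedge \tau)^2]$$
immediately gives the desired inequality.

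The main obstacle is the censored-branch computation: one must carefully verify that on $\{\Delta^\tau = 0\}$ the information carried by $\mathcal{G}$ reduces to knowing $(X, A, C = \widetilde T)$ together with the event $\{T > C\}$, and that Assumption~\ref{ass:condindepcensoring} is genuinely what allows the conditioning on $C$ to be dropped inside the conditional expectation. Once this is done, the rest of the argument is a routine application of the $L^2$-projection property.
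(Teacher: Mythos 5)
Your proof is correct, and it reaches the same underlying fact as the paper --- that the Buckley--James transformation is the conditional expectation of \(T \wedge \tau\) given the observable data --- but it packages the argument differently. The paper never invokes the abstract projection property: it splits \(\mathbb{E}[(T^*-T\wedge\tau)^2]\) into the two censoring strata, observes that the \(\Delta^\tau=1\) term is nonnegative and vanishes for BJ, and then completes the square in \(\phi_0\) conditionally on \((C,A,X)\) to exhibit the pointwise minimizer on the censored stratum. Your version instead first identifies \(T^*_{\mathrm{BJ}} = \mathbb{E}[T\wedge\tau \mid \sigma(\widetilde T\wedge\tau, X, A, \Delta^\tau)]\) and then applies Pythagoras once, globally. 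What your route buys is conceptual clarity and the stronger quantitative conclusion \(\mathbb{E}[(T^*-T\wedge\tau)^2] - \mathbb{E}[(T^*_{\mathrm{BJ}}-T\wedge\tau)^2] = \mathbb{E}[(T^*-T^*_{\mathrm{BJ}})^2]\), which makes the uniqueness of the minimizer (up to null sets) immediate; the identification step on \(\{\Delta^\tau=0\}\) that you flag as the main obstacle is exactly the computation already carried out in the paper's proof of Proposition~\ref{prp-bj}, so nothing new is needed there. What the paper's route buys is that it treats \(\phi_1\) and \(\phi_0\) as separate variational problems with minimal machinery and no need to formalize the \(\sigma\)-algebra \(\mathcal{G}\). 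Two small points to make explicit in your write-up: the censored-branch identity genuinely requires Assumption~\ref{ass:condindepcensoring} (the paper's proof uses it implicitly in the same place, even though the theorem statement lists no hypotheses), and the Pythagorean step should be accompanied by the remark that if \(T^*\) fails to be square integrable the inequality holds trivially, so no integrability hypothesis is lost.
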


This result, originally stated without proof in \cite{fan1994}, is proved there for completeness in Appendix, Section~\ref{sec-proof22}.
\subsubsection{Augmented corrections}\label{sec-tdr}

The main disadvantage of the two previous transformations is that they
heavily rely on the specification of good estimator \(\widehat G\) (for
IPCW) or \(\widehat S\) (for BJ). In order to circumvent this
limitation, \cite{rubin2007} proposed the following
transformations, whose expression is based on theory of semi-parametric
estimation developed in \cite{vanderLaan2003},\\
\begin{equation}\phantomsection\label{eq-TDR}{
T^*_\mathrm{DR} = \frac{\Delta^\tau \widetilde T\wedge \tau}{G(\widetilde T \wedge \tau|X,A)} + \frac{(1-\Delta^\tau)Q_S(\widetilde T \wedge \tau |X,A)}{G(\widetilde T \wedge \tau |X,A)}- \int_0^{\widetilde T \wedge \tau} \frac{Q_S(t|X,A)}{G(t|X,A)^2} \mathop{}\!\mathrm{d}\mathbb{P}_C(t|X,A),
}\end{equation} where \(\mathop{}\!\mathrm{d}\mathbb{P}_C(t|X,A)\) is
the distribution of \(C\) conditional on the covariates \(X\) and \(A\).
As this transformation depends on both conditional survival functions \(G\) and \(S\), we will denote it \(T^*_\mathrm{DR}(G,S)\) to make this dependence explicit.
\textcolor{black}{Importantly, this transformation not only satisfies the censoring-unbiasedness condition of Equation~\ref{eq-cut}, but also enjoys a double robustness property: it remains valid as long as either \(G\) or \(S\) is correctly specified. The formal statement of this result is given in Proposition~\ref{prp-tdr} in Appendix~\ref{appendix-a-proofs}. The statement and proof of this results is found in \cite{rubin2007} in the case where \(C\) and \(T\) are continuous.}

\section{Causal survival analysis in observational studies}\label{sec-theoryOBS}

Unlike RCT, observational data (such as from registries, electronic
health records, or national healthcare systems) are collected without
controlled randomized treatment allocation. In such cases, treated and
control groups are likely unbalanced due to the non-randomized design,
which results in the treatment effect being confounded by variables
influencing both the survival outcome \(T\) and the treatment allocation
\(A\). To enable identifiability of the causal effect, additional
standard assumptions are needed.

\begin{assumption}
\phantomsection
\assumptionname{Conditional Exchangeability / Unconfoundedness}
\label{ass:unconf}
\textbf{(Conditional Exchangeability / Unconfoundedness)}
It holds
$A \perp\mkern-9.5mu\perp T^ {(0)}, T^ {(1)} \mid X$
\end{assumption}

Under Assumption~\ref{ass:unconf}, the treatment assignment is randomly
assigned conditionally on the covariates \(X\). This assumption states
that there are no unmeasured confounders as the latter would make it
impossible to distinguish correlation from causality.

\begin{assumption}
\phantomsection
\assumptionname{Positivity / Overlap for Treatment}
\label{ass:positivitytreat}
\textbf{(Positivity / Overlap for Treatment)}
Letting \(e(X) := \mathbb{P}(A=1\mid X)\) be the \emph{propensity score}, there holds $0 < e(X) < 1 \quad \text{almost surely.}$
\end{assumption}

The Assumption~\ref{ass:positivitytreat} requires adequate
overlap in covariate distributions between treatment groups, meaning
every observation must have a non-zero probability of being treated and not being treated.
Because Assumption \ref{ass:rta} (\nameref{ass:rta}) no longer holds anymore, neither does the
previous identifiability Equation~\ref{eq-RMSTkm}. In this new context, we
can write
\begin{equation}\phantomsection\label{eq-identcond}{
\begin{aligned}
    \theta_{\mathrm{RMST}} &=  \mathbb{E}[T^ {(1)} \wedge \tau - T^ {(0)} \wedge \tau]\\
    &= \mathbb{E}\left[\mathbb{E}[T^ {(1)} \wedge \tau|X] - \mathbb{E}[T^ {(0)} \wedge \tau|X] \right]  &&  \\
    &=\mathbb{E}\left[\mathbb{E}[T^ {(1)} \wedge \tau|X, A=1] - \mathbb{E}[T^ {(0)} \wedge \tau|X, A=0]\right]   && \text{\small(Unconfoundedness)} \\
       &= \mathbb{E}\left[\mathbb{E}[T \wedge \tau|X, A=1] - \mathbb{E}[T \wedge \tau|X, A=0]\right].  && \text{\small(SUTVA)}
\end{aligned}
}\end{equation} 

A direct application of the above equation could be a direct plug-in G-formula estimator developed in
Section~\ref{sec-Gformula}, equation~\ref{eq-gformula}.
In particular, \cite{chen2001} prove consistency and asymptotic normality
results for Cox estimators in an observational study, and they give an
explicit formulation of the asymptotic variance as a function of the
parameters of the Cox model. In the non-parametric literature, \cite{foster2011} and \cite{kunzel2019} empirically study this estimator using survival forests, with the former employing it as a
T-learner (referred to as \emph{Virtual Twins}) and the latter as an
S-learner.

In another direction, the next sections describe alternative estimation strategies that also rely on equation~\ref{eq-identcond} but adapt the estimation of the conditional expectations to the specific censoring assumptions—either independent or conditionally independent censoring.

\subsection{Independent censoring}\label{sec-obs_indcen}

Under Assumptions~\ref{ass:unconf} (\nameref{ass:unconf}) and \ref{ass:independentcensoring} (\nameref{ass:independentcensoring}), the data exhibit confounding effect induced by the confounders. The corresponding causal graph is shown in Figure~\ref{fig-causalgraph_obs_ind}.

\begin{figure}[h]

\centering{

\includegraphics[width=0.35\textwidth]{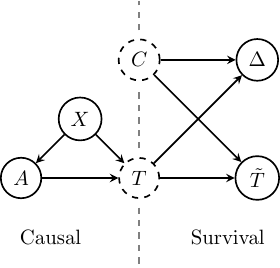}

}

\caption{\label{fig-causalgraph_obs_ind}Causal graph in observational
survival data with independent censoring.}

\end{figure}%

Under Assumption~\ref{ass:independentcensoring} (\nameref{ass:independentcensoring}), we saw in
Section~\ref{sec-theoryRCT_indc} that the Kaplan-Meier estimator could
conveniently handle censoring. Building on Equation~\ref{eq-identcond}, we
can write
\begin{equation}\phantomsection\label{eq-identcond2}{
\begin{aligned}
    \theta_{\mathrm{RMST}} &= \mathbb{E}\left[\mathbb{E}[T \wedge \tau|X, A=1] - \mathbb{E}[T \wedge \tau|X, A=0]\right] \\
    &= \mathbb{E}\left[\frac{\mathbb{E}[\mathbb{I}\{A=1\} T \wedge \tau |X]}{\mathbb{E}[\mathbb{I}\{A=1\}|X]}\right] - \mathbb{E}\left[\frac{\mathbb{E}[\mathbb{I}\{A=0\} T \wedge \tau |X]}{\mathbb{E}[\mathbb{I}\{A=0\}|X]}\right] && \text{\small(Positivity for treatment)} \\
    &= \mathbb{E}\left[\frac{A T \wedge \tau}{e(X)}\right] - \mathbb{E}\left[\frac{(1-A) T \wedge \tau}{1-e(X)}\right],
\end{aligned}
}\end{equation} 
which suggests to adapt the classical KM estimator by reweighting it
by the propensity score (see equation~\ref{eq-rmstsurv}). The use of propensity score in causal inference
has been initially introduced by \cite{rosenbaum1983} and further
developed in \cite{hirano2003}. It was extended to
survival analysis by \cite{xie2005} through the adjusted
Kaplan-Meier estimator as defined below.

\begin{definition}[]\protect\hypertarget{def-iptwkm}{}\label{def-iptwkm}

(IPTW Kaplan-Meier estimator) We let \(\widehat e(\cdot)\) be an
estimator of the propensity score \(e(\cdot)\). We introduce
\begin{align*}
D_k^{\mathrm{IPTW}}(a) &:= \sum_{i=1}^n \left(\frac{A_i}{\widehat e(X_i)}+\frac{1-A_i}{1- \widehat e(X_i)}\right)\mathbb{I}(\widetilde T_i = t_k, \Delta_i = 1, A_i=a) \\
\quad\text{and}\quad N^{\mathrm{IPTW}}_k(a) &:= \sum_{i=1}^n \left(\frac{A_i}{\widehat e(X_i)}+\frac{1-A_i}{1- \widehat e(X_i)}\right) \mathbb{I}(\widetilde T_i \geqslant t_k, A_i=a),
\end{align*}
be the numbers of deaths and of individuals at risk at time \(t_k\),
reweighted by the propensity score. The Inverse-Probability-of-Treatment
Weighting (IPTW) version of the KM estimator is defined as:
\begin{equation}\phantomsection\label{eq-IPTWKM}{
\begin{aligned}
\widehat{S}_{\mathrm{IPTW}}(t | A=a) &= \prod_{t_k \leqslant t}\left(1-\frac{D_k^{\mathrm{IPTW}}(a)}{N_k^{\mathrm{IPTW}}(a)}\right). 
\end{aligned}
}\end{equation}
\end{definition}
We let \(S^*_{\mathrm{IPTW}}(t | A=a)\) be the oracle KM-estimator
defined as above with \(\widehat e(\cdot) = e(\cdot)\). Although the
reweighting slightly changes the analysis, the good properties of the
classical KM carry on to this setting.

\textcolor{black}{In Appendix~\ref{appendix-a-proofs}, Proposition~\ref{prp-iptwkm} establishes that \(S^*_{\mathrm{IPTW}}(t \mid A=a)\) is a strongly consistent and asymptotically normal estimator of the survival function \(S^{(a)}(t)\). Corollary~\ref{cor-iptwkm} further shows that the same properties hold for the practical estimator provided the estimated propensity score converges uniformly.}
This follow directly from \(S^*_{\mathrm{IPTW}}\) is a continuous function of
\(e(\cdot)\), and because \(e\) and \(1-e\) are lower-bounded as per
Assumptions~\ref{ass:positivitytreat} (\nameref{ass:positivitytreat}).

\begin{corollary}[]\protect\hypertarget{cor-iptwkm}{}\label{cor-iptwkm}

Under Assumptions
\ref{ass:sutva} (\nameref{ass:sutva}),  \ref{ass:unconf} (\nameref{ass:unconf}), \ref{ass:positivitytreat} (\nameref{ass:positivitytreat}), \ref{ass:independentcensoring} (\nameref{ass:independentcensoring}) and \ref{ass:poscen} (\nameref{ass:poscen}), if
\(\widehat e(\cdot)\) satisfies \(\|\widehat e-e\|_{\infty} \to 0\)
almost surely (resp. in probability), then the IPTW Kaplan-Meier
estimator \(\hat S_{\mathrm{IPTW}}(t | A=a)\) is a strongly (resp.
weakly) consistent estimator of \(S^{(a)}(t)\).
\end{corollary}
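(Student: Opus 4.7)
The corollary is a perturbation-of-nuisance statement: Proposition \ref{prp-iptwkm} already delivers the conclusion for the oracle estimator $S^*_{\mathrm{IPTW}}(\cdot\mid A=a)$ built with the true propensity $e$, so it suffices to show that $\widehat S_{\mathrm{IPTW}}(t\mid A=a) - S^*_{\mathrm{IPTW}}(t\mid A=a) \to 0$ (in the appropriate mode) uniformly in $t\in[0,\tau]$. The mode of convergence will be inherited directly from that of $\widehat e - e$, so I can treat the almost-sure and in-probability cases in parallel.

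First I would convert uniform convergence of $\widehat e$ into uniform convergence of the per-subject weights. By Assumption \ref{ass:positivitytreat} there exists $\eta>0$ with $e(X)\wedge(1-e(X))\geqslant \eta$ almost surely, and since $\|\widehat e - e\|_\infty \to 0$, eventually $\widehat e(X)\wedge(1-\widehat e(X))\geqslant \eta/2$ almost surely. The elementary identity $\tfrac{1}{\widehat e}-\tfrac{1}{e}=\tfrac{e-\widehat e}{e\widehat e}$ (and its analogue for $1-e$) then gives
\[
\sup_{i\leqslant n}\bigl| w_i(\widehat e) - w_i(e)\bigr| \;=\; O\!\bigl(\|\widehat e - e\|_\infty\bigr), \where w_i(e) = \tfrac{A_i}{e(X_i)} + \tfrac{1-A_i}{1-e(X_i)}.
\]
Consequently, for every $k$ with $t_k\leqslant\tau$, the numerators $D_k^{\mathrm{IPTW}}(a)/n$ and denominators $N_k^{\mathrm{IPTW}}(a)/n$ differ from the oracle counterparts $D_k^{*\mathrm{IPTW}}(a)/n$ and $N_k^{*\mathrm{IPTW}}(a)/n$ by $O(\|\widehat e - e\|_\infty)$, uniformly in $k$.

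Next I would propagate this to the product. The oracle denominator $N_k^{*\mathrm{IPTW}}(a)/n$ is a sample average of bounded i.i.d.\ terms whose expectation $\mathbb{E}[\mathbb{I}\{\widetilde T\geqslant t_k,A=a\}/e(X)]$ factorizes under Assumptions \ref{ass:unconf}, \ref{ass:positivitytreat} and \ref{ass:independentcensoring} as $S^{(a)}(t_k)G(t_k)$. By Assumptions \ref{ass:poscen} and the fact that $S^{(a)}(\tau)>0$ (implicit in the identifiability setup), this expectation is lower-bounded by a positive constant uniformly in $t_k\leqslant\tau$, so with high probability each plug-in ratio $D_k^{\mathrm{IPTW}}/N_k^{\mathrm{IPTW}}$ differs from its oracle version by $O(\|\widehat e - e\|_\infty)$ uniformly in $k$. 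Because the Kaplan--Meier product is a finite product of factors bounded in $(0,1]$, the map that sends the vector of hazard increments to the survival curve is Lipschitz in sup-norm, and the perturbation propagates to $\widehat S_{\mathrm{IPTW}}(t\mid A=a)$ uniformly on $[0,\tau]$.

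The main obstacle is precisely the uniform lower bound on the denominators near $\tau$: without it the ratios $D_k^{\mathrm{IPTW}}/N_k^{\mathrm{IPTW}}$ would not depend Lipschitz-continuously on the weights, and the product could be destabilized by late-time factors close to $1$. This is exactly where Assumption \ref{ass:poscen} does its work; once this stability is secured, the remainder is a routine continuous-mapping and plug-in argument combining the oracle consistency of Proposition \ref{prp-iptwkm} with the uniform weight-error bound above.
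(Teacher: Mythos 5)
Your proposal is correct and follows essentially the same route as the paper: the paper's own justification is precisely that the estimator is a continuous function of $e(\cdot)$, that $e$ and $1-e$ are bounded away from $0$ and $1$ by Assumption \ref{ass:positivitytreat}, and that the oracle version is consistent by Proposition \ref{prp-iptwkm}. You simply spell out the perturbation argument (weights, ratios, finite product) in more detail than the paper does, including the caveat about denominators staying bounded away from zero near $\tau$, which the paper only addresses implicitly.
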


The resulting RMST estimator simply takes the form:
\begin{equation}\phantomsection\label{eq-RMST_IPTWKM}{
\widehat{\theta}_{\mathrm{IPTW-KM}} = \int_{0}^{\tau}\widehat{S}_{\mathrm{IPTW}}(t|A=1)-\widehat{S}_{\mathrm{IPTW}}(t|A=0)dt.
}\end{equation} which will be consistent under the same Assumptions as
the previous Corollary. Note that we are not aware of any formal
results concerning the bias and the asymptotic variance of the oracle
estimator \(S^*_{\mathrm{IPTW}}(t | A=a)\), and we refer to Xie and Liu
(2005) for heuristics concerning these questions.

\subsection{Conditional independent censoring}\label{sec-obscondcens}

Under Assumptions~\ref{ass:unconf} (\nameref{ass:unconf}) and \ref{ass:condindepcensoring} (\nameref{ass:condindepcensoring}), the data exhibit both confounding and conditionally independent censoring. The corresponding causal graph is shown in Figure~\ref{fig-causalgraph_obs_dep}. To address this, weight the IPCW and Buckley–James (BJ) transformations from Section~\ref{sec-unbiasedtransfo} by the propensity score.
\begin{figure}[h]
\centering{
\includegraphics[width=0.35\textwidth]{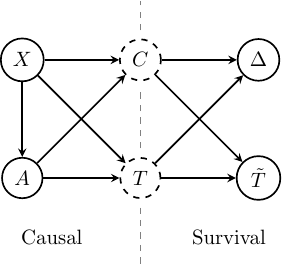}
}
\caption{\label{fig-causalgraph_obs_dep}Causal graph in observational
survival data with dependent censoring.}
\end{figure}%
\subsubsection{IPTW-IPCW transformations}\label{iptw-ipcw-transformations}

One can check that the IPCW transformation as introduced in
Equation~\ref{eq-defipcw} is also a censoring-unbiased transformation in
observational study (see Appendix in Proposition~\ref{prp-iptwipcw}).
Deriving an estimator of the difference of
RMST is however slightly different in that context. In particular,
Equation~\ref{eq-cutcond} rewrites \[
\mathbb{E}[\mathbb{E}[T^*|X,A=1]] = \mathbb{E}\left[\frac{A}{e(X)} T^*\right],
\] which suggests defining an estimator of the RMST difference \textcolor{black}{that can, similarly to IPCW estimators, be stabilized by normalizing with the sum of the weights,} and is identified by combining Equation~\ref{eq-identcond2} with the IPCW representation in Equation~\ref{eq-defipcw}.
\begin{equation}\phantomsection\label{eq-iptwipcw}{
\widehat\theta_{\mathrm{IPTW-IPCW}} = \frac1n\sum_{i=1}^n  \left(\frac{A_i}{e(X_i)}-\frac{1-A_i}{1-e(X_i)} \right) T^*_{\mathrm{IPCW},i}.
}\end{equation}
This transformation now depends on two nuisance parameters, namely the
conditional survival function of the censoring (through
\(T^*_{\mathrm{IPCW}}\)) and the propensity score. \textcolor{black}{Proposition~\ref{prp-consiptwipcw} ensures consistency of this estimator when both nuisance quantities are consistently estimated.}

\begin{proposition}[Consistency of $\widehat\theta_{\mathrm{IPTW-IPCW}}$ with estimated $\hat{e}$ and $\hat{G}$]\protect\hypertarget{prp-consiptwipcw}{}\label{prp-consiptwipcw}

Under Assumptions
\ref{ass:sutva} (\nameref{ass:sutva}),  \ref{ass:unconf} (\nameref{ass:unconf}), \ref{ass:positivitytreat} (\nameref{ass:positivitytreat}), \ref{ass:condindepcensoring} (\nameref{ass:condindepcensoring})
and \ref{ass:positivitycensoring} (\nameref{ass:positivitycensoring}), and if \(\widehat G(\cdot|X,A)\) and
\(\widehat e (\cdot)\) are uniformly weakly (resp. strongly) consistent
estimators, then estimator \ref{eq-iptwipcw} defined with \(\widehat e\)
and \(\widehat G\) is a weakly (resp. strongly) consistent estimator of
the RMST difference.

\end{proposition}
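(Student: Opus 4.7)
The plan is to decompose the estimator into an oracle part (with the true nuisances $e,G$) and a plug-in approximation error, then to handle each piece separately. For the oracle part, I will apply the strong/weak law of large numbers and use the censoring-unbiasedness of $T^*_{\mathrm{IPCW}}$ under the observational setting (Proposition~\ref{prp-iptwipcw}) together with the identification in Equation~\ref{eq-identcond2}. For the plug-in error, I will exploit the positivity assumptions (Assumptions~\ref{ass:positivitytreat} and~\ref{ass:positivitycensoring}) to Lipschitz-control the reciprocals $1/\widehat e(X)$, $1/(1-\widehat e(X))$ and $1/\widehat G(\cdot\mid X,A)$ by the uniform errors $\|\widehat e-e\|_\infty$ and $\|\widehat G-G\|_\infty$.

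\textbf{Step 1: oracle consistency.} Let $\widehat\theta^*_{\mathrm{IPTW-IPCW}}$ denote the estimator in Equation~\ref{eq-iptwipcw} with $e$ and $G$ in place of $\widehat e$ and $\widehat G$. Under Assumption~\ref{ass:positivitytreat}, the random variable $(A/e(X) - (1-A)/(1-e(X)))T^*_{\mathrm{IPCW}}$ is bounded (since $\widetilde T\wedge\tau\le\tau$, $G(\cdot\mid X,A)\ge\varepsilon$, and $e(X),1-e(X)$ are bounded away from $0$), hence integrable. The SLLN (resp.\ WLLN) yields
\[
\widehat\theta^*_{\mathrm{IPTW-IPCW}} \;\longrightarrow\; \mathbb{E}\!\left[\Big(\tfrac{A}{e(X)}-\tfrac{1-A}{1-e(X)}\Big)T^*_{\mathrm{IPCW}}\right]
\]
almost surely (resp.\ in probability). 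Conditioning on $(X,A)$, using Assumption~\ref{ass:unconf} and Proposition~\ref{prp-iptwipcw}, the right-hand side equals $\mathbb{E}[\mathbb{E}[T\wedge\tau\mid X,A=1] - \mathbb{E}[T\wedge\tau\mid X,A=0]] = \theta_{\mathrm{RMST}}$ by Equation~\ref{eq-identcond2}.

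\textbf{Step 2: controlling the plug-in error.} Write
\[
\widehat\theta_{\mathrm{IPTW-IPCW}} - \widehat\theta^*_{\mathrm{IPTW-IPCW}} = \frac{1}{n}\sum_{i=1}^n R_i,
\]
where each $R_i$ is a sum of differences of the form $\bigl(\widehat w_i - w_i\bigr)\widehat T^*_{\mathrm{IPCW},i}$ and $w_i\bigl(\widehat T^*_{\mathrm{IPCW},i}-T^*_{\mathrm{IPCW},i}\bigr)$, with $w_i$ the oracle treatment weight and $\widehat w_i$ its plug-in. By Assumption~\ref{ass:positivitytreat}, for $n$ large enough so that $\|\widehat e-e\|_\infty<\varepsilon'/2$, the reciprocals $1/\widehat e, 1/(1-\widehat e)$ are uniformly bounded, and $|\widehat w_i - w_i|\le C\|\widehat e-e\|_\infty$. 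Similarly, by Assumption~\ref{ass:positivitycensoring}, $1/\widehat G(t\mid X,A)$ is uniformly bounded and $\bigl|\widehat T^*_{\mathrm{IPCW},i}-T^*_{\mathrm{IPCW},i}\bigr|\le C'\tau\,\|\widehat G-G\|_\infty$. Since $|\widehat T^*_{\mathrm{IPCW},i}|$ is uniformly bounded as well, we obtain $|R_i|\le C''\bigl(\|\widehat e-e\|_\infty + \|\widehat G-G\|_\infty\bigr)$ uniformly in $i$, so $|\widehat\theta_{\mathrm{IPTW-IPCW}} - \widehat\theta^*_{\mathrm{IPTW-IPCW}}|\to 0$ almost surely (resp.\ in probability) by the assumed uniform consistency of $\widehat e$ and $\widehat G$.

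\textbf{Conclusion and main obstacle.} Combining Steps~1 and~2 through the triangle inequality yields the desired strong (resp.\ weak) consistency. The main technical point is ensuring that the plug-in error is controlled \emph{uniformly} in the random evaluation points $(X_i,A_i,\widetilde T_i\wedge\tau)$; this is exactly why uniform consistency of $\widehat G$ on $[0,\tau]\times\mathcal X\times\{0,1\}$ is required, rather than pointwise consistency. The positivity bounds on $e$ and $G$ are what convert the uniform errors on $\widehat e$ and $\widehat G$ into uniform errors on their reciprocals, via an elementary bound of the form $|1/a - 1/b|\le |a-b|/(|a||b|)$ applied for $n$ large enough. No delicate empirical-process machinery is needed since the integrand is bounded.
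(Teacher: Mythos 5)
Your proposal is correct and follows essentially the same route as the paper: reduce to the oracle estimator with the true $e$ and $G$, apply the law of large numbers, and identify the limit via the censoring-unbiasedness computation; your Step~2 simply spells out explicitly the "by continuity" reduction that the paper states in one line, using the positivity assumptions to bound the reciprocals.
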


\textcolor{black}{We can also use the same strategy as for the
IPCW transformation and incorporate the time-dependent weights into a Kaplan-Meier estimator.}

\begin{definition}[]\protect\hypertarget{def-iptwipcwkm}{}\label{def-iptwipcwkm}

(IPTW-IPCW-Kaplan-Meier) We let again \(\widehat G(\cdot|X,A)\) and
\(\widehat e(\cdot)\) be estimators of the conditional censoring
survival function and of the propensity score. We introduce

\begin{align*}
D_k^{\mathrm{IPTW-IPCW}}(a) &:= \sum_{i=1}^n \left(\frac{A_i}{\widehat e(X_i)}+\frac{1-A_i}{1-\widehat e(X_i)} \right)\frac{\Delta_i^\tau}{\widehat G(t_k | X_i,A=a)} \mathbb{I}(\widetilde T_i = t_k, A_i=a) \\
\quad\text{and}\quad N^{\mathrm{IPTW-IPCW}}_k(a) &:= \sum_{i=1}^n \left(\frac{A_i}{\widehat e(X_i)}+\frac{1-A_i}{1-\widehat e(X_i)} \right)\frac{1}{\widehat G(t_k | X_i,A=a)} \mathbb{I}(\widetilde T_i \geqslant t_k, A_i=a),
\end{align*}

\textcolor{black}{be the weight-corrected numbers of deaths and of individual at risk at
time \(t_k\).} The IPTW-IPCW version of the KM estimator is defined
as: \[
\begin{aligned}
\widehat{S}_{\mathrm{IPTW-IPCW}}(t | A=a) &= \prod_{t_k \leqslant t}\left(1-\frac{D_k^{\mathrm{IPTW-IPCW}}(a)}{N_k^{\mathrm{IPTW-IPCW}}(a)}\right). 
\end{aligned}
\]

\end{definition}

The RMST difference estimated with IPTW-IPCW-Kaplan-Meier survival
curves is then simply as

\begin{equation}\phantomsection\label{eq-RMST_IPTW_IPCWKM}{
\widehat{\theta}_{\mathrm{IPTW-IPCW-KM}} = \int_{0}^{\tau}\widehat{S}_{\mathrm{IPTW-IPCW}}(t|A=1)-\widehat{S}_{\mathrm{IPTW-IPCW}}(t|A=0)dt.
}\end{equation}

\textcolor{black}{In Appendix~\ref{appendix-a-proofs}, Proposition~\ref{prp-iptwipcwkm} shows that the oracle version of this estimator is consistent and asymptotically normal. The result extends to the empirical estimator: as shown in Corollary~\ref{cor-consiptwipcwkm}, uniform consistency of the nuisance estimators ensures the consistency of the estimated survival curves
\(\widehat{\theta}_{\mathrm{IPTW-IPCW-KM}}\).}

\begin{corollary}[]\protect\hypertarget{cor-consiptwipcwkm}{}\label{cor-consiptwipcwkm}

Under Assumptions
\ref{ass:sutva} (\nameref{ass:sutva}),  \ref{ass:unconf} (\nameref{ass:unconf}), \ref{ass:positivitytreat} (\nameref{ass:positivitytreat}), \ref{ass:condindepcensoring} (\nameref{ass:condindepcensoring})
and \ref{ass:positivitycensoring} (\nameref{ass:positivitycensoring}), and for all \(t \in [0,\tau]\), if the
nuisance estimators \(\widehat G(\cdot|A,X)\) and \(\widehat e\) are
weakly (resp. strongly) uniformly consistent, then
\(\widehat{S}_{\mathrm{IPTW-IPCW}}(t | A=a)\) is a weakly (resp.
strongly) consistent estimator of \(S^{(a)}(t)\).

\end{corollary}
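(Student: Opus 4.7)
The plan is to combine the oracle consistency statement (Proposition~\ref{prp-iptwipcwkm}) with a plug-in/continuity argument that controls the perturbation induced by replacing $G$ and $e$ with their estimators. First, I would write the decomposition
\[
\widehat S_{\mathrm{IPTW-IPCW}}(t|A=a) - S^{(a)}(t) = \bigl[\widehat S_{\mathrm{IPTW-IPCW}}(t|A=a) - S^*_{\mathrm{IPTW-IPCW}}(t|A=a)\bigr] + \bigl[S^*_{\mathrm{IPTW-IPCW}}(t|A=a) - S^{(a)}(t)\bigr],
\]
where $S^*_{\mathrm{IPTW-IPCW}}$ denotes the oracle estimator built with the true nuisance functions. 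The second bracket converges to $0$ (almost surely or in probability, matching the mode of convergence) by Proposition~\ref{prp-iptwipcwkm}, so all of the work concentrates on the first bracket, the plug-in error.

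To control the plug-in error, I would first use the positivity assumptions. By Assumption~\ref{ass:positivitytreat} and Assumption~\ref{ass:positivitycensoring}, there is some $\eta>0$ such that $G(t|X,A) \geq \eta$ and $\min(e(X), 1-e(X)) \geq \eta$ almost surely. The uniform consistency $\|\widehat G - G\|_\infty, \|\widehat e - e\|_\infty \to 0$ therefore implies that, on an event whose probability tends to one (resp. for $n$ large enough, almost surely), the same lower bound (with $\eta/2$) holds for $\widehat G$ and for $\min(\widehat e, 1-\widehat e)$. On this event, the maps $x\mapsto 1/\widehat G(t_k|x,a)$ and $x \mapsto A/\widehat e(x) + (1-A)/(1-\widehat e(x))$ are uniformly bounded and Lipschitz in the nuisances, so that the normalized reweighted counts satisfy
\[
\sup_k \frac{1}{n}\bigl|D_k^{\mathrm{IPTW-IPCW}}(a) - D_k^{*}(a)\bigr| \longrightarrow 0, \qquad \sup_k \frac{1}{n}\bigl|N_k^{\mathrm{IPTW-IPCW}}(a) - N_k^{*}(a)\bigr| \longrightarrow 0,
\]
in the appropriate mode, where $D_k^*, N_k^*$ denote their oracle counterparts.

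Third, I would transfer this to the Kaplan--Meier product. From the oracle analysis, $N_k^{*}(a)/n$ converges to a strictly positive quantity on $[0,\tau]$ (proportional to $S^{(a)}(t_k)G(t_k)$), so the same is true of $N_k^{\mathrm{IPTW-IPCW}}(a)/n$. Hence each jump $D_k^{\mathrm{IPTW-IPCW}}(a)/N_k^{\mathrm{IPTW-IPCW}}(a)$ is close to the oracle jump. Passing to the logarithm,
\[
\log \widehat S_{\mathrm{IPTW-IPCW}}(t|A=a) = \sum_{t_k \leq t} \log\!\left(1-\frac{D_k^{\mathrm{IPTW-IPCW}}(a)}{N_k^{\mathrm{IPTW-IPCW}}(a)}\right),
\]
and using that $\log(1-x)$ is Lipschitz on compact subsets of $(-\infty,1)$, the plug-in error vanishes on the (finite) discrete grid. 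Exponentiating and combining with the oracle error yields the claim.

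The main obstacle will be making uniformity over the time index fully rigorous when one moves from the discrete grid of Definition~\ref{def-iptwipcwkm} to observed event times on $[0,\tau]$: one needs the perturbation of the cumulative hazard to be uniformly small in $t$, not just pointwise. I would handle this by exploiting that the integrand is bounded and that the reweighted empirical at-risk process is monotone, applying a standard Glivenko--Cantelli-type argument on the cumulative hazard representation, and then invoking the continuous mapping theorem together with the functional delta method to conclude that the corresponding product integral is consistent uniformly on $[0,\tau]$.
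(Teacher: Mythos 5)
Your proposal is correct and follows essentially the same route as the paper: the paper disposes of this corollary with a one-line continuity argument (the estimator depends continuously on $\widehat G$ and $\widehat e$, which are bounded away from zero by the positivity assumptions) layered on top of the oracle consistency in Proposition~\ref{prp-iptwipcwkm}, which is exactly the decomposition into oracle error plus plug-in error that you carry out in detail. Your extra care about uniformity in $t$ is not needed here since Definition~\ref{def-iptwipcwkm} works on a fixed finite grid $\{t_k\}$, but it does not change the argument.
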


We are not aware of general formulas for the asymptotic variances in this
context. We mention nonetheless that \cite{schaubel2011} have been
able to derive asymptotic laws in this framework in the particular case
of Cox-models.

\subsubsection{IPTW-BJ transformations}\label{iptw-bj-transformations}

As with IPCW, the BJ transformation in Equation~\ref{eq-defbj} is censoring-unbiased in observational contexts (Proposition~\ref{prp-iptwbj}, Appendix~\ref{appendix-a-proofs}).
The corresponding estimator of the difference
in RMST is \begin{equation}\phantomsection\label{eq-iptwbj}{
\hat \theta_{\mathrm{IPTW-BJ}} = \frac1n\sum_{i=1}^n  \left(\frac{A_i}{e(X_i)}-\frac{1-A_i}{1-e(X_i)} \right) T^*_{\mathrm{BJ},i}.
}\end{equation}

This transformation depends on the conditional survival function \(S\)
(through \(T^*_{\mathrm{BJ}}\)) and the propensity score. Proposition \ref{prp-consiptwbj} stated that consistency of
the nuisance parameter estimators implies again consistency of the RMST
estimator. 

\begin{proposition}[Consistency of $\widehat\theta_{\mathrm{IPTW-BJ}}$ with estimated $\hat{e}$ and $\hat{S}$]\protect\hypertarget{prp-consiptwbj}{}\label{prp-consiptwbj}

Under Assumptions
\ref{ass:sutva} (\nameref{ass:sutva}),  \ref{ass:unconf} (\nameref{ass:unconf}), \ref{ass:positivitytreat} (\nameref{ass:positivitytreat}), \ref{ass:condindepcensoring} (\nameref{ass:condindepcensoring})
and \ref{ass:positivitycensoring} (\nameref{ass:positivitycensoring}), and if \(\widehat S(\cdot|X,A)\) and
\(\widehat e (\cdot)\) are uniformly weakly (resp. strongly) consistent
estimators, then \(\widehat\theta_{\mathrm{IPTW-BJ}}\) defined with
\(\widehat S\) and \(\widehat e\) is a weakly (resp. strongly)
consistent estimator of the RMST.

\end{proposition}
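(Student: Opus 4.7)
The plan is to follow the standard ``oracle'' template used implicitly in Corollaries~\ref{cor-bjcons} and~\ref{cor-consiptwipcwkm}: decompose
\[
\widehat\theta_{\mathrm{IPTW-BJ}} \;=\; \theta_n^\star \;+\; R_n,
\]
where $\theta_n^\star$ is the empirical average obtained by plugging the \emph{true} nuisances $e$ and $S$ into Equation~\ref{eq-iptwbj}, and $R_n$ is the plug-in error. The strategy is then twofold: (i) show that $\theta_n^\star \to \theta_{\mathrm{RMST}}$ by a law-of-large-numbers argument for bounded summands, and (ii) show that $R_n \to 0$ by exploiting uniform consistency of $\widehat e$ and $\widehat S$ together with the positivity bounds from Assumptions~\ref{ass:positivitytreat} (\nameref{ass:positivitytreat}) and~\ref{ass:positivitycensoring} (\nameref{ass:positivitycensoring}). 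The mode of convergence (weak or strong) is inherited from the one assumed on the nuisance estimators.

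For step (i), Proposition~\ref{prp-iptwbj} (cited from the Appendix) gives that $T^*_{\mathrm{BJ}}$ built from the true $S$ is censoring-unbiased in the observational context, so the identification chain in Equation~\ref{eq-identcond2} combined with the reweighting identity Equation~\ref{eq-cutcond} yields
\[
\mathbb{E}\!\left[\left(\frac{A}{e(X)} - \frac{1-A}{1-e(X)}\right) T^*_{\mathrm{BJ}}\right] = \theta_{\mathrm{RMST}}.
\]
Because $\widetilde T \wedge \tau \leq \tau$ and $Q_S(\cdot|X,A) \leq \tau$, the transformation $T^*_{\mathrm{BJ}}$ is uniformly bounded by $\tau$, and positivity bounds $e(X) \in [\varepsilon,1-\varepsilon]$, so the summands in $\theta_n^\star$ are i.i.d., bounded, and have the correct mean. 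The strong (resp. weak) law of large numbers delivers $\theta_n^\star \to \theta_{\mathrm{RMST}}$ almost surely (resp. in probability).

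For step (ii), let $w(a,x;e):= a/e(x) - (1-a)/(1-e(x))$ and telescope
\[
R_n \;=\; \frac{1}{n}\sum_{i=1}^n \bigl[w(A_i,X_i;\widehat e)-w(A_i,X_i;e)\bigr]\,T^*_{\mathrm{BJ},i}(\widehat S) \;+\; \frac{1}{n}\sum_{i=1}^n w(A_i,X_i;e)\bigl[T^*_{\mathrm{BJ},i}(\widehat S) - T^*_{\mathrm{BJ},i}(S)\bigr].
\]
For $n$ large, $\widehat e$ inherits positivity from uniform consistency, so the first sum is bounded by $C\|\widehat e - e\|_\infty$. The second sum is bounded by a multiple of $\sup_i |Q_{\widehat S}(\widetilde T_i\wedge\tau|X_i,A_i) - Q_S(\widetilde T_i\wedge\tau|X_i,A_i)|$, since the censored branch of $T^*_{\mathrm{BJ}}$ is the only one that depends on $S$. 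Both error terms vanish in the appropriate mode under the uniform consistency hypothesis on $\widehat e$ and $\widehat S$.

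The main obstacle is the continuity of the map $S \mapsto Q_S$, which involves the ratio $\frac{1}{S(t|X,A)}\int_t^\tau S(u|X,A)\diff u$: without a lower bound on $S$ the operation is not uniformly Lipschitz. I would resolve this exactly as in the proof of Corollary~\ref{cor-bjcons} by noting that in the numerator the integrand $S(u|X,A)$ already implicitly bounds the denominator from below on the range of integration (so the ratio is bounded by $\tau$), and by an elementary algebraic manipulation $Q_{\widehat S} - Q_S$ can be written in a form where uniform closeness of $\widehat S$ to $S$ transfers directly to uniform closeness of $Q_{\widehat S}$ to $Q_S$ on $[0,\tau]$, modulo the positivity-type control of $S$ on this interval. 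Combined with the two previous steps, this closes the argument.
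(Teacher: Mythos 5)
Your proposal follows essentially the same route as the paper's proof: reduce to the oracle estimator built from the true $e$ and $S$ via uniform consistency and continuity of the plug-in map, then apply the law of large numbers together with the censoring-unbiasedness of $T^*_{\mathrm{BJ}}$ (Proposition~\ref{prp-iptwbj}) to identify the oracle's limit as $\theta_{\mathrm{RMST}}$. You actually supply more detail than the paper does — in particular the telescoping bound on the plug-in error and the discussion of the continuity of $S \mapsto Q_S$, which the paper asserts without elaboration — so the argument is, if anything, more complete than the published one.
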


\subsubsection{Double augmented corrections}\label{sec-AIPTW_AIPCW}

Building on Equations~\ref{eq-identcond} and \ref{eq-identcond2}, it follows that
\[
\mathbb{E}\!\left[
\frac{\mathbb{I}\{A=a\}}{\mathbb{E}[\mathbb{I}\{A=a\}\mid X]}\,
\Big( (T\wedge\tau)-\mathbb{E}[T\wedge\tau\mid X,A=a] \Big)
\,\middle|\, X
\right]=0.
\]
Thus,
\begin{equation}
\begin{aligned}
\theta_{\mathrm{RMST}}
&= \mathbb{E}\left[\mathbb{E}[T\wedge\tau\mid X,A=1]-\mathbb{E}[T\wedge\tau\mid X,A=0]\right] \\
&= \mathbb{E}\left[
\mathbb{E}[T\wedge\tau\mid X,A=1]-\mathbb{E}[T\wedge\tau\mid X,A=0]
\right. \\
&\qquad \left.
+ \underbrace{
\frac{\mathbb{I}\{A=1\}}{e(X)}\Big(T\wedge\tau-\mathbb{E}[T\wedge\tau\mid X,A=1]\Big)
- \frac{\mathbb{I}\{A=0\}}{1-e(X)}\Big(T\wedge\tau-\mathbb{E}[T\wedge\tau\mid X,A=0]\Big)
}_{\text{\small add mean-zero terms}}
\right].
\end{aligned}
\end{equation}

This representation coincides with the classical doubly robust augmented inverse probability weighting (AIPTW) functional from causal inference \citep{robins1994}. Incorporating the doubly robust censoring transformations from Section~\ref{sec-tdr} yields a transformation that is doubly robust with respect to both censoring and treatment assignment\[
\Delta^*_{\mathrm{QR}} = \Delta^*_{\mathrm{QR}}(G,S,\mu,e)  := \left(\frac{A}{e(X)}-\frac{1-A}{1-e(X)}\right)(T^*_{\mathrm{DR}}(G,S)-\mu(X,A))+\mu(X,1)-\mu(X,0),
\] where we recall that \(T^*_{\mathrm{DR}}\) is defined in
Section~\ref{sec-tdr}. This transformation depends on four nuisance
parameters: \(G\) and \(S\) through \(T^*_{\mathrm{DR}}\), and now the
propensity score \(e\) and the conditional response \(\mu\). This
transformation does not really fall into the scope of censoring-unbiased
transform, but it is easy to show that \(\Delta^*_{\mathrm{QR}}\) is
doubly robust for censoring and treatment assignment. \textcolor{black}{That is, among the four nuisance functions involved, only one from each pair \((S,G)\) and \((e,\mu)\) needs to be correctly specified. This property (similar to \cite{Ozenne20_AIPTW_AIPCW}, Thm 1) is formally stated in Proposition~\ref{prp-tqr} in Appendix~\ref{appendix-a-proofs}.}

Based on estimators
\((\widehat G, \widehat S, \widehat\mu, \widehat e)\) of
\((G,S,\mu,e)\), one can then propose the following estimator of the
RMST, coined the AIPTW-AIPCW estimator in \cite{Ozenne20_AIPTW_AIPCW}:
\begin{equation}\phantomsection\label{eq-AIPTW_AIPCW}{
\begin{aligned}
\widehat\theta_{\mathrm{AIPTW-AIPCW}} &:= \frac1n \sum_{i=1}^n \Delta_{\mathrm{QR},i}^*(\widehat G, \widehat S, \widehat\mu, \widehat e)
\\
&=\frac1n \sum_{i=1}^n \left(\frac{A_i}{\widehat e(X_i)}-\frac{1-A_i}{1-\widehat e(X_i)}\right)(T^*_{\mathrm{DR}}(\widehat G,\widehat S)_i-\widehat\mu(X_i,A_i)) + \widehat\mu(X_i,1)-\widehat\mu(X_i,0).
\end{aligned}
}\end{equation}

Asymptotic properties of this estimator, such as asymptotic normality and semiparametric efficiency, are detailed in \cite{Ozenne20_AIPTW_AIPCW}.

\section{Summary of the estimators and variance estimation}

For completeness, Table~\ref{tbl-nuisance} offers a compact synopsis of the estimators, context of application, associated nuisance models, and large-sample behavior, with a focus on misspecification sensitivity.
\hspace*{-0.9cm}
\begin{longtable}[]{@{}
  >{\raggedright\arraybackslash}p{0.23\linewidth}
  >{\raggedright\arraybackslash}p{0.03\linewidth}
  >{\raggedright\arraybackslash}p{0.03\linewidth}
  >{\raggedright\arraybackslash}p{0.04\linewidth}
  >{\raggedright\arraybackslash}p{0.04\linewidth}
  >{\raggedright\arraybackslash}p{0.1\linewidth}
  >{\raggedright\arraybackslash}p{0.1\linewidth}
  >{\raggedright\arraybackslash}p{0.1\linewidth}
  >{\raggedright\arraybackslash}p{0.13\linewidth}@{}}
\caption{Estimators of the RMST difference and nuisance parameters
needed to compute each estimator. Empty boxes indicate that the nuisance
parameter is not needed in the estimator thus misspecification has no effect.}
\label{tbl-nuisance}\\
\toprule\noalign{}
\begin{minipage}[b]{\linewidth}\raggedright
Estimator
\end{minipage} & \begin{minipage}[b]{\linewidth}\raggedright
RCT
\end{minipage} & \begin{minipage}[b]{\linewidth}\raggedright
Obs
\end{minipage} & \begin{minipage}[b]{\linewidth}\raggedright
Ind Cens
\end{minipage} & \begin{minipage}[b]{\linewidth}\raggedright
Dep Cens
\end{minipage} & \begin{minipage}[b]{\linewidth}\raggedright
Outcome model
\end{minipage} & \begin{minipage}[b]{\linewidth}\raggedright
Censoring model
\end{minipage} & \begin{minipage}[b]{\linewidth}\raggedright
Treatment model
\end{minipage} & \begin{minipage}[b]{\linewidth}\raggedright
Robustness
\end{minipage} \\
\midrule\noalign{}
\endfirsthead
\toprule\noalign{}
\begin{minipage}[b]{\linewidth}\raggedright
Estimator
\end{minipage} & \begin{minipage}[b]{\linewidth}\raggedright
RCT
\end{minipage} & \begin{minipage}[b]{\linewidth}\raggedright
Obs
\end{minipage} & \begin{minipage}[b]{\linewidth}\raggedright
Ind Cens
\end{minipage} & \begin{minipage}[b]{\linewidth}\raggedright
Dep Cens
\end{minipage} & \begin{minipage}[b]{\linewidth}\raggedright
Outcome model
\end{minipage} & \begin{minipage}[b]{\linewidth}\raggedright
Censoring model
\end{minipage} & \begin{minipage}[b]{\linewidth}\raggedright
Treatment model
\end{minipage} & \begin{minipage}[b]{\linewidth}\raggedright
Robustness
\end{minipage} \\
\midrule\noalign{}
\endhead
\bottomrule\noalign{}
\endlastfoot
\emph{Unadjusted KM} & \(\color{green}\checkmark\) & & \(\color{green} \checkmark\) & & & & & \\
IPCW-KM \& IPCW & \(\color{green} \checkmark\) & & \(\color{green} \checkmark\) & \(\color{green} \checkmark\) & & \(G\) & & \\
BJ & \(\color{green} \checkmark\) & & \(\color{green} \checkmark\) & \(\color{green} \checkmark\) & \(S\) & & & \\
\emph{IPTW-KM} & \(\color{green} \checkmark\) & \(\color{green} \checkmark\) & \(\color{green} \checkmark\) & & & & \(e\) & \\
IPCW-IPTW-KM \& IPCW-IPTW & \(\color{green} \checkmark\) & \(\color{green} \checkmark\) & \(\color{green} \checkmark\) & \(\color{green} \checkmark\) & & \(G\) & \(e\) & \\
IPTW-BJ & \(\color{green} \checkmark\) & \(\color{green} \checkmark\) & \(\color{green} \checkmark\) & \(\color{green} \checkmark\) & \(S\) & & \(e\) & \\
\emph{G-formula} & \(\color{green} \checkmark\) & \(\color{green} \checkmark\) & \(\color{green} \checkmark\) & \(\color{green} \checkmark\) & \(\mu\) & & & \\
\textcolor{black}{AIPCW} & \(\color{green} \checkmark\) & & \(\color{green} \checkmark\) & \({\color{green} \checkmark}\) & \(S\) & \(G\) &  & \(\color{green} \checkmark\) (Prp \ref{prp-tdr}) \\
AIPTW-AIPCW & \(\color{green} \checkmark\) & \(\color{green} \checkmark\) & \(\color{green} \checkmark\) & \({\color{green} \checkmark}\) & \(S,\mu\) & \(G\) & \(e\) & \(\color{green} \checkmark\) (Prp \ref{prp-tqr}) \\
\end{longtable}

\textcolor{black}{Some survival estimators presented in this work have known asymptotic variance expressions, such as the Kaplan-Meier estimator (via Greenwood's formula, derived in Proposition~\ref{prp-varkm}) or AIPTW-AIPCW estimator (derived in \cite{Ozenne20_AIPTW_AIPCW}). Under standard regularity conditions ensuring asymptotic normality and assuming independence between the two survival curves, the variance of the RMST difference can be readily obtained via the delta method.}

\textcolor{black}{However, in practice the two estimated survival curves are typically correlated, which complicates variance estimation. While explicit variance formulas have been derived in certain settings (such as for the G-formula combined with Cox estimators (see \cite{chen2001})) these expressions often remain complex and difficult to implement. We therefore recommend the nonparametric bootstrap \citep{efron1994} as a robust and practical solution for variance estimation and confidence interval construction in general settings. The effectiveness of bootstrap methods in the context of censored data has been investigated in several studies, including \cite{efron1981} and \cite{chen1996}, which provide both theoretical justification and empirical validation for its use with Kaplan–Meier estimator.}

\section{Simulations}\label{sec-simulation}

\subsection{Implementation}\label{implementation}
\textcolor{black}{Maintained implementations for estimating RMST under right censoring remain limited. Available tools include \href{https://cran.r-project.org/web/packages/survRM2/index.html}{\texttt{survRM2}}
 \citep{Hajime2015}, \href{https://cran.r-project.org/web/packages/grf/index.html}{\texttt{grf}}
 \citep{Tibshirani2017}, and \href{https://cran.r-project.org/web/packages/RISCA/index.html}{\texttt{RISCA}}
 \citep{Foucher2019}. The Appendix details these packages and their functions, mapping each to the estimators studied in this paper. We implement the remaining estimators and, for a unified and pedagogical framework, provide custom implementations for all estimators, including those with existing software. The complete codebase is available on GitHub at \href{https://github.com/Sanofi-Public/causal_survival_analysis}{\texttt{\detokenize{github.com/Sanofi-Public/causal_survival_analysis}}}
, with function-level usage documented in the Appendix~\ref{appendix-b-code}.}

Building on these implementations, we compare the behaviors and performances of the estimators through
simulations conducted across various experimental contexts.
These contexts include scenarios based on RCTs and observational data, with both independent and dependent censoring (see Appendix~\ref{appendix-c-figures} for details about the data generating process). \textcolor{black}{ For each setting, we generated 100 independent datasets, computed the estimator on each, and summarized the resulting distribution using boxplots.}

\subsection{RCT}\label{sec-simulation-RCT}

We sample \(n\) i.i.d. observations \((X_i,A_i,C_i,T_i^{(0)},T_i^{(1)})_{i\in[n]}\) where covariates
\(X\sim \mathcal N(\mu,\Sigma)\), treatment is randomized with \(e(X)=\Pr(A=1\mid X)=0.5\), and the potential event times follow Cox models. Specifically, the control potential time has hazard $\lambda^{(0)}(t\mid X)=\lambda_0\,\exp\{\beta_0^\top X\},$ and the treated potential time is obtained via a constant additive shift, $T^{(1)} = T^{(0)} + 10.$ Censoring is either independent, with a constant hazard \(\lambda_C(t)=\lambda_{C,0}\) (Scenario 1), or conditionally independent given \(X\), with a Cox-type hazard $\lambda_C(t\mid X)=\lambda_{C,0}\,\exp\{\beta_C^\top X\}$ (Scenario 2). The truncation horizon is fixed at \(\tau=25\). 
For RCT scenarios 1 and 2, the ground truth value of the RMST difference at time $\tau=25$ was computed analytically and found to be approximately 7.1.

For each scenario, we estimate the RMST difference using the methods
summarized in Table~\ref{tbl-nuisance}. The methods used to estimate
the nuisance components are indicated in brackets: either logistic
regression or random forests for propensity scores and either Cox models
or survival random forests for survival and censoring models. A naive
estimator where censored observations are simply removed and the
survival time is averaged for treated and controls is also provided for
a naive baseline.

Figure~\ref{fig-rct1} shows the distribution of the RMST difference
for 100 simulations in Scenario 1 and different sample sizes: 500, 1000,
2000, 4000. The true value of \(\theta_{\mathrm{RMST}}\) is indicated by
red dashed line.  \textcolor{black}{For sake of clarity, only a subset of estimators with particularly informative or interesting results is displayed.}
\begin{figure}[h]

\centering{
\hspace{-0.5cm}
\includegraphics[width=0.9\textwidth]{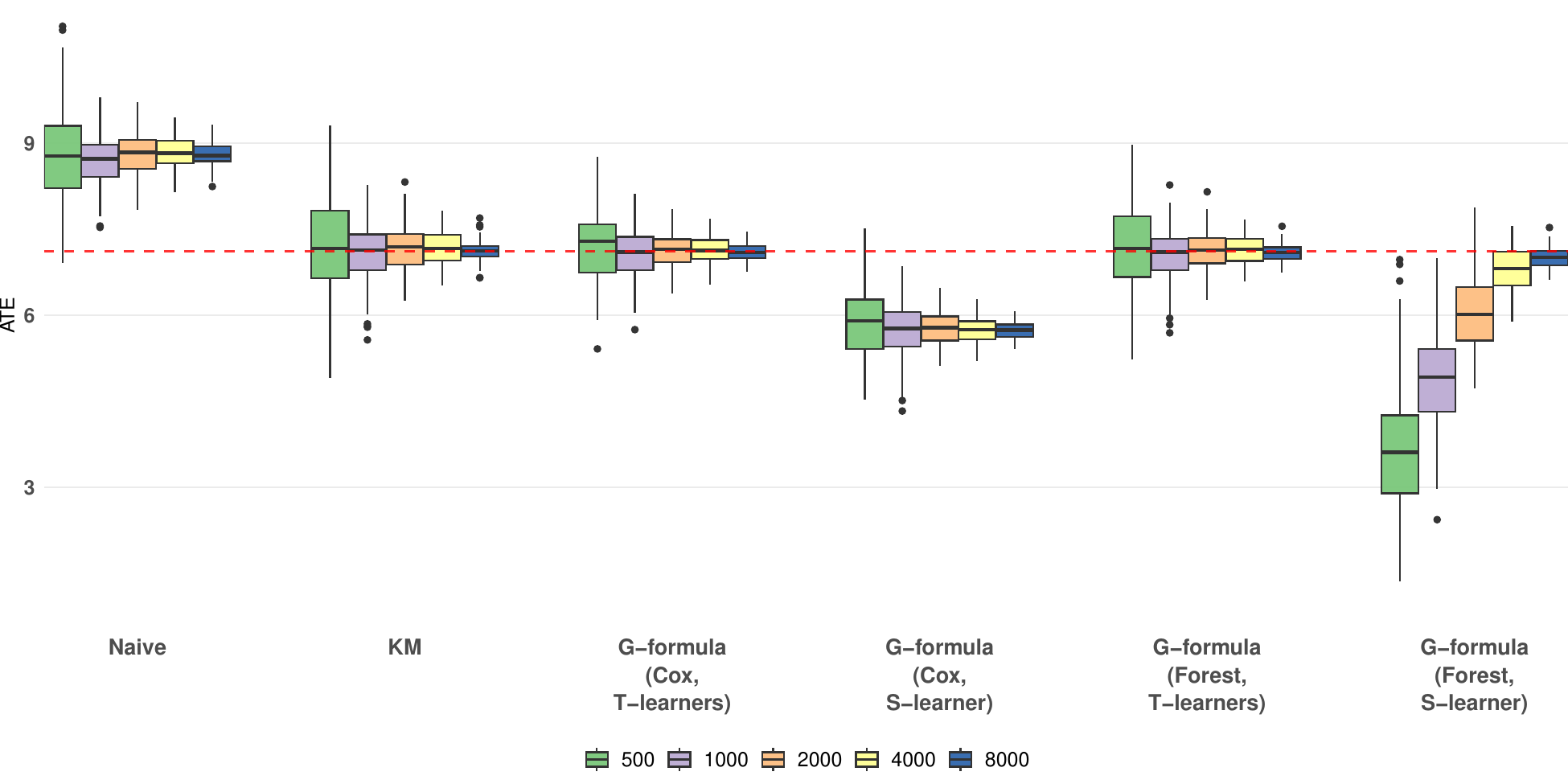}
}
\caption{\label{fig-rct1}{ATEs estimation in RCT setting  with independent censoring.}}

\end{figure}%
In this setting, and in accordance with the theory, the simplest
estimator (unadjusted KM) performs just as well as the others, and
presents, as expected, an extremely small bias.

The naive estimator is biased, as expected, and the bias in the G-formula S-learner arises because the treatment effect is additive \(T^ {(1)} = T^ {(0)} + 10\) and violates the assumption that \(T\) would follow a Cox model in the variables \((X,A)\). However, \(T|A=a\) is a Cox-model for \(a \in \{0,1\}\), which explains the remarkable performance of G-formula (Cox/T-learners) achieves the lowest finite-sample variance. \textcolor{black}{Thus, this discrepancy reflects misspecification of the working Cox model rather than a limitation of the S-learner framework itself.} Some forest-based versions also yield nearly unbiased results, although their performance can vary: for instance, the T-learner remains well-behaved, while the S-learner appears more sensitive and may require larger sample sizes to achieve convergence.
\textcolor{black}{For completeness, we have also evaluated the other estimators (e.g. IPTW KM, IPCW, IPCW KM, BJ, IPTW-IPCW, IPTW-IPCW KM, IPTW-BJ, AIPTW-AIPCW, causal survival forest) in Figure~\ref{fig-rct1_full} (Appendix~\ref{appendix-c-figures}). As expected, they yield no convergence gains, exhibit larger finite-sample variance, and rely on unnecessary model specification.}

\begin{figure}[h]
\centering{
\hspace{-0.5cm}
\includegraphics[width=0.9\textwidth]{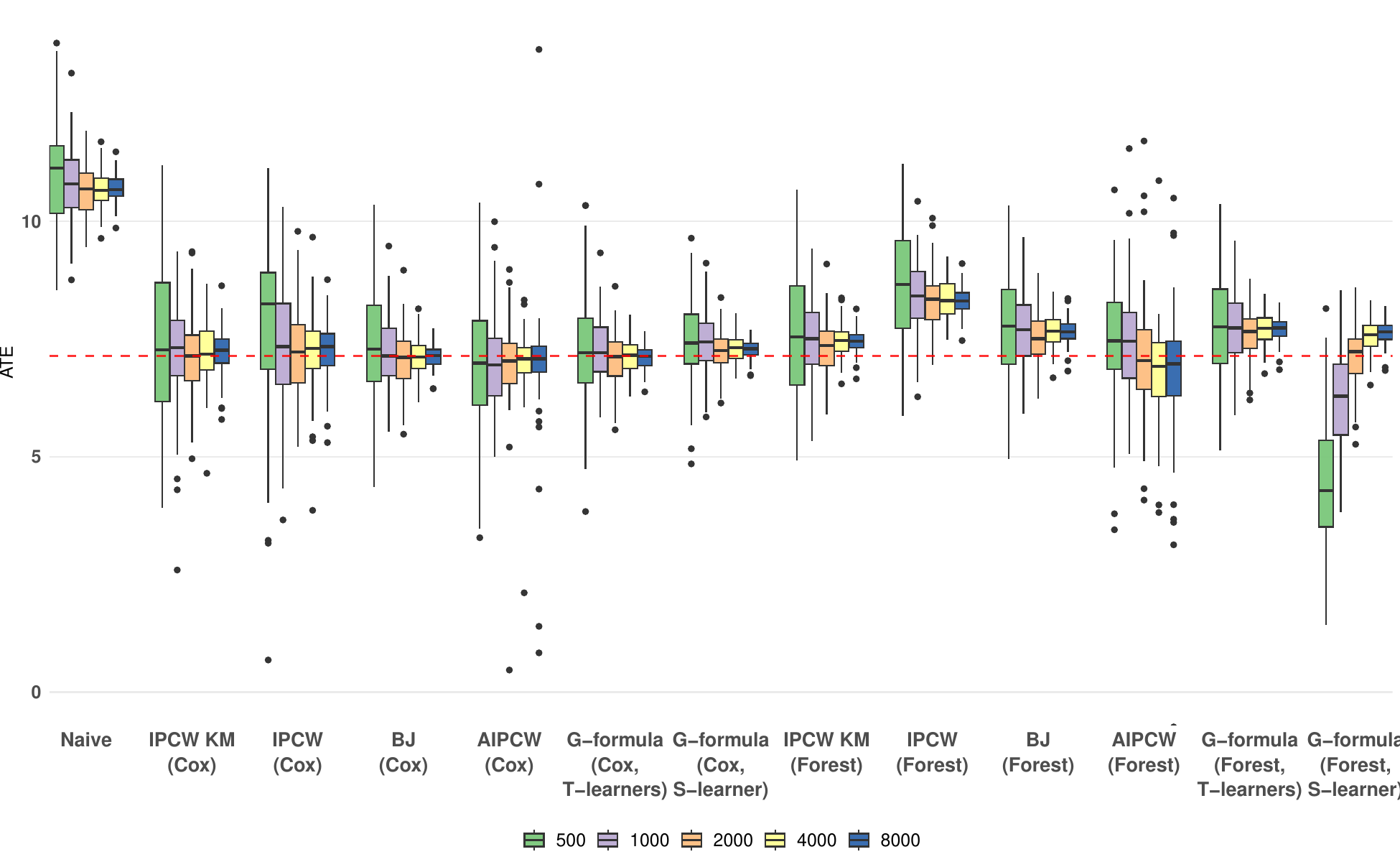}
}
\caption{\label{fig-rct2}{ATEs estimation in RCT setting  with conditionally independent censoring.}}
\end{figure}%

Figure~\ref{fig-rct2} shows the results for the RCT simulation with conditionally independent censoring (Scenario 2), focusing on a subset of estimators.
The naive estimator is biased, as expected. \textcolor{black}{The IPCW (Cox) and is slightly biased up to 2,000 observations. In contrast, both IPCW KM (Cox) and Buckley–James (Cox) and AIPCW (Cox) are unbiased even with only 500 observations, with  a smaller variance for BJ.} G-formula (Cox/T-learners)  performs well, even with small sample size. As in Scenario 1, G-formula (Cox/ S-learner) remains biased. 
The forest-based estimators are biased, \textcolor{black}{except for AIPCW, which is the most stable within this class. Although it still shows a small bias, it is markedly smaller than that of the other methods and appears to converge.}
Notably, all estimators exhibit higher variability compared to Scenario 1 likely \textcolor{black}{due to the higher censoring rate in this scenario and the estimation of censoring model.}

\textcolor{black}{For completeness, we have also evaluated the other estimators (e.g. IPTW KM, IPTW-IPCW, IPTW-IPCW KM, IPTW-BJ, AIPTW-AIPCW, causal survival forest) in Figure~\ref{fig-rct2_full} (Appendix~\ref{appendix-c-figures}). Introducing a superfluous treatment model via the propensity score inflates finite-sample variance, with a marked effect at 500 observations.}

\subsection{Observational data}\label{sec-simulation-Obs}

We simulate an observational study under both independent and conditionally independent censoring (as in Scenarios~1–2). The data consist of an i.i.d. \(n\)-sample \((X_i,A_i,C_i,T_i^{(0)},T_i^{(1)})_{i\in[n]}\) generated as in Section~\ref{sec-simulation-RCT}, except that treatment allocation now follows a logistic propensity model, $\operatorname{logit}(e(X))=\beta_A^\top X,$
so that \(e(X)\) is no longer constant. The survival data-generating mechanism is unchanged; hence the target parameter \(\theta_{\mathrm{RMST}}\) coincides with that of the RCT simulations, namely \(7.1\) (see Section~\ref{sec-simulation-RCT}).

Figure~\ref{fig-obs1} below shows the distribution of the estimators of
\(\theta_{\mathrm{RMST}}\) for the observational study with independent
censoring.
\begin{figure}[h]
\centering{
\hspace*{-0.05\textwidth}
\includegraphics[width=0.9\textwidth]{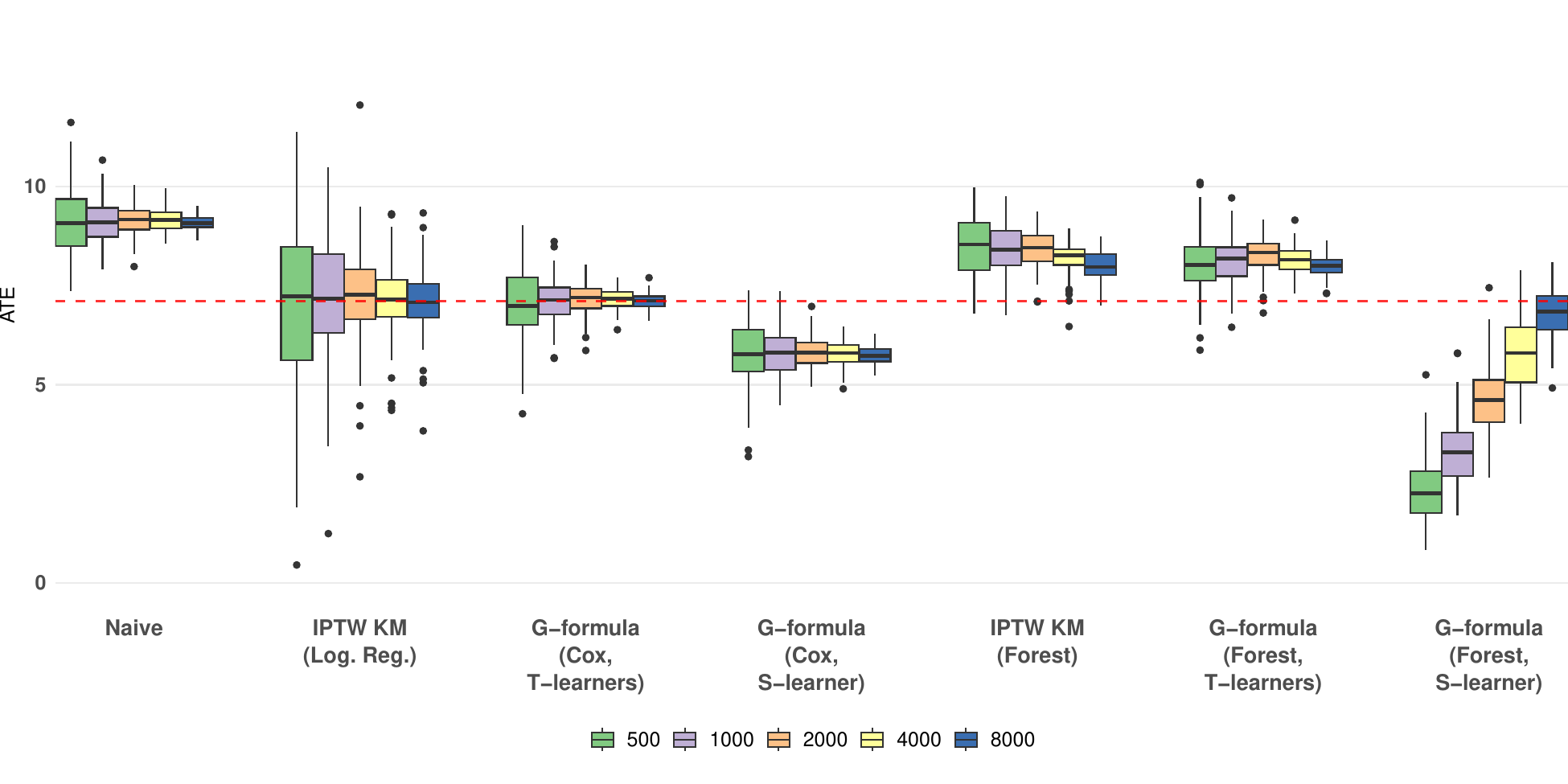}
}
\caption{\label{fig-obs1}{ATEs estimation in observational setting  with independent censoring.}}
\end{figure}%
As expected, the naive estimator is biased while IPTW KM (Log.Reg.) is unbiased. The top-performing estimator in this scenario is the G-formula (Cox/T-learners), which remains unbiased even with as few as 500 observations and achieves the lowest variance. All estimators that use forests to estimate nuisance parameters are biased across sample sizes from 500 to 8000. This setting thus
highlights that one should either have an a priori knowledge of the
specification of the models or large sample size.

Figure~\ref{fig-obs2} below shows the distribution of the
\(\theta_{\mathrm{RMST}}\) estimates for the observational study with
conditionally independent censoring.
\begin{figure}[h]
\centering{
\hspace*{-0.05\textwidth}
\includegraphics[width=0.95\textwidth]{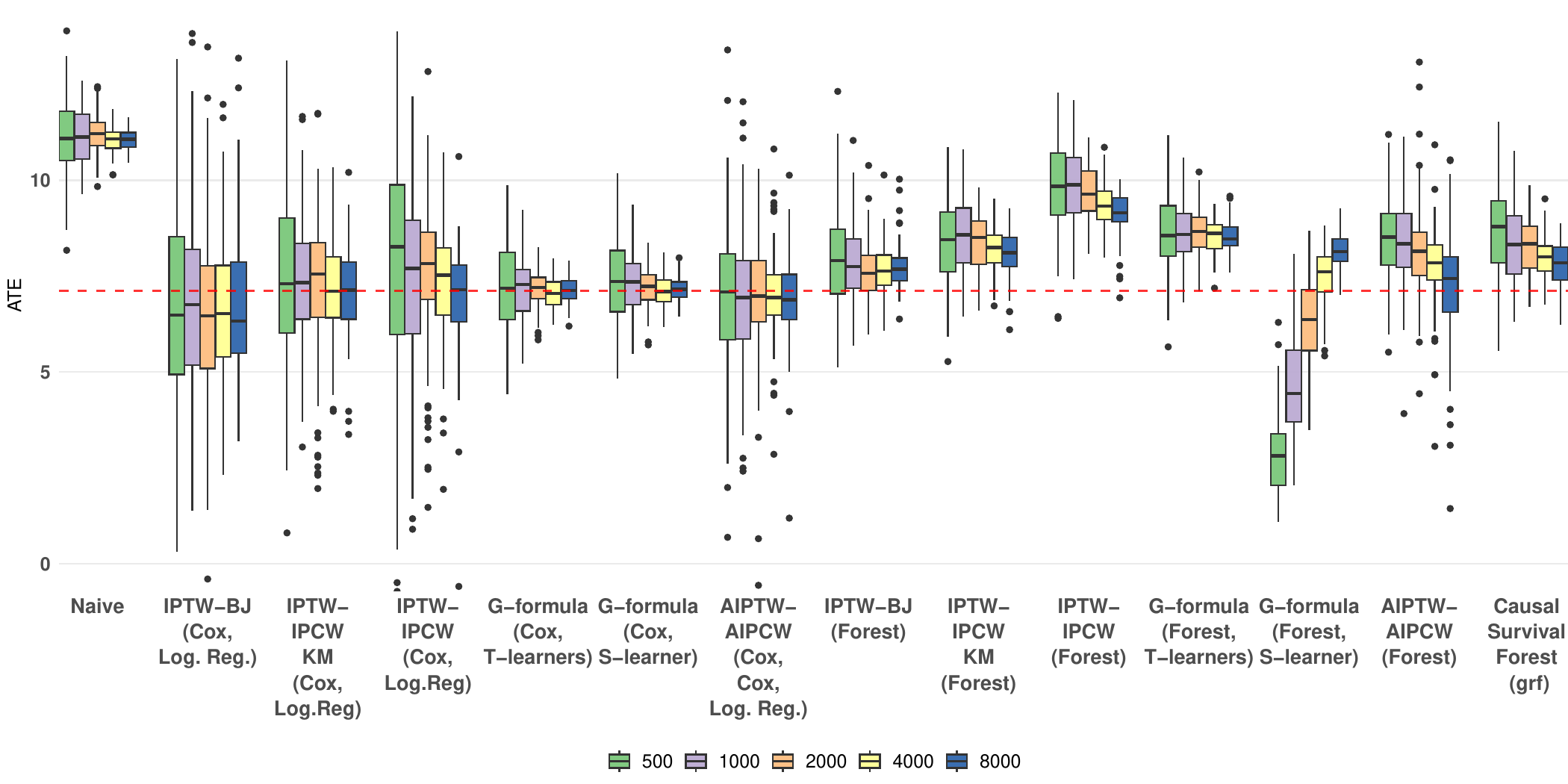}
}
\caption{\label{fig-obs2}{ATEs estimation in observational setting  with conditionally independent censoring.}}
\end{figure}%
The top-performing estimators in this scenario
are G-formula (Cox/T-learners) and AIPCW-AIPTW (Cox \& Cox \&
Log.Reg.), which are unbiased even with 500 observations. The G-formula (Cox/T-learners) has the lowest variance.
Surprisingly, the G-formula (Cox/S-learner) performs quite competitively, showing only a slight bias despite the violation of the proportional hazards assumption. \textcolor{black}{IPTW-IPCW KM (Cox \& Log.Reg.) and IPTW-IPCW (Cox \& Log.Reg.) exhibit different finite-sample behaviors. The Kaplan–Meier version is already unbiased with 500 observations, whereas the moment-based estimator requires around 8,000 observations to achieve unbiasedness. This pattern mirrors the results observed for IPCW and IPCW KM in the randomized trial scenario with conditionally independent censoring.}
All estimators that use forests to estimate nuisance parameters are biased across sample sizes from 500 to 8000. Although Causal Survival Forest \citep{Cui2023} and AIPTW-AIPCW (Forest) are expected to converge, they remain extremely demanding in terms of sample size. 

\textcolor{black}{For completeness, we have also evaluated the other estimators (e.g. KM, IPCW KM, IPCW, BJ, IPTW KM and package versions) in Figure~\ref{fig-obs2_full} (Appendix~\ref{appendix-c-figures})}.

\subsection{Misspecification of nuisance components}\label{sec-simulation-mis}

We generate an observational study with covariate interactions and conditionally independent censoring. The objective is to assess the impact of misspecifying nuisance components; specifically, we will use models that omit interactions to estimate these components. This
approach enables us to evaluate the robustness properties of various estimators. In addition, in this setting forest-based methods are expected to behave better.

We generate an i.i.d. sample \((X_i,A_i,C_i,T_i^{(0)},T_i^{(1)})_{i=1}^n\) where
\(X\sim\mathcal N(\mu,\Sigma)\), the potential event times and censoring follow Cox models with interactions, and treatment follows a logistic propensity with interactions. Specifically, the control potential time has hazard $\lambda^{(0)}(t\mid X)=\exp\{\beta_0^\top Z(X)\},$ the treated potential time is obtained via a constant shift,
$T^{(1)}=T^{(0)}+0.3,$ censoring follows $\lambda_C(t\mid X)=\exp\{\beta_C^\top Z(X)\},$ and the treatment allocation satisfies $\operatorname{logit}(e(X))=\beta_A^\top Z(X).$
Under correct specification, the nuisance models use \((X,Z(X))\) as inputs; under misspecification, only \(X\) and a strict subset of Z(X) (the main-effect terms) are supplied. The true RMST difference at time \(1\) is estimated to be \(0.21\).

We first confirm consistency by estimating \(\theta_{\mathrm{RMST}}\) at 10,000 observations with correctly specified nuisance models and assess robustness under (i) single-model misspecification (Appendix, Figure~\ref{fig-mis2}) and (ii) paired or more misspecifications—(outcome, censoring), (outcome, treatment), (treatment, censoring) and (outcome, treatment, censoring) (Appendix, Figure~\ref{fig-mis3}). \textcolor{black}{Given that, in the parametric setting, IPTW-IPCW KM and IPTW-IPCW behaved similarly asymptotically in the previous simulations, we report results for IPTW-IPCW KM only.}

\textcolor{black}{As expected, when exactly one nuisance component is misspecified (censoring, treatment, or outcome; see Appendix, Figure~\ref{fig-mis2}), AIPTW–AIPCW (Cox \& Cox \& Log.Reg) remains unbiased. In contrast, the other estimators are biased when any of their nuisance models is misspecified; for example, if the treatment model is misspecified, IPTW-IPCW KM is biased. This finding was previously reported in \cite{Ozenne20_AIPTW_AIPCW}.}

When two nuisance components are misspecified (see Appendix, figure~\ref{fig-mis3}), AIPTW-AIPCW (Cox \& Cox \& Log.Reg.) seems to converge in case where the treatment and censoring models are misspecificed.
\textcolor{black}{Furthemore, when all nuisance models are misspecified, all estimators exhibit bias.}

\section{Conclusion}\label{sec-conclusion}

\textbf{Summary of objectives and findings} \textcolor{black}{Restricted mean survival time (RMST)-based estimators provide a valuable alternative to hazard ratios, offering interpretable and robust effect estimates that are particularly well-suited for complex, real-world data. In this work, we provided a comprehensive overview of causal survival estimators targeting RMST, emphasizing both their theoretical foundations and practical performance. As well as implementation (available on a public \href{https://github.com/Sanofi-Public/causal_survival_analysis}{GitHub} repository), we reviewed a wide range of methods including regression-based models, weighting approaches, or hybrid estimators-and assessed their behavior in both their parametric and non-parametric versions through extensive simulations. These scenarios covered diverse settings, including randomized trials and observational data, with varying censoring mechanisms and model specifications. Particular attention was given to robustness, sensitivity to misspecification, and estimator variability in finite samples.}

\textbf{Practical recommendations for RCTs} \textcolor{black}{Whether subject to independent or dependent censoring, we advise against overcomplicating estimation strategies. Classical estimators like the Kaplan-Meier estimator (with weights for conditionally independent censoring if necessary) remain highly effective in these settings without requiring adjustment on covariates. We also recommend using the parametric G-formula. Although randomization can make certain covariates seem unnecessary, theoretical results \citep{Karrison2018} and our simulations show that adjusting for prognostic covariates reduces the asymptotic variance of the RMST-difference estimator. 
\textcolor{black}{The G-formula may be implemented as either an S- or a T-learner. While the S-learner appeared less competitive in some of our simulations, this was primarily attributable to misspecification of the Cox working model rather than to the learning framework itself.} We also advocate for the use of the Buckley-James (BJ) transformation in this context, which demonstrated excellent performance in our simulations, particularly in reducing bias under independent or dependent censoring without introducing excessive variability.} \textcolor{black}{The augmented IPCW (AIPCW) estimator likewise performed very well under conditionally independent censoring and remains particularly appealing due to its double-robust property. Moreover, when nuisance models are estimated nonparametrically, it is the only censoring transformation among those considered that retains formal asymptotic guarantees. However, it is substantially more demanding to implement in practice. By contrast, BJ is simpler and provides a practical and efficient alternative, although it remains sensitive to misspecification.}

\textbf{Practical recommendations for Observational studies} \textcolor{black}{Non-robust weighted estimators such as IPTW-IPCW and IPTW-BJ were consistently marked by high variability, particularly when applied to dependent censoring. Parametric G-formula estimators yielded the best performance under correct model specification (as in \cite{denz2023}), showing both low finite sample bias and variance. However, their sensitivity to model misspecification remains a limitation. Under dependent censoring, we therefore recommend more robust estimators such as AIPTW-AIPCW, which maintain good performance despite their complexity.} 
\textcolor{black}{Parametric estimators perform best when correctly specified; yet with interactions or nonlinearities, nonparametric methods can dominate, as they both avoid functional assumptions and automatically learn complex interactions that parametric models would require explicit specification for.} That said, parametric versions may still be preferable in small to moderate sample sizes, as they exhibited lower bias and variance in our simulations, especially for samples under 4,000 observations. In contrast, when sample sizes are sufficiently large, non-parametric estimators become viable alternatives. \textcolor{black}{In such cases, we specifically recommend the AIPTW-AIPCW estimator, since the nonparametric G-formula lacks formal asymptotic guarantees in this setting and remains inherently prone to plug-in bias, which may persist even in large samples \citep{chernozhukov2018}.}

\textbf{Limitations} \textcolor{black}{It is important to highlight that our simulation design may disadvantage nonparametric methods, as survival and censoring times were generated from exponential distributions and treatment assignments via logistic regression, naturally favoring parametric models.}

\textbf{Perspectives}
\textcolor{black}{Future work should address variable selection in causal survival analysis. In causal inference, selection serves for first identification to adjust for confounders while avoiding instrumental variables (IVs) that inflate variance \citep{lunceford2004} and second to efficiency—include precision variables predictive of outcome \citep{lunceford2004,rotnitzky2019}. With survival data, censoring introduces a third class: covariates tied only to censoring; whether to include them (e.g., in IPCW) is unclear, as they may reduce bias in RMST differences yet raise variance. Systematic evaluation of confounders, precision variables, IVs, and censoring-only covariates on RMST bias–variance across modeling strategies is warranted.}
\textcolor{black}{Other estimators could be benchmarked such as pseudo-observation approaches \citep{andersen2017, wang2018, vanhage2024} and targeted maximum likelihood estimation (TMLE) \citep{Stitelman2012, Talbot2025}.}

\subsubsection*{Acknowledgment}
The research was supported by Sanofi and INRIA (Institut National de Recherche en Informatique et en Automatique) through the CIFRE program (Convention Industrielle de Formation par la Recherche). Charlotte Voinot and Bernard Sebastien are Sanofi employees and may hold shares and/or stock options in the company.  
The work of CB was supported by the Deutsche Foschungsgemeinschaft (German Research Foundation) on the French-German PRCI ANR ASCAI CA 1488/4-1 “Aktive und Batch-Segmentierung, Clustering und Seriation: Grundlagen der KI”.
Imke Mayer and Julie Josse have nothing to disclose.

\vspace*{1pc}
\subsubsection*{Conflict of Interest}
The authors declare no conflicts of interest.

\bibliographystyle{apalike}  % ou plainnat, or authoryear
\bibliography{references}

\newpage

\section*{Appendix}
\phantomsection
\label{appendix}

\subsection*[Appendix A: Proofs]{A.1. Appendix A: Proofs}\label{appendix-a-proofs}

\paragraph{Convergence notions and consistency.} Let \((\hat\theta_n)_{n\in\mathbb{N}}\) be a sequence of estimators for a parameter \(\theta\). We recall three standard notions of convergence, commonly used to define statistical consistency:

\begin{enumerate}
  \item \textbf{Weak consistency} means that \(\hat\theta_n\) \textit{converges in probability} to \(\theta\), i.e.,
  \[
  \hat\theta_n \overset{P}{\to} \theta \quad \text{if and only if} \quad \forall \varepsilon > 0, \; \mathbb{P}(|\hat\theta_n - \theta| > \varepsilon) \to 0 \quad \text{as } n \to \infty.
  \]
  This implies that the estimator gets arbitrarily close to the true value with high probability as the sample size increases.

  \item \textbf{Strong consistency} means that \(\hat\theta_n\) \textit{converges almost surely} to \(\theta\), i.e.,
  \[
  \hat\theta_n \overset{\text{a.s.}}{\to} \theta \quad \text{if and only if} \quad \mathbb{P}\left( \lim_{n \to \infty} \hat\theta_n = \theta \right) = 1.
  \]
  This is a stronger form of convergence, requiring the sequence to converge pointwise on almost every sample path.

  \item \textbf{Convergence in distribution} (or \textbf{weak convergence} in the sense of probability theory) refers to the convergence of the distribution of \(\hat\theta_n\) to that of a limiting random variable \(Z\), not necessarily equal to \(\theta\). It is denoted:
  \[
  \hat\theta_n \overset{d}{\to} Z,
  \]
  and defined by the condition that for every bounded continuous function \(f\),
  \[
  \mathbb{E}[f(\hat\theta_n)] \to \mathbb{E}[f(Z)].
  \]
  This notion is typically used in the context of asymptotic distributional results, such as central limit theorems.
\end{enumerate}

\subsubsection*{A.1.1. Proofs of Section~\ref{sec-theoryRCT_indc}}\label{sec-proof21}

\textcolor{black}{Note that not all proofs are novel work. Consistency results on many survival estimators have been reported multiple times as explained in the main paper. For clarity we still recall them. We indicate when the proofs are not novel or when similar proofs exist elsewhere. When we indicate nothing, this means that we have not found those results in other published work.}

\begin{proposition}[Consistency of the Kaplan-Meier estimator]\protect\hypertarget{prp-km}{}\label{prp-km} (Adapted from \cite{gill1983}, \cite{zhou1988})

Under Assumptions
\ref{ass:sutva} (\nameref{ass:sutva}), \ref{ass:rta} (\nameref{ass:rta}),  \ref{ass:trialpositivity} (\nameref{ass:trialpositivity}), \ref{ass:independentcensoring} (\nameref{ass:independentcensoring})
and \ref{ass:poscen}  (\nameref{ass:poscen}), and for all \(t \in [0,\tau]\), the estimator
\(\widehat S_{\mathrm{KM}}(t|A=a)\) of \(S^{(a)}(t)\) is strongly
consistent and admits the following bounds for its bias: \[
0 \leqslant S^{(a)}(t) - \mathbb{E}[\widehat S_\mathrm{KM}(t|A=a)] \leqslant O(\mathbb{P}(N_k(a) = 0)),
\] where \(k\) is the greatest time \(t_k\) such that
\(t \geqslant t_k\).

\end{proposition}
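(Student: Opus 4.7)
The plan is to establish consistency and the bias bounds through two largely independent arguments: a strong law of large numbers combined with continuous mapping for consistency, and a filtration-based recursion for the bias.

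For strong consistency, I would observe that $D_k(a)$ and $N_k(a)$ are sums of i.i.d. indicators, so the SLLN yields
\[
\frac{D_k(a)}{n} \xrightarrow{\mathrm{a.s.}} \pi_a\,G(t_k)\,\mathbb{P}(T^{(a)} = t_k),\qquad \frac{N_k(a)}{n} \xrightarrow{\mathrm{a.s.}} \pi_a\,G(t_k)\,S^{(a)}(t_k^-),
\]
where $\pi_a = \mathbb{P}(A = a)$ and the joint distributions have been factored using Assumption~\ref{ass:independentcensoring}. Assumptions~\ref{ass:trialpositivity} and \ref{ass:poscen} keep the denominator limit bounded below by $\pi_a\varepsilon S^{(a)}(\tau) > 0$, so $D_k/N_k$ converges a.s.\ to the discrete hazard $h_k(a)$. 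Since the product in Definition~\ref{def-km} has a finite number of factors, the continuous mapping theorem delivers $\widehat S_\mathrm{KM}(t|A=a) \xrightarrow{\mathrm{a.s.}} \prod_{t_k \leq t}(1 - h_k(a)) = S^{(a)}(t)$.

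For the bias bound, I would introduce the filtration $\mathcal{F}_{k-}$ of information available just before $t_k$. Since $N_k$ is $\mathcal{F}_{k-}$-measurable and $D_k \mid \mathcal{F}_{k-}$ follows a $\mathrm{Bin}(N_k, h_k(a))$ distribution on $\{N_k > 0\}$ under Assumption~\ref{ass:independentcensoring}, one gets $\mathbb{E}[1 - D_k/N_k \mid \mathcal{F}_{k-},N_k > 0] = 1 - h_k$. Writing $Z_k = \prod_{j \leq k}(1 - D_j/N_j)$, $\mathcal{A}_k = \{N_j > 0,\,\forall j \leq k\}$, and adopting the convention that $Z_k$ vanishes on $\mathcal{A}_k^c$, the tower property gives
\[
\mathbb{E}[Z_k \mathbb{I}_{\mathcal{A}_k}] = (1 - h_k)\,\mathbb{E}\!\left[Z_{k-1}\mathbb{I}_{\mathcal{A}_{k-1}}\mathbb{I}_{\{N_k > 0\}}\right].
\]
Sandwiching the right-hand side between $(1-h_k)\bigl(\mathbb{E}[Z_{k-1}\mathbb{I}_{\mathcal{A}_{k-1}}] - \mathbb{P}(N_k = 0)\bigr)$ and $(1-h_k)\mathbb{E}[Z_{k-1}\mathbb{I}_{\mathcal{A}_{k-1}}]$, then iterating from $Z_0 = 1$, yields $S^{(a)}(t_k) - \sum_{j \leq k}\mathbb{P}(N_j(a)=0) \leq \mathbb{E}[\widehat S_\mathrm{KM}(t|A=a)] \leq S^{(a)}(t_k)$. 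Because the at-risk set is monotone decreasing in $k$, we have $\mathbb{P}(N_j(a)=0) \leq \mathbb{P}(N_k(a)=0)$ for $j \leq k$, which collapses the sum to $O(\mathbb{P}(N_k(a)=0))$.

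The exponential decay advertised in the proposition then follows from $\mathbb{P}(N_k(a)=0) = (1 - p_k)^n$ with $p_k := \mathbb{P}(\widetilde T \geq t_k, A = a) \geq \pi_a \varepsilon S^{(a)}(\tau) > 0$, a direct consequence of the same positivity assumptions. The delicate point will be the recursion for the bias: one must check carefully that $\mathbb{I}_{\mathcal{A}_{k-1}}\mathbb{I}_{\{N_k > 0\}}$ is indeed $\mathcal{F}_{k-}$-measurable so that the tower property applies, and use a consistent convention for the ambiguous $0/0$ factor (setting it so that $Z_k$ vanishes when some risk set empties), since the opposite convention would flip the sign of the bias and invalidate the claimed one-sided inequality. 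The consistency part, by contrast, is essentially routine once the positivity assumptions have been used to keep the ratios bounded away from the boundary.
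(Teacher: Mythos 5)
Your proposal is correct and follows essentially the same route as the paper: the strong law of large numbers plus continuous mapping for consistency, and a filtration-based recursion over the discrete event times (conditioning $D_k$ on the pre-$t_k$ information and controlling the $\{N_k(a)=0\}$ events) for the bias. Your treatment is in fact slightly more careful than the paper's on two points it leaves implicit — the sandwich argument that makes the bias genuinely one-sided, and the monotonicity $\{N_j(a)=0\}\subseteq\{N_k(a)=0\}$ for $j\leq k$ that collapses $\sum_{j\leq k}\mathbb{P}(N_j(a)=0)$ to $O(\mathbb{P}(N_k(a)=0))$ — but these are refinements of the same argument, not a different one.
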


The bound we give, although slightly looser, still exhibits the same
asymptotic regime. In particular, as soon as \(S^{(a)}(t) > 0\) (and
Assumption \ref{ass:poscen} holds), then the bias decays exponentially
fast towards \(0\).

\begin{proof}
\emph{(Proposition~\ref{prp-km}).} Consistency is a trivial consequence of the law of large number. To show that
\(\widehat S_{\mathrm{KM}}\) is unbiased, let us introduce
\(\mathcal{F}_k\) be the filtration generated by the set of variables \[
\{A_i, \mathbb{I}\{\widetilde T_i = t_j\}, \mathbb{I}\{\widetilde T_i = t_j, \Delta_i=1\}~|~j \in [k], i \in [n]\}.
\] which corresponds to the known information up to time \(t_k\), so
that \(D_k(a)\) is \(\mathcal{F}_k\)-measurable but \(N_k(a)\) is
\(\mathcal{F}_{k-1}\)-measurable. One can write that, for
\(k\geqslant 2\) \begin{align*}
\mathbb{E}[\mathbb{I}\{\widetilde T_i = t_k, \Delta_i = 1, A_i=a\}~|~\mathcal{F}_{k-1}] &= \mathbb{E}[\mathbb{I}\{\widetilde T_i = t_k, \Delta_i = 1, A_i=a\}~|~\mathbb{I}\{\widetilde T_i \geqslant t_k\},A_i]  \\
&= \mathbb{I}\{A_i=a\} \mathbb{E}[\mathbb{I}\{T_i = t_k, C_i \geqslant t_k\}~|~\mathbb{I}\{T_i \geqslant t_k, C_i \geqslant t_k\}, A_i] \\
&=  \mathbb{I}\{A_i=a\} \mathbb{I}\{C_i \geqslant t_k\} \mathbb{E}[\mathbb{I}\{T_i = t_k\} ~|~ \mathbb{I}\{T_i \geqslant t_k\}, A_i] \\
&= \mathbb{I}\{\widetilde T_i \geqslant t_k, A_i=a\}\left(1- \frac{S^{(a)}(t_{k})}{S^{(a)}(t_{k-1})}\right),
\end{align*} where we used that \(T_i(a)\) is independent from \(A_i\) by
Assumption~\ref{ass:rta} (\nameref{ass:rta}). We then easily derive from this that \[
\mathbb{E}\left[\left(1-\frac{D_k(a)}{N_k(a)} \right) \mathbb{I}\{N_k(a) >0\}\middle |\mathcal{F}_{k-1}\right] = \frac{S^{(a)}(t_k)}{S^{(a)}(t_{k-1})} \mathbb{I}\{N_k(a) >0\},
\] and then that \[
\mathbb{E}\left[\widehat S_{\mathrm{KM}}(t_k|A=a)\middle |\mathcal{F}_{k-1}\right] =  \frac{S^{(a)}(t_k)}{S^{(a)}(t_{k-1})} \widehat S_{\mathrm{KM}}(t_{k-1}|A=a) + O(\mathbb{I}\{N_k(a) =0\}),
\]

By induction, we easily find that \[
\mathbb{E}[\widehat S_{\mathrm{KM}}(t|A=a)] = \prod_{t_j \leqslant t} \frac{S^{(a)}(t_j)}{S^{(a)}(t_{j-1})} + O\left(\sum_{t_j \leqslant t}\mathbb{P}(N_j(a) =0)\right)= S^{(a)}(t) + O(\mathbb{P}(N_k(a) =0))
\] where \(t_k\) is the greatest time such that \(t_k \leqslant t\).
\end{proof}

\begin{proposition}[Variance of the Kaplan-Meier estimator]\protect\hypertarget{prp-varkm}{}\label{prp-varkm}(Adapted from \cite{Kaplan1958_1})

Under Assumptions
\ref{ass:sutva} (\nameref{ass:sutva}), \ref{ass:rta} (\nameref{ass:rta}), \ref{ass:trialpositivity} (\nameref{ass:trialpositivity}) , \ref{ass:independentcensoring} (\nameref{ass:independentcensoring}) and \ref{ass:poscen} (\nameref{ass:poscen}), and for all \(t \in [0,\tau]\),
\(\widehat S_{\mathrm{KM}}(t|A=a)\) is asymptotically normal and
\(\sqrt{n}\left(\widehat S_{\mathrm{KM}}(t|A=a) - S^{(a)}(t)\right)\)
converges in distribution towards a centered Gaussian of variance \[
V_{\mathrm{KM}}(t|A=a) := S^{(a)}(t)^2 \sum_{t_k \leqslant t} \frac{1-s_k(a)}{s_k(a) r_k(a)},
\] where \(s_k(a) = S^{(a)}(t_k)/S^{(a)}(t_{k-1})\) and
\(r_k(a) = \mathbb{P}(\widetilde T \geqslant t_k, A=a)\).

\end{proposition}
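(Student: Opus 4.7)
\textbf{Proof proposal for Proposition~\ref{prp-varkm}.} The plan is to combine a logarithmic transformation of the product-limit form with a martingale central limit theorem, and then invert via the delta method. First, on the high-probability event that $N_k(a)>0$ for all $t_k \leq t$ (which, under Assumption~\ref{ass:poscen}, fails only with probability $O(\exp(-cn))$ by Hoeffding), we can write
\[
\log \widehat S_{\mathrm{KM}}(t|A=a) = \sum_{t_k \leq t} \log\left(1 - \frac{D_k(a)}{N_k(a)}\right),
\]
so that it suffices to control each increment. Setting $\hat p_k := D_k(a)/N_k(a)$ and using the filtration $\mathcal{F}_k$ from the proof of Proposition~\ref{prp-km}, the computation in that proof shows that conditionally on $\mathcal{F}_{k-1}$ and $N_k(a)$, the variable $D_k(a)$ is Binomial with parameters $N_k(a)$ and $1-s_k(a)$. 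Hence $\mathbb{E}[\hat p_k \mid \mathcal{F}_{k-1}] = 1-s_k(a)$ and $\mathrm{Var}(\hat p_k \mid \mathcal{F}_{k-1}) = s_k(a)(1-s_k(a))/N_k(a)$.

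Next, by the strong law of large numbers applied to $N_k(a)/n$, we have $N_k(a)/n \to r_k(a)$ almost surely, where $r_k(a) > 0$ by Assumption~\ref{ass:poscen} combined with Assumption~\ref{ass:trialpositivity}. A Taylor expansion of $x \mapsto \log(1-x)$ around $1-s_k(a)$ yields
\[
\log(1-\hat p_k) - \log s_k(a) = -\frac{\hat p_k - (1-s_k(a))}{s_k(a)} + O_{\mathbb{P}}(n^{-1}),
\]
so that $\sqrt{n}(\log(1-\hat p_k) - \log s_k(a))$ has the same asymptotic behavior as $-\sqrt{n}(\hat p_k - (1-s_k(a)))/s_k(a)$. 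Because the centered increments $\hat p_k - (1-s_k(a))$ form a martingale difference sequence with respect to $(\mathcal{F}_k)$, their conditional variances stabilize to $s_k(a)(1-s_k(a))/(n\, r_k(a))$, and the summands of different indices $k$ are asymptotically uncorrelated.

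Applying the martingale central limit theorem to $\sqrt{n}\sum_{t_k \leq t}(\log(1-\hat p_k) - \log s_k(a))$, using that the Lindeberg condition holds because the per-step increments are uniformly $O_{\mathbb{P}}(1/\sqrt{n})$, gives the asymptotic normality
\[
\sqrt n \left(\log \widehat S_{\mathrm{KM}}(t|A=a) - \log S^{(a)}(t)\right) \;\xrightarrow{d}\; \mathcal{N}\!\left(0, \sum_{t_k \leq t} \frac{1-s_k(a)}{s_k(a)\, r_k(a)}\right).
\]
Finally, an application of the delta method to the exponential (which has derivative $S^{(a)}(t)$ at $\log S^{(a)}(t)$) transfers the result to $\widehat S_{\mathrm{KM}}(t|A=a)$ and yields the stated variance $V_{\mathrm{KM}}(t|A=a)$. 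The main technical hurdle is verifying the Lindeberg condition and the stabilization of the conditional variance uniformly across the finitely many grid times $t_k \leq t$; this is where the lower bound on $r_k(a)$ from Assumption~\ref{ass:poscen} (and the upper bound $s_k(a) < 1$ wherever the event has positive probability) is essential to keep each summand well-behaved.
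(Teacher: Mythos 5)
Your proof is correct in substance but takes a genuinely different route from the paper. The paper obtains asymptotic normality in one line, as a consequence of the joint (multivariate, i.i.d.-sum) CLT for the vector $(N_k(a),D_k(a))_{t_k\leqslant t}$ followed by the delta method applied to the map $(d_k,n_k)\mapsto \prod_k(1-d_k/n_k)$; it then identifies the limiting variance not through the CLT itself but through a direct second-moment recursion, computing $\mathbb{E}[(1-D_k(a)/N_k(a))^2\mid\mathcal{F}_{k-1}]$, substituting $N_k(a)=n r_k(a)+\sqrt{n}\,O_{\mathbb{P}}(1)$, and expanding the product of these conditional second moments to extract the $1/n$ term of $\mathrm{Var}\,\widehat S_{\mathrm{KM}}$. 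You instead pass to the logarithm, treat the centered increments $\hat p_k-(1-s_k(a))$ as a martingale difference array, invoke a martingale CLT, and return via the delta method applied to $\exp$; this is the classical Greenwood-style derivation, and it generalizes more readily (e.g.\ to counting-process formulations), whereas the paper's argument is more elementary in that it reduces everything to the i.i.d.\ CLT and a moment computation. Both yield the same $V_{\mathrm{KM}}(t|A=a)$.

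One step of yours needs repair: the Lindeberg condition does not hold in the form you invoke it. After the $\sqrt{n}$ rescaling, each summand $\xi_{n,k}=\sqrt{n}\bigl(\hat p_k-(1-s_k(a))\bigr)/s_k(a)$ is $O_{\mathbb{P}}(1)$, not $o_{\mathbb{P}}(1)$, and there are only finitely many of them, so $\sum_k\mathbb{E}\bigl[\xi_{n,k}^2\,\mathbb{I}\{|\xi_{n,k}|>\varepsilon\}\bigr]$ does not vanish; the standard martingale CLT for triangular arrays does not apply directly to this fixed-length array of non-negligible terms. The fix is routine: either refine the array to per-individual martingale differences (ordering individuals within each time $t_k$), each of which is deterministically $O(1/\sqrt n)$ so that Lindeberg holds trivially, or argue by iterated conditional characteristic functions, using that conditionally on $\mathcal{F}_{k-1}$ each $\xi_{n,k}$ satisfies its own CLT with limiting variance $(1-s_k(a))/(s_k(a)r_k(a))$ since $N_k(a)/n\to r_k(a)$ almost surely. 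With that adjustment your argument is complete.
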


The proof of Proposition~\ref{prp-varkm} is derived below. Because \(D_k(a)/N_k(a)\) is a natural
estimator of \(1-s_k(a)\) and, \(\frac{1}{n} N_k(a)\) a natural
estimator for \(r_k(a)\), the asymptotic variance of the Kaplan-Meier
estimator can be estimated with the so-called Greenwood formula, as
already derived heuristically in \cite{Kaplan1958_1}:

\begin{equation}\phantomsection\label{eq-varkm}{
\widehat{\mathrm{Var}} \left(\widehat{S}_{\mathrm{KM}}(t|A=a)\right) := \widehat{S}_{\mathrm{KM}}(t|A=a)^2 \sum_{t_k \leqslant t} \frac{D_k(a)}{N_k(a)(N_k(a)-D_k(a))}.
}\end{equation}

\begin{proof}
\emph{(Proposition~\ref{prp-varkm}).} The asymptotic normality is a mere
consequence of the joint asymptotic normality of
\((N_k(a),D_k(a))_{t_k \leqslant t}\) with an application of the
\(\delta\)-method. To access the asymptotic variance, notice that, using
a similar reasonning as in the previous proof: \begin{align*}
\mathbb{E}[(1-D_k(a)/N_k(a))^2|\mathcal{F}_{k-1}] &= \mathbb{E}[1-D_k(a)/N_k(a)|\mathcal{F}_{k-1}(a)]^2+\frac{1}{N_k(a)^2}\mathrm{Var}(D_k(a)|\mathcal{F}_{k-1}) \\
&= s_k^2(a)+ \frac{s_k(a)(1-s_k(a))}{N_k(a)}\mathbb{I}\{N_k(a) > 0\} + O(\mathbb{I}\{N_k(a) = 0\}). 
\end{align*} Now we know that
\(N_k(a) = n r_k(a) + \sqrt{n} O_{\mathbb{P}}(1)\), with the
\(O_{\mathbb{P}}(1)\) having uniformly bounded moments. So that we
deduce that \[
\begin{aligned}
\mathbb{E}[(1-D_k(a)/N_k(a))^2|\mathcal{F}_{k-1}] &= s_k^2(a)+ \frac{s_k(a)(1-s_k(a))}{n r_k(a)} +  \frac{1}{n^{3/2}} O_{\mathbb{P}}(1), 
\end{aligned}
\] where \(O_{\mathbb{P}}(1)\) has again bounded moments. Using this
identity, we find that \[
\begin{aligned}
n \mathrm{Var}\widehat S_{\mathrm{KM}} (t|A=a) &= n \left(\mathbb{E}S_{\mathrm{KM}} (t|A=a)^2 -  S^{(a)} (t)^2 \right) \\
&= n S^{(a)}(t)^2 \left(\mathbb{E}\left[\prod_{t_k \leqslant t}  \left(1+\frac1n\frac{1-s_k(a)}{s_k(a) r_k(a)} +   \frac{1}{n^{3/2}} O_{\mathbb{P}}(1) \right)\right]-1\right).
\end{aligned}
\] Expending the product and using that the \(O_{\mathbb{P}}(1)\)'s have
bounded moments, we finally deduce that \[
\begin{aligned}
\mathbb{E}\left[\prod_{t_k \leqslant t}  \left(1+\frac1n\frac{1-s_k(a)}{s_k(a) r_k(a)} +   \frac{1}{n^{3/2}} O_{\mathbb{P}}(1) \right)\right]-1 =   \frac1n \sum_{t_k \leqslant t}\frac{1-s_k(a)}{s_k(a) r_k(a)} +   \frac{1}{n^{3/2}} O(1),
\end{aligned}
\]

\[
n\mathrm{Var}\widehat S_{\mathrm{KM}} (t|A=a) = V_{\mathrm{KM}}(t|A=a) + O(n^{-1/2}),  
\] which is what we wanted to show.
\end{proof}

\subsubsection*{A.1.2. Proofs of Section~\ref{sec-condcens}}\label{sec-proof22}

\begin{proposition}[IPCW is a Censoring Unbiased Transformation]\protect\hypertarget{prp-ipcw}{}\label{prp-ipcw}

Under Assumptions \ref{ass:sutva} (\nameref{ass:sutva}), \ref{ass:rta} (\nameref{ass:rta}) , \ref{ass:trialpositivity} (\nameref{ass:trialpositivity}), \ref{ass:condindepcensoring} (\nameref{ass:condindepcensoring})
and \ref{ass:positivitycensoring} (\nameref{ass:positivitycensoring}), the IPCW transform \ref{eq-defipcw} is
a censoring-unbiased transformation in the sense of
Equation~\ref{eq-cut}.

\end{proposition}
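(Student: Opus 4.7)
\textbf{Proof plan for Proposition~\ref{prp-ipcw}.} The goal is to verify that, under the stated assumptions,
\[
\mathbb{E}\!\left[\frac{\Delta^\tau}{G(\widetilde T \wedge \tau \mid X,A)}\,(\widetilde T \wedge \tau) \,\middle|\, A=a, X\right] = \mathbb{E}[T^{(a)} \wedge \tau \mid X] \quad \text{almost surely.}
\]
The plan is to condition on $(X, A=a, T^{(a)})$, simplify the numerator and denominator using SUTVA, and then apply conditional independence between $C$ and $T^{(a)}$ given $(X,A)$.

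The first step is to rewrite the integrand on the event $\{A=a\}$. By SUTVA~\ref{ass:sutva}, we have $T = T^{(a)}$ when $A=a$, so $\widetilde T = T^{(a)} \wedge C$ and $\Delta^\tau = \mathbb{I}\{T^{(a)} \wedge \tau \leq C\}$. On the event $\{\Delta^\tau = 1\} = \{T^{(a)} \wedge \tau \leq C\}$, we further have $\widetilde T \wedge \tau = T^{(a)} \wedge \tau$; hence the product $\Delta^\tau (\widetilde T \wedge \tau)$ equals $\mathbb{I}\{T^{(a)} \wedge \tau \leq C\} \cdot (T^{(a)} \wedge \tau)$ and $G(\widetilde T \wedge \tau \mid X, A)$ may be replaced by $G(T^{(a)} \wedge \tau \mid X, A)$ on the same event (and the whole fraction vanishes on its complement).

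Next, I would take conditional expectation with respect to the richer $\sigma$-algebra generated by $(X, A, T^{(a)})$. The only random quantity left inside the expectation is $\mathbb{I}\{T^{(a)} \wedge \tau \leq C\}$. Under Assumption~\ref{ass:condindepcensoring}, $C$ is independent of $T^{(a)}$ given $(X,A)$, so
\[
\mathbb{E}\!\left[\mathbb{I}\{T^{(a)} \wedge \tau \leq C\} \,\middle|\, X, A=a, T^{(a)}\right] = \mathbb{P}(C \geq T^{(a)} \wedge \tau \mid X, A=a, T^{(a)}) = G(T^{(a)} \wedge \tau \mid X, A=a),
\]
where the denominator is bounded away from zero by Assumption~\ref{ass:positivitycensoring}, so the IPCW fraction is well defined almost surely. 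Plugging this in, the two copies of $G$ cancel and one obtains $\mathbb{E}[T^*_{\mathrm{IPCW}} \mid X, A=a, T^{(a)}] = T^{(a)} \wedge \tau$ almost surely.

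Finally, by the tower property,
\[
\mathbb{E}[T^*_{\mathrm{IPCW}} \mid X, A=a] = \mathbb{E}[T^{(a)} \wedge \tau \mid X, A=a] = \mathbb{E}[T^{(a)} \wedge \tau \mid X],
\]
where the last equality uses Assumption~\ref{ass:rta} to remove the conditioning on $A$. The main technical care is simply in the bookkeeping around the event $\{\Delta^\tau = 1\}$ and the identification $\widetilde T \wedge \tau = T^{(a)} \wedge \tau$ on that event; no difficult analytical step is required beyond the conditional-independence argument.
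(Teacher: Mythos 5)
Your proof is correct and follows essentially the same route as the paper's: rewrite the integrand on the event $\{A=a,\ \Delta^\tau=1\}$ using SUTVA so that $\widetilde T\wedge\tau$ becomes $T^{(a)}\wedge\tau$, condition on $(X,A,T^{(a)})$ so that conditionally independent censoring gives $\mathbb{E}[\Delta^\tau\mid X,A,T^{(a)}]=G(T^{(a)}\wedge\tau\mid X,A)$, cancel, and finish with the tower property and random treatment assignment to drop the conditioning on $A$. No gaps; the bookkeeping around $\{\Delta^\tau=1\}$ and the use of the left-limit $G$ match the paper's argument.
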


\begin{proof}
\emph{(Proposition~\ref{prp-ipcw}).} Assumption
\ref{ass:positivitycensoring} allows the transformation to be
well-defined. Furthermore, it holds \begin{align*}
E[T^*_{\mathrm{IPCW}}|A=a,X] 
&= E\left[\frac{ \Delta^\tau \times\widetilde T \wedge \tau}{G(\widetilde T \wedge \tau | A,X)} \middle | A = a,X\right]  \\
&= E\left[\frac{ \Delta^\tau \times T^ {(a)} \wedge \tau}{G(T^ {(a)} \wedge \tau | A,X)} \middle | A = a,X\right]\\
&= E\left[ E\left[\frac{ \mathbb{I}\{T^ {(a)}\wedge \tau \leqslant C \} \times T^ {(a)} \wedge \tau}{G(T^ {(a)} \wedge \tau | A,X)}\middle| A, X,T^ {(a)} \right] \middle | A = a,X\right] \\
&= E\left[T^ {(a)} \wedge \tau|A=a, X\right] \\
&= E\left[T^ {(a)} \wedge \tau | X \right].
\end{align*} We used in the second equality that on the event
\(\{\Delta^\tau=1, A=a\}\), it holds
\(\widetilde T \wedge \tau =  T \wedge \tau = T^ {(a)} \wedge \tau\). We
used in the fourth equality that
\(G(T^ {(a)} \wedge \tau | A,X) = E[\mathbb{I}\{T^ {(a)} \wedge \tau \leqslant C\}|X,T^ {(a)},A]\)
thanks to Assumption \ref{ass:condindepcensoring}, and in the last one
that \(A\) is independent of \(X\) and \(T^ {(a)}\) thanks to Assumption
\ref{ass:rta}.
\end{proof}

\begin{proposition}[Consistency of $S^*_{\mathrm{IPCW}}(t|A=a)$ with known $G$]\protect\hypertarget{prp-ipcwkm}{}\label{prp-ipcwkm}

Under Assumptions~\ref{ass:sutva} (\nameref{ass:sutva}), \ref{ass:rta} (\nameref{ass:rta}), \ref{ass:trialpositivity} (\nameref{ass:trialpositivity}), \ref{ass:condindepcensoring} (\nameref{ass:condindepcensoring})
and \ref{ass:positivitycensoring} (\nameref{ass:positivitycensoring}), and for all \(t \in [0,\tau]\), the
oracle estimator \(S^*_{\mathrm{IPCW}}(t|A=a)\) defined as in
Definition~\ref{def-ipcwkm} with \(\widehat G = G\) is a stronlgy
consistent and asymptotically normal estimator of \(S^{(a)}(t)\) .

\end{proposition}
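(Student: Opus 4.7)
The plan is to mirror the LLN/CLT strategy used for the classical Kaplan--Meier estimator (Propositions~\ref{prp-km} and \ref{prp-varkm}), but replacing the martingale-based bias calculation by a direct analysis of i.i.d. IPCW-weighted summands. Since $\widehat G = G$ in the oracle case, each of $D_k^{\mathrm{IPCW}}(a)/n$ and $N_k^{\mathrm{IPCW}}(a)/n$ is a normalized sum of i.i.d.\ random variables whose summands are bounded in absolute value by $1/\varepsilon$ (by Assumption~\ref{ass:positivitycensoring}), so the SLLN and the multivariate CLT apply immediately. The structure of $S^*_{\mathrm{IPCW}}(t\mid A=a)$ as a smooth (finite) product of ratios of these averages will then deliver both properties via continuous mapping and the delta method.

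The key computation is to identify the population targets of these averages. For $N_k^{\mathrm{IPCW}}(a)$, I would condition on $(X_i,T_i^{(a)},A_i)$ and use Assumption~\ref{ass:condindepcensoring} to integrate out $C_i$:
\[
\mathbb{E}\!\left[\frac{\mathbb{I}\{\widetilde T_i\geqslant t_k,\,A_i=a\}}{G(t_k\mid X_i,A=a)}\right]
=\mathbb{E}\!\left[\frac{\mathbb{I}\{T_i^{(a)}\geqslant t_k\}\mathbb{I}\{A_i=a\}\,G(t_k\mid X_i,A_i)}{G(t_k\mid X_i,A=a)}\right]
=\pi\,S^{(a)}(t_{k-1}),
\]
where the last equality uses Assumption~\ref{ass:rta} to decouple $A_i$ from $(X_i,T_i^{(a)})$. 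An analogous calculation (noting that for $t_k\leqslant\tau$, on $\{\widetilde T_i=t_k,A_i=a\}$ one has $\Delta_i^\tau=\mathbb{I}\{T_i^{(a)}=t_k\leqslant C_i\}$) yields
\[
\mathbb{E}\!\left[\frac{\Delta_i^\tau\,\mathbb{I}\{\widetilde T_i=t_k,A_i=a\}}{G(t_k\mid X_i,A=a)}\right]=\pi\bigl(S^{(a)}(t_{k-1})-S^{(a)}(t_k)\bigr).
\]
By the SLLN, the ratio $D_k^{\mathrm{IPCW}}(a)/N_k^{\mathrm{IPCW}}(a)$ therefore converges a.s.\ to $1-s_k(a)$ with $s_k(a)=S^{(a)}(t_k)/S^{(a)}(t_{k-1})$, and a.s.\ continuity of the finite product gives
\[
S^*_{\mathrm{IPCW}}(t\mid A=a)\xrightarrow{\text{a.s.}}\prod_{t_k\leqslant t} s_k(a)=S^{(a)}(t),
\]
i.e.\ strong consistency.

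For asymptotic normality I would stack the centered averages $\bigl(n^{-1}D_k^{\mathrm{IPCW}}(a),n^{-1}N_k^{\mathrm{IPCW}}(a)\bigr)_{t_k\leqslant t}$ into a vector of i.i.d.\ means with finite (bounded) second moments and apply the multivariate CLT, yielding joint asymptotic normality around the population vector identified above at rate $\sqrt n$. Since the map $(d_k,n_k)_k\mapsto\prod_k(1-d_k/n_k)$ is $C^1$ on a neighborhood of that limit point (each $n_k$ is bounded away from $0$ by positivity of $\pi$ and of $S^{(a)}(t_{k-1})$), the delta method produces asymptotic normality of $\sqrt n\bigl(S^*_{\mathrm{IPCW}}(t\mid A=a)-S^{(a)}(t)\bigr)$ with a centered Gaussian limit.

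The main obstacle is not the consistency or the CLT themselves, which are routine once the two conditional expectations above are computed, but rather producing a clean closed-form for the limiting variance: unlike the classical KM case, the IPCW weights destroy the martingale-difference structure exploited in Proposition~\ref{prp-varkm}, so the covariances between consecutive $(D_k^{\mathrm{IPCW}},N_k^{\mathrm{IPCW}})$ no longer vanish in a tractable way. Since the statement only claims consistency and asymptotic normality (not an explicit variance), I would simply package the delta-method output as an unnamed positive variance, consistent with the paper's later remark that explicit expressions for this variance are known to be opaque in practice.
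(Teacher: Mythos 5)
Your proposal is correct and follows essentially the same route as the paper: compute the population means of the weighted death and at-risk counts by conditioning on $(X,T^{(a)},A)$ and using conditionally independent censoring together with random treatment assignment, obtain $\mathbb{P}(A=a)\,\mathbb{P}(T^{(a)}=t_k)$ and $\mathbb{P}(A=a)\,\mathbb{P}(T^{(a)}\geqslant t_k)$ respectively, then conclude by the SLLN, continuity of the finite product, and the delta method (the paper dismisses normality as ``straightforward'' where you spell out the multivariate CLT plus delta-method step). The only cosmetic slip is writing $\pi$ where $\mathbb{P}(A=a)$ is meant for general $a$, which is harmless since the factor cancels in the ratio $D_k^{\mathrm{IPCW}}(a)/N_k^{\mathrm{IPCW}}(a)$.
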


\begin{proof}
\emph{(Proposition~\ref{prp-ipcwkm}).} Similarly to the computations
done in the proof of Proposition~\ref{prp-ipcw}, it is easy to show that, for $t_k \leq \tau$,
\begin{align*}
\mathbb{E}\left[\frac{\Delta^\tau}{G( t_k | X,A)} \mathbb{I}(\widetilde T = t_k, A=a)\right] &= \mathbb{E}\left[\frac{\mathbb{I}(t_k \leq C)}{G( t_k | X,A)} \mathbb{I}(T^{(a)} = t_k, A=a)\right], \\
&= \mathbb{P}(A=a)\mathbb{P}(T^{(a)}=t_k),
\end{align*}
and likewise that 
 \begin{align*}
     \mathbb{E}\left[\frac{1}{G(t_k | X,A)} \mathbb{I}(\widetilde T \geqslant t_k, A=a)\right] &=\mathbb{E}\left[\frac{1}{G(t_k | X,A)} \mathbb{I}(T^{(a)} \wedge C \geqslant t_k, A=a)\right] \\
     &=\mathbb{E}\left[\frac{\mathbb{I}(T^{(a)} \geq t_k)\mathbb{I}(A=a)}{G(t_k|X,A)}\mathbb{E}\left[ \mathbb{I}(C \geqslant t_k)|X,A\right]\right] \\
     &= \mathbb{P}(A=a) \mathbb{P}(T^{(a)} \geq t_k),
 \end{align*}
so that \(\widehat S_{\mathrm{IPCW}}(t)\) converges almost surely
towards the product limit \[
\prod_{t_k \leqslant t} \left(1-\frac{\mathbb{P}(T^ {(a)} = t_k)}{\mathbb{P}(T^ {(a)} \geqslant t_k)}\right) = S^{(a)}(t),
\] yielding strong consistency. Asymptotic normality is straightforward.
\end{proof}

\begin{proposition}[BJ is a censoring unbiased transformation]\protect\hypertarget{prp-bj}{}\label{prp-bj}

Under Assumptions\ref{ass:sutva} (\nameref{ass:sutva}), \ref{ass:rta} (\nameref{ass:rta}) , \ref{ass:condindepcensoring} (\nameref{ass:condindepcensoring})
and \ref{ass:positivitycensoring} (\nameref{ass:positivitycensoring}), the BJ transform \ref{eq-defbj} is a
censoring-unbiased transformation in the sense of Equation~\ref{eq-cut}.

\end{proposition}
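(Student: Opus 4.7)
The plan is to compute $\mathbb{E}[T^*_{\mathrm{BJ}} \mid A=a, X]$ directly by conditioning on the censoring variable $C$, then exploit the defining property of $Q_S$ as a conditional expectation to obtain the required identity $\mathbb{E}[T^*_{\mathrm{BJ}} \mid A=a, X] = \mathbb{E}[T^{(a)} \wedge \tau \mid X]$.

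First, I would use Assumption~\ref{ass:sutva} to rewrite the transformation on $\{A=a\}$ in terms of $Y := T^{(a)} \wedge \tau$. Since $\widetilde T = T^{(a)} \wedge C$ on $\{A=a\}$ and $\Delta^\tau = \mathbb{I}\{Y \leq C\}$, the transformation reduces to
\[
T^*_{\mathrm{BJ}} \;=\; \mathbb{I}\{Y \leq C\}\, Y \;+\; \mathbb{I}\{Y > C\}\, Q_S(C \mid X, a).
\]
Next, I would condition on $(X, A=a, C=c)$. By Assumption~\ref{ass:condindepcensoring}, $C \perp\!\!\!\perp (T^{(0)}, T^{(1)}) \mid X, A$, so $C \perp\!\!\!\perp Y \mid X, A$, and the conditional law of $Y$ given $(X, A=a, C=c)$ coincides with that given $(X, A=a)$. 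This yields
\[
\mathbb{E}[T^*_{\mathrm{BJ}} \mid X, A=a, C=c] \;=\; \mathbb{E}[Y \mathbb{I}\{Y \leq c\} \mid X, A=a] \;+\; Q_S(c \mid X,a)\, S(c \mid X,a).
\]

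The key observation is that the definition $Q_S(c \mid X, a) = \mathbb{E}[Y \mid X, A=a, Y > c]$ directly gives
\[
Q_S(c \mid X, a)\, S(c \mid X, a) \;=\; \mathbb{E}[Y \mathbb{I}\{Y > c\} \mid X, A=a].
\]
Substituting back, the two indicator terms reassemble to $\mathbb{E}[Y \mid X, A=a]$, independent of $c$. Integrating over $C$ then yields $\mathbb{E}[T^*_{\mathrm{BJ}} \mid X, A=a] = \mathbb{E}[T^{(a)} \wedge \tau \mid X, A=a]$, and a final appeal to Assumption~\ref{ass:rta} (which gives $A \perp\!\!\!\perp T^{(a)} \mid X$) collapses this to $\mathbb{E}[T^{(a)} \wedge \tau \mid X]$, as required.

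The hard part is essentially bookkeeping rather than substance: one must handle the point mass of $Y$ at $\tau$ when equating $\mathbb{P}(Y > c \mid X, A=a)$ with $S(c \mid X, a)$ (these agree on $[0,\tau)$, where all relevant $c$ lie since $C < Y \leq \tau$ on the event $\{Y > C\}$), and check that $Q_S(c \mid X, a)$ is only evaluated where $S(c \mid X, a) > 0$ and is therefore well-defined — on the complementary null set, $\mathbb{I}\{Y > c\}$ vanishes almost surely and the contribution is zero regardless of the convention chosen. Assumption~\ref{ass:positivitycensoring} plays no direct role in this particular identity but ensures integrability for the downstream constructions built on $T^*_{\mathrm{BJ}}$.
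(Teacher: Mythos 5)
Your proof is correct and follows essentially the same route as the paper's: both split $T^*_{\mathrm{BJ}}$ into the uncensored and censored contributions and use the defining identity $Q_S(c\mid X,a)\,\mathbb{P}(T\wedge\tau>c\mid X,A=a)=\mathbb{E}[(T\wedge\tau)\,\mathbb{I}\{T\wedge\tau>c\}\mid X,A=a]$ (via conditional independence of $C$ and $T^{(a)}$ given $(X,A)$) to cancel the denominator, then invoke random treatment assignment to drop the conditioning on $A$. The only difference is presentational — you condition on $C=c$ pointwise where the paper applies the tower property with respect to $(C,A,X)$ — and your added remarks on the point mass at $\tau$ and on where $Q_S$ must be well-defined are accurate refinements of the same argument.
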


\begin{proof}
\emph{(Proposition~\ref{prp-bj}).} There holds \[
\begin{aligned}
\mathbb{E}[T_{\mathrm{BJ}}^*|X,A=a] &= \mathbb{E}\left[\Delta^\tau T^ {(a)} \wedge \tau + (1-\Delta^\tau)\frac{\mathbb{E}[T \wedge \tau \times \mathbb{I}\{T \wedge \tau > C\}|C,A,X]}{\mathbb{P}(T > C|C,A,X)} \middle | X,A=a \right] \\
&= \mathbb{E}[\Delta^\tau T^ {(a)} \wedge \tau | X ] +  \underbrace{\mathbb{E}\left[ \mathbb{I}\{T \wedge \tau > C\} \frac{\mathbb{E}[T \wedge \tau \times \mathbb{I}\{T \wedge \tau > C\}|C,A,X]}{\mathbb{E}[\mathbb{I}\{T \wedge \tau \geqslant C\}|C,A,X]}\middle | X,A=a \right]}_{(\star)}.
\end{aligned}
\] Now we easily see that conditionning wrt \(X\) in the second term
yields \begin{align*}
(\star) &= \mathbb{E}\left[ \mathbb{E}[T \wedge \tau \times \mathbb{I}\{T \wedge \tau > C\}|C,A,X] \middle | X,A=a \right] \\
&= \mathbb{E}[(1-\Delta^\tau) T \wedge \tau | X,A=a ]  \\
&= \mathbb{E}[(1-\Delta^\tau)  T^ {(a)} \wedge \tau | X], 
\end{align*} ending the proof.
\end{proof}

\begin{proof}
\emph{(Theorem~\ref{thm-bj}, \nameref{thm-bj}).} We let
\(T^* = \Delta^\tau \phi_1 + (1-\Delta^\tau)\phi_0\) be a transformation
of the form Equation~\ref{eq-defcut}. There holds \[
\mathbb{E}[(T^*-T \wedge \tau)^2] = \mathbb{E}[\Delta^\tau (\phi_1-T \wedge \tau)^2] + \mathbb{E}[(1-\Delta^\tau)(\phi_0-T \wedge \tau)^2].
\] The first term is non negative and is zero for the BJ transformation.
Since \(\phi_0\) is a function of \((\widetilde T, X, A)\) and that
\(\widetilde T = C\) on \(\{\Delta^\tau = 0\}\), the second term can be
rewritten in the following way. We let \(R\) be a generic quantity that
does not depend on \(\phi_0\). \begin{align*}
\mathbb{E}&[(1-\Delta^\tau)(\phi_0-T)^2] = \mathbb{E}\left[\mathbb{I}\{T\wedge \tau > C\} \phi_0^2 - 2 \mathbb{I}\{T\wedge \tau > C\} \phi_0 T \wedge \tau\right]  + R \\
&= \mathbb{E}\left[ \mathbb{P}(T\wedge \tau > C|C,A,X) \phi_0^2 - 2 \mathbb{E}[T\wedge \tau \mathbb{I}\{T\wedge \tau > C\}|C,A,X] \phi_0 \right] + R \\
&= \mathbb{E}\left[\mathbb{P}(T\wedge \tau > C|C,A,X) \left(\phi_0- \frac{\mathbb{E}[T\wedge \tau \mathbb{I}\{T\wedge \tau > C\}|C,A,X]}{\mathbb{P}(T\wedge \tau > C|C,A,X) }\right)^2\right] + R.
\end{align*} Now the first term in the right hand side is always
non-negative, and is zero for the BJ transformation.
\end{proof}

\begin{proposition}[AIPCW is a censoring-unbiased transformation]\protect\hypertarget{prp-tdr}{}\label{prp-tdr}

We let \(F,R\) be two conditional survival functions. Under Assumptions\ref{ass:sutva} (\nameref{ass:sutva}),  \ref{ass:rta} (\nameref{ass:rta}) ,  \ref{ass:trialpositivity} (\nameref{ass:trialpositivity}), \ref{ass:condindepcensoring} (\nameref{ass:condindepcensoring}) and if \(F\) also satisfies Assumption~\ref{ass:positivitycensoring} (\nameref{ass:positivitycensoring}), and if \(F(\cdot|X,A)\) is absolutely
continuous wrt \(G(\cdot|X,A)\), then the transformation
\(T^*_\mathrm{DR} = T^*_\mathrm{DR}(F,R)\) is a censoring-unbiased
transformation in the sense of Equation~\ref{eq-cut} whenever \(F = G\)
or \(R=S\).

\end{proposition}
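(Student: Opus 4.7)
The plan is to handle the two robustness clauses separately, since each relies on a different algebraic mechanism. Throughout, I fix $X$ and $a \in \{0,1\}$, condition on $\{A = a\}$, and use Assumption~\ref{ass:condindepcensoring} to factorize the joint law of $(T^{(a)}, C)$ given $(X,A)$, Assumptions~\ref{ass:sutva} and~\ref{ass:rta} to identify $S^{(a)}(\cdot|X)$ with $\mathbb{P}(T \wedge \tau > \cdot \mid X, A=a)$, and positivity of $F$ (inherited from the hypotheses) to guarantee that denominators are well-defined. Denote $U := T \wedge \tau$ and decompose $T^*_\mathrm{DR}(F,R) = \mathrm{I} + \mathrm{II} - \mathrm{III}$ into the three summands of Equation~\ref{eq-TDR}.

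\textbf{Case $F = G$.} The first summand $\mathrm{I}$ already has conditional expectation $\mathbb{E}[T^{(a)}\wedge\tau \mid X]$ by the IPCW identity of Proposition~\ref{prp-ipcw}, so it suffices to show that the augmentation $\mathrm{II} - \mathrm{III}$ has zero conditional mean given $(X, A=a)$ for any $R$. For $\mathrm{II}$, on $\{\Delta^\tau = 0\}$ we have $\widetilde T \wedge \tau = C$; integrating in $C$ using $C \perp T \mid X, A$ gives
\[
\mathbb{E}[\mathrm{II} \mid X, A=a] = \int_0^\tau \mathbb{P}(U > c \mid X,a)\,\frac{Q_R(c|X,a)}{G(c|X,a)}\,(-dG(c|X,a)).
\]
For $\mathrm{III}$, Fubini together with the factorization $\mathbb{P}(\widetilde T \wedge \tau > t \mid X,a) = \mathbb{P}(U > t \mid X,a)\,G(t|X,a)$ reproduces exactly the same integral, so subtraction yields zero.

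\textbf{Case $R = S$.} Here the IPCW identity is unavailable, so I introduce the auxiliary function
\[
K(t) := Q_S(t|X,a)\,\mathbb{P}(U > t \mid X,a) = \int_t^\tau u\,d\mathbb{P}_U(u|X,a),
\]
which satisfies $K(\tau)=0$, $K(0)=\mathbb{E}[U\mid X,A=a]$, and $dK(t) = -t\,d\mathbb{P}_U(t|X,a)$ on $[0,\tau]$. A direct calculation mirroring the previous case expresses each conditional expectation as a Stieltjes integral on $[0,\tau]$:
\[
\mathbb{E}[\mathrm{I}] = \int_0^\tau \frac{u\,G(u|X,a)}{F(u|X,a)}\,d\mathbb{P}_U(u|X,a), \qquad \mathbb{E}[\mathrm{II}] = -\int_0^\tau \frac{K(t)}{F(t|X,a)}\,dG(t|X,a),
\]
\[
\mathbb{E}[\mathrm{III}] = -\int_0^\tau \frac{K(t)\,G(t|X,a)}{F(t|X,a)^2}\,dF(t|X,a).
\]
These three integrals are precisely the three components of $d(K \cdot G / F)$ produced by the product/quotient rule. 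Applying integration by parts to $K\,G/F$ between $0$ and $\tau$, the boundary term equals $-K(0) = -\mathbb{E}[U\mid X,A=a]$, and the three interior integrals match $-\mathbb{E}[\mathrm{I}]$, $-\mathbb{E}[\mathrm{II}]$, and $+\mathbb{E}[\mathrm{III}]$ respectively. Rearranging yields $\mathbb{E}[T^*_\mathrm{DR}(F,S) \mid X, A=a] = \mathbb{E}[U \mid X, A=a] = \mathbb{E}[T^{(a)}\wedge\tau \mid X]$.

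The main obstacle is the bookkeeping of the Stieltjes differentials in the case $R = S$: producing an exact cancellation via integration by parts requires interpreting $d\mathbb{P}_{C,F}$ as $-dF$ and treating $K$, $G$, $F$ as functions of bounded variation on $[0,\tau]$, which is where the assumed absolute continuity of $F$ with respect to $G$ comes in---it rules out spurious atom corrections that would otherwise arise. For discontinuous $C$ or $T$, jump-correction terms would need to be tracked explicitly, which is the reason \cite{rubin2007} state the result only in the continuous setting.
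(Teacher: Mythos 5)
Your proof is correct in substance, but it takes a genuinely different route from the paper for the simple reason that the paper gives no argument at all: it states only that ``a careful examination of the proofs from \cite{rubin2007} shows that the proof translates straight away to our discrete setting.'' You instead supply a self-contained derivation, and both halves check out. For \(F=G\) the reduction to the IPCW identity of Proposition~\ref{prp-ipcw} plus the cancellation of the two augmentation terms (using \(\mathbb{P}(\widetilde T\wedge\tau>t\mid X,a)=\mathbb{P}(U>t\mid X,a)\,G(t\mid X,a)\)) is exactly right, and for \(R=S\) the identification of \(\mathbb{E}[\mathrm{I}]\), \(\mathbb{E}[\mathrm{II}]\), \(\mathbb{E}[\mathrm{III}]\) with the three differentials of \(K\,G/F\) and the boundary term \(-K(0)=-\mathbb{E}[U\mid X,A=a]\) is a clean way to exhibit the double robustness; this is essentially the Rubin--van der Laan argument made explicit. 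What your version buys is transparency about where each hypothesis enters (positivity of \(F\) for the denominators, conditional independence for the factorization, absolute continuity for the measure-theoretic bookkeeping); what it does not buy is the discrete case, which you explicitly defer, whereas the paper's whole framework (the grid \(\{t_k\}\), and \(G\) defined as the \emph{left limit} \(\mathbb{P}(C\geq t)\) rather than \(\mathbb{P}(C>t)\)) is precisely the atomic setting. Two points deserve care if you wanted to close that gap: (i) your identities \(d\mathbb{P}_C=-dG\) and \(\mathbb{P}(\widetilde T\wedge\tau>t)=\mathbb{P}(U>t)\,G(t)\) silently conflate \(G(t)=\mathbb{P}(C\geq t)\) with \(S_C(t)=\mathbb{P}(C>t)\), which is harmless only when \(C\) has no atoms; and (ii) the product rule \(d(KG/F)=\frac{G}{F}dK+\frac{K}{F}dG-\frac{KG}{F^2}dF\) acquires cross-variation corrections when the measures share atoms, so the integration-by-parts step is where the discrete extension actually requires work. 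Since the paper itself asserts rather than proves the discrete case, your proof is at least as complete as the paper's, but it establishes the proposition only under the same continuity restriction as \cite{rubin2007}.
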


A careful examination of the proofs from \cite{rubin2007} show that the proof translates straight away
to our discrete setting.

\subsubsection*{A.1.3. Proofs of Section~\ref{sec-obs_indcen}}\label{sec-proof13}

\begin{proposition}[Consistency of $S^*_{\mathrm{IPTW}}(t | A=a)$ with known $e$]\protect\hypertarget{prp-iptwkm}{}\label{prp-iptwkm}

Under Assumptions
\ref{ass:sutva} (\nameref{ass:sutva}),  \ref{ass:unconf} (\nameref{ass:unconf}), \ref{ass:positivitytreat} (\nameref{ass:positivitytreat}), \ref{ass:independentcensoring} (\nameref{ass:independentcensoring}) and \ref{ass:poscen} (\nameref{ass:poscen}) The oracle IPTW Kaplan-Meier estimator
\(S^*_{\mathrm{IPTW}}(t | A=a)\) with $\hat{e}=e$ is a strongly consistent and
asymptotically normal estimator of \(S^{(a)}(t)\).

\end{proposition}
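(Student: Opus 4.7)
The plan is to mirror the strategy used for Proposition~\ref{prp-ipcwkm}: reduce the weighted product-limit to a product of ratios of sample averages, identify the almost-sure limits of those averages as the classical quantities $1-s_k(a)$ and $r_k(a)$ (so that the product telescopes to $S^{(a)}(t)$), and then lift this to a joint CLT via the delta method.

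The key algebraic ingredient is an oracle reweighting identity: for any bounded measurable $\phi(X,T,C)$,
\begin{equation*}
\mathbb{E}\!\left[\frac{\mathbb{I}\{A=a\}}{\mathbb{P}(A=a\mid X)}\,\phi(X,T,C)\right]
= \mathbb{E}\!\left[\phi(X,T^{(a)},C)\right],
\end{equation*}
which holds by \ref{ass:positivitytreat}, \ref{ass:unconf}, and \ref{ass:sutva} after conditioning on $X$ (the weight when $A_i=a$ equals $1/e(X_i)$ for $a=1$ and $1/(1-e(X_i))$ for $a=0$, and the positivity assumption keeps it bounded). First I apply this with $\phi(X,T,C)=\mathbb{I}\{T=t_k, C\geq t_k\}$ to get
\begin{equation*}
\mathbb{E}\!\left[\tfrac{1}{n}D_k^{\mathrm{IPTW}}(a)\right] = \mathbb{P}(T^{(a)}=t_k)\, G(t_k),
\end{equation*}
using \ref{ass:independentcensoring} to factor out the censoring survival; likewise with $\phi(X,T,C)=\mathbb{I}\{T\geq t_k, C\geq t_k\}$,
\begin{equation*}
\mathbb{E}\!\left[\tfrac{1}{n}N_k^{\mathrm{IPTW}}(a)\right] = \mathbb{P}(T^{(a)}\geq t_k)\, G(t_k).
\end{equation*}
By the strong law of large numbers applied coordinatewise to these bounded-weight sums, the normalized counts converge almost surely, and the ratio converges to $1-s_k(a)$ where $s_k(a)=S^{(a)}(t_k)/S^{(a)}(t_{k-1})$; the common factor $G(t_k)$ cancels. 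Strong consistency of $S^*_{\mathrm{IPTW}}(t\mid A=a)$ then follows by taking the continuous product over $t_k\leq t$, exactly as in Proposition~\ref{prp-ipcwkm}, noting that \ref{ass:poscen} ensures the denominators stay bounded away from $0$ on $[0,\tau]$.

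For asymptotic normality, I would collect the $2K$-dimensional vector $n^{-1}(D_k^{\mathrm{IPTW}}(a), N_k^{\mathrm{IPTW}}(a))_{t_k\leq t}$ as an average of i.i.d.\ bounded random vectors (boundedness of the weights thanks to \ref{ass:positivitytreat} gives finite second moments), apply the multivariate CLT, and then invoke the delta method on the smooth map that sends these counts to the product $\prod_{t_k\leq t}(1-D_k^{\mathrm{IPTW}}(a)/N_k^{\mathrm{IPTW}}(a))$. The asymptotic variance is then an explicit quadratic form in the gradients of this map, analogous to the Greenwood-type variance in Proposition~\ref{prp-varkm} but with reweighted moments.

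The main obstacle I anticipate is not the consistency step, which is essentially bookkeeping, but writing a clean closed-form for the asymptotic variance: the entries of the covariance matrix of the $(D_k^{\mathrm{IPTW}},N_k^{\mathrm{IPTW}})$ family now carry cross terms involving $1/e(X)$ and $1/(1-e(X))$, which do not simplify as cleanly as in the unweighted Kaplan--Meier case. In the proposition I would therefore be content to state asymptotic normality and leave the variance implicit (as a gradient-type expression), just as Xie and Liu (2005) do heuristically, rather than aiming for an explicit closed-form Greenwood-style formula.
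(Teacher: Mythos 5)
Your proposal is correct and follows essentially the same route as the paper's proof: compute the expectation of the reweighted counts by conditioning on $X$ (using unconfoundedness to identify $\mathbb{P}(T^{(a)}=t_k)$ and independent censoring to factor out $G(t_k)$, which cancels in the ratio), apply the law of large numbers so the product-limit telescopes to $S^{(a)}(t)$, and obtain asymptotic normality from the multivariate CLT plus the delta method. The paper is in fact terser than your write-up — it too leaves the asymptotic variance implicit and only records the two moment identities — so your additional remarks on the reweighting identity and the Greenwood-type variance are consistent elaborations rather than deviations.
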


The proof of this result simply relies again on the law of large number
and the \(\delta\)-method.

\begin{proof}
\emph{(Proposition~\ref{prp-iptwkm}).} The fact that it is strongly
consistent and asymptotically normal is again a simple application of
the law of large number and of the \(\delta\)-method. We indeed find
that, for \(t_k \leqslant\tau\) \[
\begin{aligned}
\mathbb{E}\left[\frac{1}{e(X_i)} \mathbbm{1}\{\widetilde T_i = t_k, \Delta_i = 1, A_i=1\}\right] &= \mathbb{E}\left[\frac{A_i}{e(X_i)} \mathbbm{1}\{T_i = t_k, C_i \geqslant t_k\}\right] \\
&=\mathbb{E}\left[ \mathbb{E}\left[ \frac{A_i}{e(X_i)} \mathbbm{1}\{T_i = t_k, C_i \geqslant t_k\} \middle |X_i \right]\right] \\
&= \mathbb{E}\left[ \mathbb{E}\left[\frac{A_i}{e(X_i)}\middle | X_i\right] \mathbb{P}(T_i = t_k |X_i) \mathbb{P}( C_i \geqslant t_k)\right] \\
&= \mathbb{P}(T_i = t_k) \mathbb{P}( C_i \geqslant t_k),
\end{aligned}
\] where we used that \(A\) is independent of \(T\) conditionnaly on
\(X\), and that \(C\) is independent of everything. Likewise, one
would get that \[
\begin{aligned}
\mathbb{E}\left[\frac{1}{e(X_i)} \mathbbm{1}\{\widetilde T_i \geqslant t_k, A_i=1\}\right] =
&= \mathbb{P}(T_i \geqslant t_k) \mathbb{P}( C_i \geqslant t_k).
\end{aligned}
\] Similar computations hold for \(A=0\), ending the proof.
\end{proof}

\subsubsection*{A.1.4. Proofs of Section~\ref{sec-obscondcens}}\label{sec-proof32}

\begin{proposition}[IPTW-IPCW is a censoring unbiased transformation]\protect\hypertarget{prp-iptwipcw}{}\label{prp-iptwipcw}

Under Assumptions
\ref{ass:sutva} (\nameref{ass:sutva}),  \ref{ass:unconf} (\nameref{ass:unconf}), \ref{ass:positivitytreat} (\nameref{ass:positivitytreat}), \ref{ass:condindepcensoring} (\nameref{ass:condindepcensoring})
and \ref{ass:positivitycensoring} (\nameref{ass:positivitycensoring}), the IPTW-IPCW transform
\ref{eq-defipcw} is a censoring-unbiased transformation in the sense of
Equation~\ref{eq-cut}.

\end{proposition}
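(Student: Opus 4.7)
The plan is to closely mirror the proof of Proposition~\ref{prp-ipcw} in the RCT setting, with the only substantive change being the replacement of Random Treatment Assignment by Unconfoundedness at the final step. First, I would note that Assumption~\ref{ass:positivitycensoring} (\nameref{ass:positivitycensoring}) guarantees that the denominator $G(\widetilde T \wedge \tau \mid X, A)$ is bounded below almost surely, so $T^*_{\mathrm{IPCW}}$ is well-defined, and Assumption~\ref{ass:positivitytreat} (\nameref{ass:positivitytreat}) ensures that conditioning on $\{A=a\}$ is meaningful for both $a \in \{0,1\}$.

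The core computation proceeds in three short steps. First, on the event $\{A=a\}$, Assumption~\ref{ass:sutva} (\nameref{ass:sutva}) gives $\widetilde T \wedge \tau = T^{(a)} \wedge C \wedge \tau$, and on the sub-event $\{\Delta^\tau = 1\}$ one has further $\widetilde T \wedge \tau = T^{(a)} \wedge \tau$. This lets me rewrite
\[
\mathbb{E}[T^*_{\mathrm{IPCW}} \mid A=a, X] = \mathbb{E}\!\left[\frac{\mathbb{I}\{T^{(a)} \wedge \tau \leqslant C\}\, T^{(a)} \wedge \tau}{G(T^{(a)} \wedge \tau \mid A, X)} \,\middle|\, A=a, X\right].
\]
Second, I would condition on $(A, X, T^{(a)})$ inside the expectation. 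By Assumption~\ref{ass:condindepcensoring} (\nameref{ass:condindepcensoring}), $C$ is independent of $T^{(a)}$ given $(A,X)$, so
\[
\mathbb{E}[\mathbb{I}\{T^{(a)} \wedge \tau \leqslant C\} \mid A, X, T^{(a)}] = \mathbb{P}(C \geqslant T^{(a)} \wedge \tau \mid A, X, T^{(a)}) = G(T^{(a)} \wedge \tau \mid A, X),
\]
which exactly cancels the denominator. Third, the resulting expression reduces to $\mathbb{E}[T^{(a)} \wedge \tau \mid A=a, X]$, and invoking Assumption~\ref{ass:unconf} (\nameref{ass:unconf}) — namely $A \perp\mkern-9.5mu\perp T^{(a)} \mid X$ — yields
\[
\mathbb{E}[T^{(a)} \wedge \tau \mid A=a, X] = \mathbb{E}[T^{(a)} \wedge \tau \mid X],
\]
which is the censoring-unbiasedness condition of Equation~\ref{eq-cut}.

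There is no genuine obstacle here: the argument is essentially an observational rerun of Proposition~\ref{prp-ipcw}, and the delicate identity was already the use of conditionally independent censoring to identify $\mathbb{E}[\mathbb{I}\{T^{(a)} \wedge \tau \leqslant C\} \mid A,X,T^{(a)}]$ with $G(T^{(a)} \wedge \tau \mid A,X)$. The only new ingredient is swapping the marginal independence provided by \nameref{ass:rta} for the conditional-on-$X$ independence provided by \nameref{ass:unconf}, which suffices because the target identity is itself stated conditionally on $X$.
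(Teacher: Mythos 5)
Your proof is correct and follows essentially the same route as the paper's: rewrite the transform on the event $\{\Delta^\tau=1, A=a\}$ via SUTVA, condition on $(A,X,T^{(a)})$ and use conditionally independent censoring to cancel $G(T^{(a)}\wedge\tau\mid A,X)$, then invoke unconfoundedness to drop the conditioning on $A=a$. The paper merely phrases the identical computation through the propensity-weighted identity $\mathbb{E}[\,\cdot\mid X,A=1]=\mathbb{E}\bigl[\tfrac{A}{e(X)}\,\cdot\mid X\bigr]$, a purely presentational difference.
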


\begin{proof}
\emph{(Proposition~\ref{prp-iptwipcw}).} On the event
\(\{\Delta^\tau=1, A=1\}\), there holds
\(\widetilde T \wedge \tau = T \wedge \tau = T^ {(1)} \wedge \tau\), whence
we find that, \[
\begin{aligned}
\mathbb{E}[ T^*_{\mathrm{IPCW}}|X,A=1] &= \mathbb{E}\left[\frac{A}{e(X)} \frac{\mathbb{I}\{T^ {(1)} \wedge \tau \leqslant C\}}{G(T^ {(1)} \wedge \tau|X,A)} T^ {(1)} \wedge \tau\middle|X\right] \\
&= \mathbb{E}\left[ \frac{A}{e(X)} \mathbb{E}\left[\frac{\mathbb{I}\{T^ {(1)} \wedge \tau \leqslant C\}}{G(T^ {(1)} \wedge \tau|X,A)} \middle| X,A, T^ {(1)} \right]T^ {(1)} \wedge \tau\middle|X\right] \\
&=\mathbb{E}\left[ \frac{A}{e(X)} T^ {(1)} \wedge \tau\middle|X\right] \\
&=\mathbb{E}\left[T^ {(1)} \wedge \tau\middle|X\right],
\end{aligned}
\] and the same holds on the event \(A=0\).
\end{proof}

\begin{proof}
\emph{(Proposition~\ref{prp-consiptwipcw}, \nameref{prp-consiptwipcw}).} By consistency of
\(\widehat G(\cdot|X,A)\) and \(\widehat e\) and by continuity, it
suffices to look at the asymptotic behavior of the oracle estimator \[
\theta^*_{\mathrm{IPTW-IPCW}} = \frac1n\sum_{i=1}^n  \left(\frac{A_i}{ e(X_i)}-\frac{1-A_i}{1-e(X_i)} \right)\frac{\Delta_i^\tau}{G(\widetilde T_i \wedge \tau | A_i,X_i)} \widetilde T_i \wedge \tau.
\] The later is converging almost towards its mean, which, following
similar computations as in the previous proof, write \[
\begin{aligned}
\mathbb{E}\left[\left(\frac{A}{e(X)}-\frac{1-A}{1-e(X)} \right)\frac{\Delta^\tau}{G(\widetilde T \wedge \tau | A,X)} \widetilde T \wedge \tau\right]  
&= \mathbb{E}\left[\left(\frac{A}{e(X)}-\frac{1-A}{1-e(X)} \right) T \wedge \tau\right] \\
&= \mathbb{E}\left[T^ {(1)} \wedge \tau\right]-\mathbb{E}\left[T^ {(0)} \wedge \tau\right].
\end{aligned}
\]
\end{proof}

\begin{proposition}[Consistency of \(S^*_{\mathrm{IPTW-IPCW}}(t | A=a)\) with known $e$ and $G$]\protect\hypertarget{prp-iptwipcwkm}{}\label{prp-iptwipcwkm}

Under Assumptions
\ref{ass:sutva} (\nameref{ass:sutva}),  \ref{ass:unconf} (\nameref{ass:unconf}), \ref{ass:positivitytreat} (\nameref{ass:positivitytreat}), \ref{ass:condindepcensoring} (\nameref{ass:condindepcensoring})
and \ref{ass:positivitycensoring} (\nameref{ass:positivitycensoring}), and for all \(t \in [0,\tau]\), if the
oracle estimator \(S^*_{\mathrm{IPTW-IPCW}}(t | A=a)\) defined as in
Definition~\ref{def-iptwipcwkm} with
\(\widehat G(\cdot|A,X) = G(\cdot|A,X)\) and \(\widehat e = e\) is a
strongly consistent and asymptotically normal estimator of
\(S^{(a)}(t)\).

\end{proposition}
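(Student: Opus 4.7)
The plan is to combine the strategies used in the proofs of Propositions \ref{prp-ipcwkm} and \ref{prp-iptwkm}, since the IPTW-IPCW Kaplan--Meier estimator is a product-limit functional of ratios of two reweighted empirical averages. A key preliminary observation is that, thanks to the indicator $\mathbb{I}\{A_i=a\}$ inside both $D_k^{\mathrm{IPTW-IPCW}}(a)$ and $N_k^{\mathrm{IPTW-IPCW}}(a)$, only one term in the combined weight $A_i/e(X_i)+(1-A_i)/(1-e(X_i))$ contributes, effectively reducing the weight to $\mathbb{I}\{A_i=a\}/e(X_i)$ when $a=1$, and to $\mathbb{I}\{A_i=a\}/(1-e(X_i))$ when $a=0$. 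Under Assumptions~\ref{ass:positivitytreat} and \ref{ass:positivitycensoring}, both weights are uniformly bounded, so all sums involved are sums of i.i.d.\ bounded random variables.

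I would first apply the strong law of large numbers to each of $\frac{1}{n}D_k^{\mathrm{IPTW-IPCW}}(a)$ and $\frac{1}{n}N_k^{\mathrm{IPTW-IPCW}}(a)$ for every $t_k \leq t \leq \tau$, to get almost sure convergence to their respective expectations. The central step is to identify these limits as the arm-specific marginal probabilities $\mathbb{P}(T^{(a)}=t_k)$ and $\mathbb{P}(T^{(a)} \geq t_k)$. For $a=1$ and $t_k \leq \tau$, I would condition first on $(X,A)$ and invoke Assumption~\ref{ass:condindepcensoring} to factor $T^{(1)}$ from $C$, then use Assumption~\ref{ass:unconf} to write $\mathbb{P}(T^{(1)}=t_k\mid X,A)=\mathbb{P}(T^{(1)}=t_k\mid X)$ and $\mathbb{P}(A=1\mid X)=e(X)$. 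The propensity factor $e(X)$ and the censoring term $G(t_k|X,1)$ then cancel exactly, yielding
\[
\mathbb{E}\!\left[\frac{\mathbb{I}\{A=1\}}{e(X)\,G(t_k|X,1)}\,\Delta^\tau\,\mathbb{I}\{\widetilde T = t_k\}\right] = \mathbb{P}(T^{(1)}=t_k),
\]
and by the same chain of arguments (dropping $\Delta^\tau$),
\[
\mathbb{E}\!\left[\frac{\mathbb{I}\{A=1\}}{e(X)\,G(t_k|X,1)}\,\mathbb{I}\{\widetilde T \geq t_k\}\right] = \mathbb{P}(T^{(1)} \geq t_k).
\]
The identities for $a=0$ follow symmetrically, so the limiting ratio reduces to the familiar $\mathbb{P}(T^{(a)}=t_k)/\mathbb{P}(T^{(a)}\geq t_k)$.

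Strong consistency then follows from the continuous mapping theorem applied to the finite product: the denominators are bounded away from zero for $t \leq \tau$ by Assumption~\ref{ass:positivitycensoring}, so $(x_k, y_k)_k \mapsto \prod_k (1-x_k/y_k)$ is continuous at the limiting point, and the estimator converges almost surely to $\prod_{t_k \leq t}\bigl(1- \mathbb{P}(T^{(a)}=t_k)/\mathbb{P}(T^{(a)}\geq t_k)\bigr) = S^{(a)}(t)$. For asymptotic normality, I would invoke a multivariate central limit theorem for the vector $(\frac{1}{n}D_k^{\mathrm{IPTW-IPCW}}(a), \frac{1}{n}N_k^{\mathrm{IPTW-IPCW}}(a))_{t_k \leq t}$ (each entry being an i.i.d.\ sum of bounded variables with finite variance), then apply the delta method with the smooth product mapping. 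The main obstacle is the second step: unlike Proposition~\ref{prp-ipcwkm} where only the censoring mechanism had to be neutralized, or Proposition~\ref{prp-iptwkm} where only the propensity weighting did, here one must chain the two conditional independence assumptions in the correct order---first Assumption~\ref{ass:condindepcensoring} at the $(X,A)$ level to isolate $C$ from $T^{(1)}$, and only then Assumption~\ref{ass:unconf} at the $X$ level to remove the dependence on $A$---so that the cancellation of $e$ and $G$ occurs cleanly.
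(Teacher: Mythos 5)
Your proposal is correct and follows essentially the same route as the paper: compute the expectations of the reweighted death and at-risk counts by conditioning on $(X,A,T^{(a)})$ so that the censoring term $G$ and the propensity factor $e$ cancel (using conditionally independent censoring and unconfoundedness in that order), identify the limits as $\mathbb{P}(T^{(a)}=t_k)$ and $\mathbb{P}(T^{(a)}\geq t_k)$, conclude strong consistency of the product limit via the law of large numbers and continuity, and obtain asymptotic normality by the delta method. The only difference is that you spell out the boundedness of the weights and the continuous-mapping step, which the paper leaves implicit.
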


\begin{proof}
\emph{(Proposition~\ref{prp-iptwipcwkm}).} Asymptotic normality comes
from a mere application of the \(\delta\)-method, while strong
consistency follows from the law of large number and the follozing
computations. Like for the proof of Proposition~\ref{prp-ipcwkm}, one
find, by first conditionning wrt \(X,A,T^ {(a)}\), that, for
\(t_k \leqslant\tau\), \[
\begin{aligned}
\mathbb{E}\left[\left(\frac{A}{e(X)}+\frac{1-A}{1-e(X)} \right)\frac{\Delta^\tau}{G(t_k| A,X)} \mathbb{I}\{\widetilde T = t_k, A=a\}\right] = \mathbb{P}(T^ {(a)}=t_k)
\end{aligned}
\] and likewise that \[
\begin{aligned}
\mathbb{E}\left[\left(\frac{A}{e(X)}+\frac{1-A}{1-e(X)} \right)\frac{1}{G(t_k | A,X)} \mathbb{I}\{\widetilde T \geqslant t_k, A=a\}\right] = \mathbb{P}(T^ {(a)}\geqslant t_k)
\end{aligned}
\] so that indeed \(S^*_{\mathrm{IPTW-IPCW}} (t|A=a)\) converges almost
surely towards \(S^{(a)}(t)\).
\end{proof}

\begin{proposition}[IPTW-BJ is a censoring-unbiased transformation]\protect\hypertarget{prp-iptwbj}{}\label{prp-iptwbj}

Under Assumptions
\ref{ass:sutva} (\nameref{ass:sutva}),  \ref{ass:unconf} (\nameref{ass:unconf}), \ref{ass:positivitytreat} (\nameref{ass:positivitytreat}), \ref{ass:condindepcensoring} (\nameref{ass:condindepcensoring})
and \ref{ass:positivitycensoring} (\nameref{ass:positivitycensoring}), the IPTW-BJ transform \ref{eq-defbj}
is a censoring-unbiased transformation in the sense of
Equation~\ref{eq-cut}.

\end{proposition}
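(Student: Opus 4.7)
The plan is to mirror the proof of Proposition~\ref{prp-bj} essentially line for line, replacing the single use of random treatment assignment by the unconfoundedness assumption, and to verify that every conditional-expectation manipulation in that earlier proof only relied on conditionally independent censoring together with SUTVA, so that the argument carries over to the observational setting without modification.

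Concretely, I would start from the decomposition
\[
\mathbb{E}[T^*_{\mathrm{BJ}}\mid X,A=a]
= \mathbb{E}[\Delta^\tau(\widetilde T\wedge \tau)\mid X,A=a]
+\mathbb{E}\!\left[(1-\Delta^\tau)\,Q_S(\widetilde T\wedge \tau\mid X,A)\,\middle|\,X,A=a\right].
\]
By SUTVA, on $\{A=a\}$ we have $T=T^{(a)}$, so on the event $\{\Delta^\tau=1\}$ one has $\widetilde T\wedge\tau = T^{(a)}\wedge\tau$, which rewrites the first term as $\mathbb{E}[\Delta^\tau\,T^{(a)}\wedge\tau\mid X,A=a]$. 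For the second term, I would condition on $(C,A,X)$ inside the expectation and reproduce the identity from the proof of Proposition~\ref{prp-bj}: writing
\[
Q_S(C\mid X,A)=\frac{\mathbb{E}[T\wedge\tau\,\mathbb{I}\{T\wedge\tau>C\}\mid C,A,X]}{\mathbb{P}(T\wedge\tau>C\mid C,A,X)},
\]
and using conditionally independent censoring (Assumption~\ref{ass:condindepcensoring}) together with the positivity condition (Assumption~\ref{ass:positivitycensoring}) to justify dividing, the tower property yields
\[
\mathbb{E}\!\left[(1-\Delta^\tau)\,Q_S(\widetilde T\wedge \tau\mid X,A)\mid X,A=a\right]
=\mathbb{E}[(1-\Delta^\tau)(T\wedge\tau)\mid X,A=a]
=\mathbb{E}[(1-\Delta^\tau)\,T^{(a)}\wedge\tau\mid X,A=a],
\]
the last equality again by SUTVA. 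Summing the two contributions gives $\mathbb{E}[T^*_{\mathrm{BJ}}\mid X,A=a]=\mathbb{E}[T^{(a)}\wedge\tau\mid X,A=a]$.

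The final step is to remove the conditioning on $A=a$. Here the proof of Proposition~\ref{prp-bj} invoked the unconditional independence of $A$ and $(T^{(0)},T^{(1)},X)$, but in the observational setting this step is instead supplied by Assumption~\ref{ass:unconf}, which gives $T^{(a)}\perp\!\!\!\perp A\mid X$, hence $\mathbb{E}[T^{(a)}\wedge\tau\mid X,A=a]=\mathbb{E}[T^{(a)}\wedge\tau\mid X]$. This is exactly the censoring-unbiasedness condition in Equation~\ref{eq-cut}, completing the proof.

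There is no real obstacle: the main point to check is that the manipulation with $Q_S$ only conditions on $(C,A,X)$, so it never uses the joint independence structure of an RCT, and all identities remain valid under the weaker conditional-independence assumptions of the observational setting. The proof is thus best presented as a short remark noting that the computation of Proposition~\ref{prp-bj} applies verbatim up to the last line, where unconfoundedness replaces random treatment assignment.
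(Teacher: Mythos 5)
Your proof is correct and follows essentially the same route as the paper's: split $T^*_{\mathrm{BJ}}$ into its uncensored and censored parts, cancel $Q_S$ against $\mathbb{P}(T\wedge\tau>C\mid C,A,X)$ via the tower property exactly as in Proposition~\ref{prp-bj}, and invoke unconfoundedness in place of random treatment assignment. The only (cosmetic) difference is that the paper phrases the computation through the weighted expectations $\mathbb{E}\bigl[\tfrac{A}{e(X)}(\cdot)\mid X\bigr]$, which coincide with your $\mathbb{E}[\,\cdot\mid X,A=1]$; if anything, your choice to keep the conditioning on $A=a$ until the final step and apply unconfoundedness only to the sum $\mathbb{E}[T^{(a)}\wedge\tau\mid X,A=a]$ is the cleaner way to present it.
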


\begin{proof}
\emph{(Proposition~\ref{prp-iptwbj}).} We write \[
\begin{aligned}
\mathbb{E}[ T^*_{\mathrm{IPTW-BJ}}|X,A=1] &= \mathbb{E}\left[\frac{A}{e(X)} \Delta^\tau \times \widetilde T \wedge \tau\middle|X\right] + \mathbb{E}\left[\frac{A}{e(X)} (1-\Delta^\tau) Q_S(\widetilde T \wedge \tau|A,X) \middle| X\right]. 
\end{aligned}
\] On the event \(\{\Delta^\tau=1, A=1\}\), there holds
\(\widetilde T \wedge \tau = T \wedge \tau = T^ {(1)} \wedge \tau\), whence
we find that the first term on the the right hand side is equal to \[
\begin{aligned}
\mathbb{E}\left[\frac{A}{e(X)} \Delta^\tau \times \widetilde T \wedge \tau\middle|X\right] &= \mathbb{E}\left[\frac{A}{e(X)} \Delta^\tau \times T^ {(1)} \wedge \tau\middle|X\right] \\
&= \mathbb{E}\left[\Delta^\tau \times T^ {(1)} \wedge \tau\middle|X\right].
\end{aligned}
\] For the second term in the right hand side, notice that on the event
\(\{\Delta^\tau=0, A=1\}\), there holds
\(\widetilde T = C < T^ {(1)} \wedge \tau\), so that \[
\begin{aligned}
\mathbb{E}&\left[\frac{A}{e(X)} \mathbb{I}\{C < T^ {(1)} \wedge \tau\} \frac{\mathbb{E}[T^ {(1)} \wedge \tau \times \mathbb{I}\{C < T^ {(1)} \wedge \tau\}|X,A,C]}{\mathbb{P}(C < T^ {(1)} \wedge \tau | C,X,A)} \middle| X\right] \\
&= \mathbb{E}\left[\frac{A}{e(X)} \mathbb{E}[T^ {(1)} \wedge \tau \times \mathbb{I}\{C < T^ {(1)} \wedge \tau\}|X,A,C] \middle| X\right] \\
&= \mathbb{E}\left[T^ {(1)} \wedge \tau \times \mathbb{I}\{C < T^ {(1)} \wedge \tau\} \middle| X\right] \\
&= \mathbb{E}\left[(1-\Delta^\tau) T^ {(1)} \wedge \tau  \middle| X\right], 
\end{aligned}
\] and the sane holds on the event \(\{A=0\}\), which ends the proof.
\end{proof}

\begin{proof}
\emph{(Proposition~\ref{prp-consiptwbj}, \nameref{prp-consiptwbj}).} By consistency of
\(\widehat G(\cdot|X,A)\) and \(\widehat e\) and by continuity, it
suffices to look at the asymptotic behavior of the oracle estimator \[
\theta^*_{\mathrm{IPTW-BJ}} = \frac1n\sum_{i=1}^n  \left(\frac{A_i}{ e(X_i)}-\frac{1-A_i}{1-e(X_i)} \right)\left(\Delta_i^\tau \times \widetilde T_i \wedge \tau + (1-\Delta_i^\tau) Q_S(\widetilde T_i \wedge \tau|A_i,X_i)\right).
\] The later is converging almost towards its mean, which, following
similar computations as in the previous proof, is simply equal to the
RMST difference.
\end{proof}

\begin{proposition}[Double/Triple robustness of $\Delta^*_\mathrm{QR}$]\protect\hypertarget{prp-tqr}{}\label{prp-tqr} (adapted from \cite{Ozenne20_AIPTW_AIPCW})

Let \(F,R\) be two conditional survival function, \(p\) be a propensity
score, and \(\nu\) be a conditional response. Then, under the same
assumption on \(F,R\) as in Proposition~\ref{prp-tdr}, and under
Assumptions
\ref{ass:sutva} (\nameref{ass:sutva}),  \ref{ass:unconf} (\nameref{ass:unconf}), \ref{ass:positivitytreat} (\nameref{ass:positivitytreat}), \ref{ass:condindepcensoring} (\nameref{ass:condindepcensoring})
and \ref{ass:positivitycensoring} (\nameref{ass:positivitycensoring}), the transformations
\(\Delta^*_{\mathrm{QR}} = \Delta^*_{\mathrm{QR}}(F,R,p,\nu)\)
satisfies, fo \(a \in {0,1}\), \[
\mathbb{E}[ \Delta^*_\mathrm{QR} |X] = \mathbb{E}[T^ {(1)}\wedge \tau-T^ {(0)}\wedge \tau |X]\quad\text{if}\quad  
\begin{cases} F = G \quad &\text{or}\quad R=S \quad \text{and} \\
p=e \quad &\text{or}\quad \nu=\mu.
\end{cases}
\]

\end{proposition}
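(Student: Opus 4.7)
The plan is to decompose $\Delta^*_{\mathrm{QR}}$ into a treated-arm contribution and a control-arm contribution, and to verify separately that each reproduces the conditional expectation of the corresponding potential outcome $\mathbb{E}[T^{(a)}\wedge\tau\mid X]$. By symmetry, it will suffice to treat the case $a=1$, i.e.~to show that under the stated hypotheses,
\[
\mathbb{E}\!\left[\frac{A}{p(X)}\bigl(T^*_{\mathrm{DR}}(F,R)-\nu(X,A)\bigr) + \nu(X,1)\,\middle|\, X\right]
= \mathbb{E}[T^{(1)}\wedge\tau\mid X].
\]
The case $a=0$ will then follow by an identical argument applied to the $(1-A)/(1-p(X))$ term, and adding the two yields the desired identity.

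First I would iterate conditional expectations by conditioning on $(X,A)$ and observe, since $A$ is binary, that
\[
\mathbb{E}\!\left[\frac{A}{p(X)}\bigl(T^*_{\mathrm{DR}}(F,R)-\nu(X,A)\bigr)\,\middle|\, X\right]
= \frac{e(X)}{p(X)}\,\mathbb{E}\!\left[T^*_{\mathrm{DR}}(F,R)-\nu(X,1)\,\middle|\, X,A=1\right].
\]
At this stage I would invoke Proposition~\ref{prp-tdr}, which guarantees that whenever $F=G$ or $R=S$, one has $\mathbb{E}[T^*_{\mathrm{DR}}(F,R)\mid X,A=1] = \mathbb{E}[T^{(1)}\wedge\tau\mid X]$. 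The remainder of the argument then splits into two cases according to which treatment nuisance is correctly specified.

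In the case $p=e$, the ratio $e(X)/p(X)$ equals $1$, so the weighted term reduces to $\mathbb{E}[T^{(1)}\wedge\tau\mid X]-\nu(X,1)$, and adding back $\nu(X,1)$ recovers $\mathbb{E}[T^{(1)}\wedge\tau\mid X]$. In the case $\nu=\mu$, Assumption~\ref{ass:unconf} (\nameref{ass:unconf}) together with SUTVA gives $\mu(X,1) = \mathbb{E}[T\wedge\tau\mid X,A=1] = \mathbb{E}[T^{(1)}\wedge\tau\mid X,A=1] = \mathbb{E}[T^{(1)}\wedge\tau\mid X]$, so that $\mathbb{E}[T^*_{\mathrm{DR}}(F,R)-\mu(X,1)\mid X,A=1]=0$ and the augmentation term vanishes regardless of the ratio $e(X)/p(X)$; again the remaining deterministic contribution $\mu(X,1)$ equals $\mathbb{E}[T^{(1)}\wedge\tau\mid X]$.

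The main obstacle, though a mild one, is organizing the case analysis cleanly: the robustness to misspecification of $G$ comes via Proposition~\ref{prp-tdr}, while the robustness to misspecification of $e$ requires that $\mu$ be correctly specified \emph{and} that unconfoundedness be used to identify $\mu(X,a)$ with $\mathbb{E}[T^{(a)}\wedge\tau\mid X]$. Positivity Assumptions~\ref{ass:positivitytreat} (\nameref{ass:positivitytreat}) and \ref{ass:positivitycensoring} (\nameref{ass:positivitycensoring}) are invoked throughout to ensure the various ratios are well defined and the tower property is applicable. Once these ingredients are in place, the four-way double/triple robustness statement follows from routine conditioning arguments and no further analytic input is needed.
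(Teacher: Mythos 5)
Your proposal is correct and follows essentially the same route as the paper's proof: decompose $\Delta^*_{\mathrm{QR}}$ into the treated-arm and control-arm terms, condition on $(X,A)$ to obtain the factor $e(X)/p(X)$ multiplying $\mu(X,1)-\nu(X,1)$ (using Proposition~\ref{prp-tdr} for the censoring-unbiasedness of $T^*_{\mathrm{DR}}$ when $F=G$ or $R=S$), and then check the two cases $p=e$ and $\nu=\mu$ separately. Your explicit remark that unconfoundedness is needed to identify $\mu(X,a)$ with $\mathbb{E}[T^{(a)}\wedge\tau\mid X]$ is a small but welcome clarification that the paper leaves implicit.
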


\begin{proof}
\emph{(Proposition~\ref{prp-tqr}).} We can write that \[
\Delta^*_{\mathrm{QR}} = \underbrace{\frac{A}{p(X)}(T^*_{\mathrm{DR}}(F,R)-\nu(X,1))+ \nu(X,1)}_{(\mathrm{A})} - \left(\underbrace{\frac{1-A}{1-p(X)}(T^*_{\mathrm{DR}}(F,R)-\nu(X,0))+ \nu(X,0)}_{(\mathrm{B})} \right). 
\] Focusing on term \((\mathrm{A})\), we easily find that \[
\begin{aligned}
\mathbb{E}[(\mathrm{A}) |X] &= \mathbb{E}\left[\frac{A}{p(X)}(T^*_{\mathrm{DR}}(F,R)-\nu(X,1))+ \nu(X,1) \middle|X\right] \\
&= \frac{e(X)}{p(X)}(\mu(X,1)-\nu(X,1)) + \nu(X,1).
\end{aligned}
\] Where we used that \(T^*_{\mathrm{DR}}(F,R)\) is a censoring-unbiased
transform when \(F=G\) or \(R=S\). Now we see that if \(p(X) = e(X)\),
then \[
\mathbb{E}[(\mathrm{A}) |X] = \mu(X,1)-\nu(X,1) + \nu(X,1) = \mu(X,1),
\] and if \(\nu(X,1) = \mu(X,1)\), then \[
\mathbb{E}[(\mathrm{A}) |X] = \frac{e(X)}{p(X)} \times 0 + \mu(X,1) = \mu(X,1).
\] Likewise, we would show that
\(\mathbb{E}[(\mathrm{B}) |X] = \mu(X,0)\) under either alternative,
ending the proof.
\end{proof}

\subsection*{A.2. Appendix B: Implementations}\label{appendix-b-code}

\subsubsection*{A.2.1. Available packages}\label{sec-package}

\textbf{SurvRM2}

The RMST difference with Unadjusted Kaplan-Meier \(\hat \theta_{KM}\)
(Equation~\ref{eq-thetaunadjKM}) can be obtained using the function
\(\texttt{rmst2}\) which requires as input the observed time-to-event,
the status, the arm which corresponds to the treatment and \(\tau\).

\textbf{RISCA}

The \href{https://cran.r-project.org/web/packages/RISCA/index.html}{RISCA} package \citep{Foucher2019} provides several methods for estimating
\(\theta_{\mathrm{RMST}}\). The RMST difference with Unadjusted
Kaplan-Meier \(\hat \theta_{KM}\) (Equation~\ref{eq-thetaunadjKM}) can
be derived using the \(\texttt{survfit}\) function from the  \href{https://cran.r-project.org/web/packages/survival/index.html}{survival}
package \citep{Therneau2001} which estimates Kaplan-Meier survival curves for
treated and control groups, and then the \(\texttt{rmst}\) function
calculates the RMST by integrating these curves, applying the rectangle
method (type=``s''), which is well-suited for step functions.

The IPTW Kaplan-Meier (Equation~\ref{eq-IPTWKM}) can be applied using
the \(\texttt{ipw.survival}\) and \(\texttt{rmst}\) functions. The
ipw.survival function requires user-specified weights (i.e.~propensity
scores). To streamline this process, we define the
\(\texttt{RISCA\_iptw}\) function, which combines these steps and
utilizes the \(\texttt{estimate\_propensity\_score}\) from the
\(\texttt{utilitary.R}\) file.

A single-learner version of the G-formula, as introduced in
Section~\ref{sec-Gformula}, can be implemented using the
\(\texttt{gc.survival}\) function. This function requires as input the
conditional survival function which should be estimated beforehand with
a Cox model via the \(\texttt{coxph}\) function from the
\href{https://cran.r-project.org/web/packages/survival/index.html}{survival} package \citep{Therneau2001}. Specifically, the
single-learner approach applies a single Cox model incorporating both
covariates and treatment, rather than separate models for each treatment
arm. We provide a function \(\texttt{RISCA\_gf}\) that consolidates
these steps.

\textbf{grf}

The \href{https:\%20//cran.r-project.org/web/packages/grf/index.html}{grf} package
\citep{Tibshirani2017} enables estimation
of the difference between RMST using the Causal Survival Forest approach \citep{Cui2023}, which extends the non-parametric causal forest
framework to survival data. The RMST can be estimated with the
\(\texttt{causal\_survival\_forest}\) function, requiring covariates
\(X\), observed event times, event status, treatment assignment, and the
time horizon \(\tau\) as inputs. The
\(\texttt{average\_treatment\_effect}\) function then evaluates the
treatment effect based on predictions from the fitted forest.

\subsubsection*{A.2.2. Implementation of the
estimators}\label{implementation-of-the-estimators}

The complete codebase is available on \href{https://github.com/Sanofi-Public/causal_survival_analysis}{\texttt{\detokenize{github.com/Sanofi-Public/causal_survival_analysis}}}

\textbf{Unadjusted Kaplan-Meier}

Although Kaplan-Meier is implemented in the \href{https://cran.r-project.org/web/packages/survival/index.html}{survival}
package \citep{Therneau2001}, we provide a custom implementation,
\(\texttt{Kaplan\_meier\_handmade}\), for completeness. The difference
in Restricted Mean Survival Time, estimated using Kaplan-Meier as in
Equation~\ref{eq-thetaunadjKM} can then be calculated with the
\(\texttt{RMST\_1}\) function.

As an alternative, one can also use the \(\texttt{survfit}\) function in
the \href{https://cran.r-project.org/web/packages/survival/index.html}{survival} package \citep{Therneau2001} for Kaplan-Meier and specify the
\(\texttt{rmean}\) argument equal to \(\tau\) in the corresponding
summary function:

\textbf{IPCW Kaplan-Meier}

We first provide an \(\texttt{adjusted.KM}\) function which is then used
in the \(\texttt{IPCW\_Kaplan\_meier}\) function to estimate the
RMST difference \(\hat{\theta}_{\mathrm{IPCW}}\) as in
Equation~\ref{eq-thetaIPCWKM}. The survival censoring function
\(G(t|X)\) is computed with the
\(\texttt{estimate\_survival\_function}\) utility function from the
\(\texttt{utilitary.R}\) file.

One could also use the \(\texttt{survfit}\) function in the \href{https://cran.r-project.org/web/packages/survival/index.html}{survival}
package \citep{Therneau2001} by adding IPCW weights for treated and control
group and specify the \(\texttt{rmean}\) argument equal to \(\tau\) in
the corresponding summary function:

This alternative approach for IPCW Kaplan-Meier would also be valid for
IPTW and IPTW-IPCW Kaplan-Meier.

\textbf{Buckley-James-based estimator }

The function \(\texttt{BJ}\) estimates \(\theta_{\mathrm{RMST}}\) by
implementing the Buckley-James estimator as in Equation~\ref{eq-BJnopi}. It
uses two functions available in the \(\texttt{utilitary.R}\) file,
namely \(\texttt{Q\_t\_hat}\) and \(\texttt{Q\_Y}\).

\textbf{IPTW Kaplan-Meier}

The function \(\texttt{IPTW\_Kaplan\_meier}\) implements the IPTW-KM
estimator in Equation~\ref{eq-RMST_IPTWKM}. It uses the
\(\texttt{estimate\_propensity\_score}\) function from the
\(\texttt{utilitary.R}\).

\textbf{G-formula}

We implement two versions of the G-formula:
\(\texttt{g\_formula\_T\_learner}\) and
\(\texttt{g\_formula\_S\_learner}\). In
\(\texttt{g\_formula\_T\_learner}\), separate models estimate survival
curves for treated and control groups, whereas
\(\texttt{g\_formula\_S\_learner}\) uses a single model incorporating
both covariates and treatment status to estimate survival time. The
latter approach is also available in the \href{https://cran.r-project.org/web/packages/RISCA/index.html}{RISCA} package \citep{Foucher2019} but is limited to Cox models.

\textbf{IPTW-IPCW Kaplan-Meier}

The \(\texttt{IPTW\_IPCW\_Kaplan\_meier}\) function implements the
IPTW-IPCW Kaplan Meier estimator from
Equation~\ref{eq-RMST_IPTW_IPCWKM}. It uses the utilitary functions from
the \(\texttt{utilitary.R}\) file
\(\texttt{estimate\_propensity\_score}\) and
\(\texttt{estimate\_survival\_function}\) to estimate the nuisance
parameters, and the function \(\texttt{adjusted.KM}\) which computes an
adjusted Kaplan Meier estimator using the appropriate weight.

\textbf{IPTW-BJ estimator}

The \(\texttt{IPTW\_BJ}\) implements the IPTW-BJ estimator in
Equation~\ref{eq-iptwbj}. It uses the utilitary functions, from the
\(\texttt{utilitary.R}\) file, \(\texttt{estimate\_propensity\_score}\),
\(\texttt{Q\_t\_hat}\) and \(\texttt{Q\_Y}\) to estimate the nuisance
parameters.

\textbf{AIPTW-AIPCW}

The \(\texttt{AIPTW\_AIPCW}\) function implements the AIPTW\_AIPCW
estimator in Equation~\ref{eq-AIPTW_AIPCW} using the utilitary function
from the \(\texttt{utilitary.R}\) file
\(\texttt{estimate\_propensity\_score}\), \(\texttt{Q\_t\_hat}\),
\(\texttt{Q\_Y}\), and \(\texttt{estimate\_survival\_function}\) to
estimate the nuisance parameters.

\subsection*{A.3. Appendix C: Data Generating Process and Figures}\label{appendix-c-figures}

\paragraph{Simulation coefficients (Section~\ref{sec-simulation-RCT}).}
In the RCT setting, we set \(\mu=(1,1,-1,1)\), \(\Sigma=\mathrm{Id}_4\), and use a Cox baseline rate \(\lambda_0=0.01\) with \(\beta_0=(0.5,0.5,-0.5,0.5)\) for \(\lambda^{(0)}(t\mid X)=\lambda_0\exp\{\beta_0^\top X\}\).
Censoring uses \(\lambda_{C,0}=0.03\): in Scenario~1, \(\lambda_C(t)=\lambda_{C,0}\); in Scenario~2, \(\lambda_C(t\mid X)=\lambda_{C,0}\exp\{\beta_C^\top X\}\) with \(\beta_C=(0.7,0.3,-0.25,-0.1)\).
Treatment is randomized with \(e(X)=0.5\).
\begin{figure}[!htbp]
\centering{
\hspace*{-0.05\textwidth}
\includegraphics[width=1.06\textwidth]{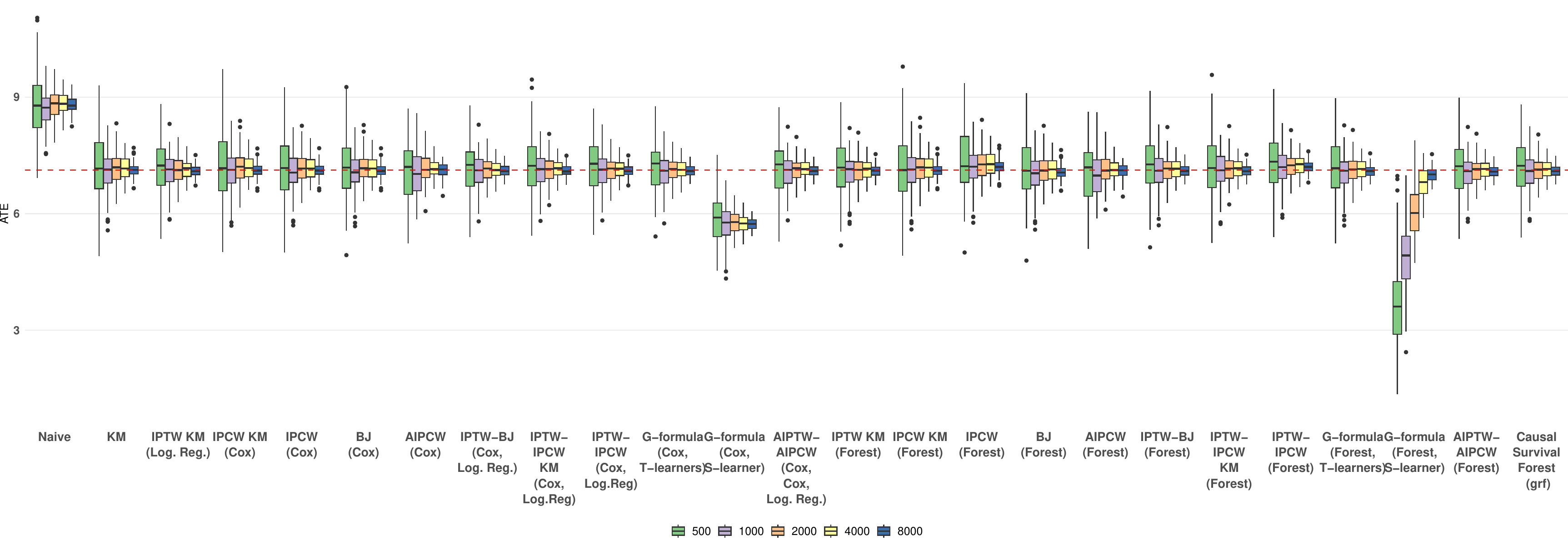}
}
\caption{\label{fig-rct1_full}Results of the ATE for the simulation of a RCT
with independent censoring.}
\end{figure}%

\begin{figure}[!htbp]
\centering{
\hspace*{-0.05\textwidth}
\includegraphics[width=1.06\textwidth]{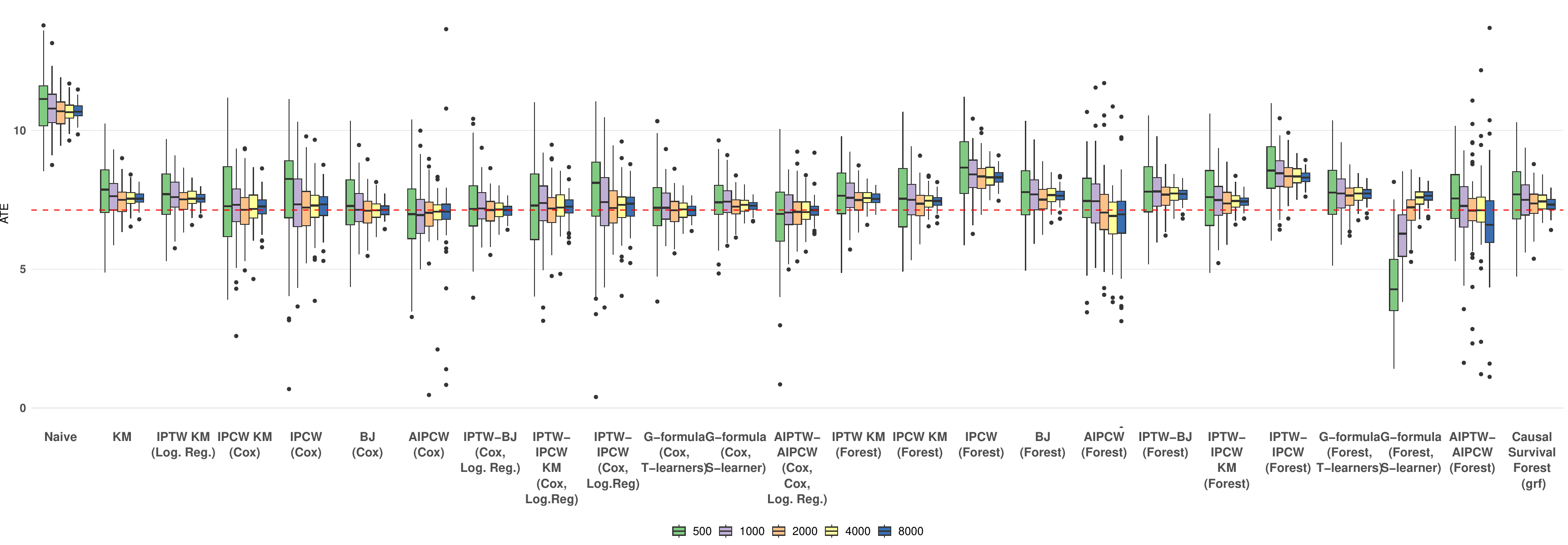}
}
\caption{\label{fig-rct2_full}Estimation results of the ATE for the
simulation of a RCT with conditionally independent censoring.}
\end{figure}%
\paragraph{Simulation coefficients (Section~\ref{sec-simulation-Obs}).}
In the observational setting, we use a logistic propensity model
\(\operatorname{logit}(e(X))=\beta_A^\top X\) with
\(\beta_A = (-1,\,-1,\,-2.5,\,-1)\).
Recall that \(\operatorname{logit}(p)=\log\!\big(p/(1-p)\big)\).
All other components of the data-generating process are identical to those in Section~\ref{sec-simulation-RCT}.

\begin{figure}[!htbp]
\centering{
\hspace*{-0.05\textwidth}
\includegraphics[width=1.06\textwidth]{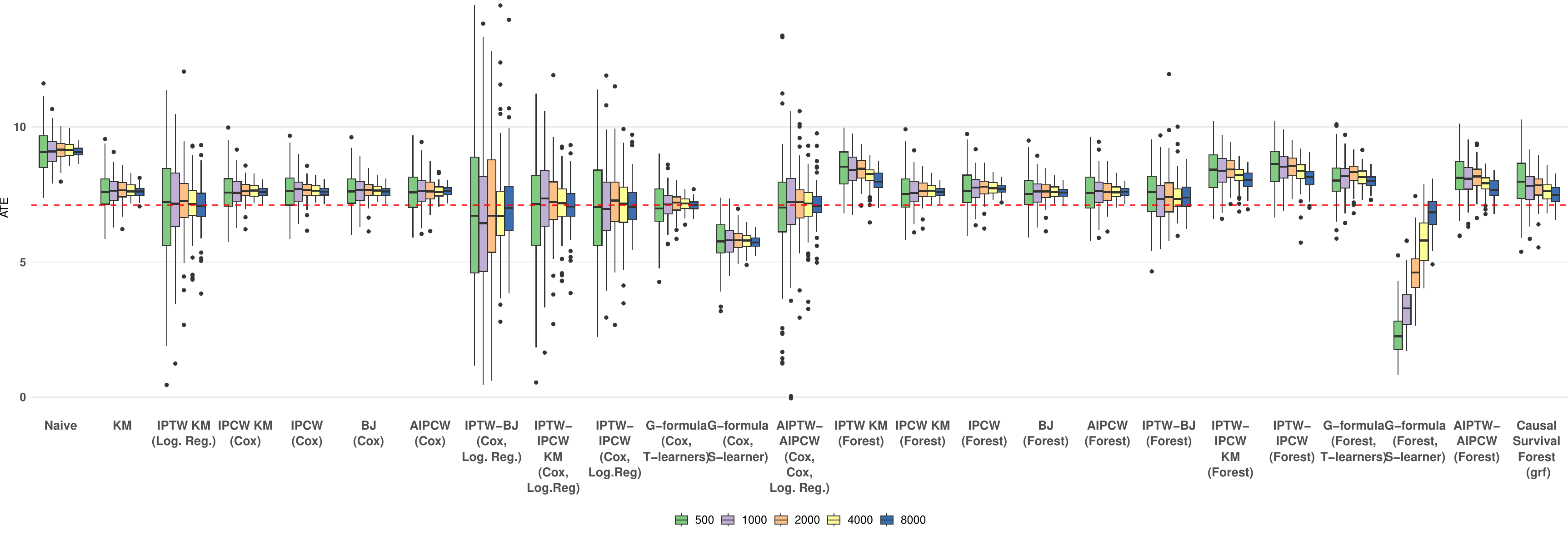}
}
\caption{\label{fig-obs1_full}Estimation results of the ATE for the
simulation of an observational study with independent censoring.}
\end{figure}%

\begin{figure}[!htbp]
\centering{
\hspace*{-0.05\textwidth}
\includegraphics[width=1.06\textwidth]{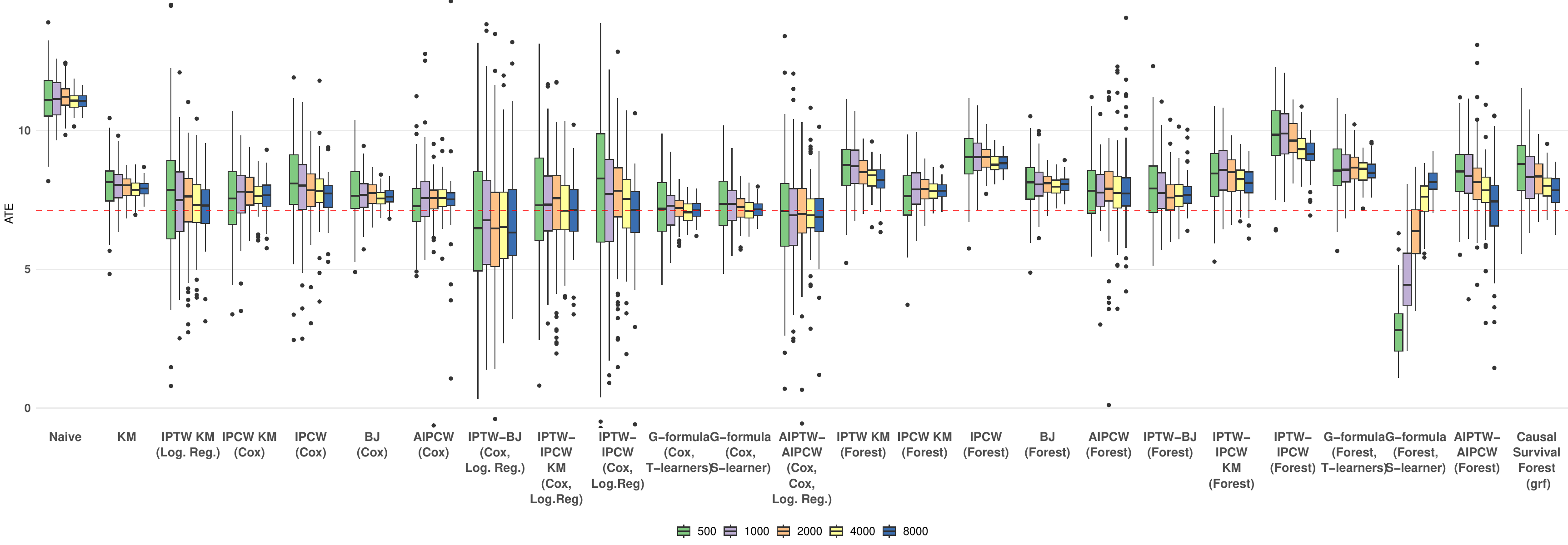}
}
\caption{\label{fig-obs2_full}Estimation results of the ATE for the
simulation of an observational study with conditionally independent censoring.}
\end{figure}%

\paragraph{Simulation coefficients (Section~\ref{sec-simulation-mis}).}
We set \(\mu=(0.1,0.5,0.7,0.4)\) and \(\Sigma=\mathrm{Id}_4\).
Let $Z(X)=(X_1^2,X_2^2,X_3^2,X_4^2,\; X_1X_2,X_1X_3,X_1X_4,X_2X_3,X_2X_4,X_3X_4).$
The control hazard uses \\ \(\beta_0=(0.2,\,0.3,\,0.1,\,0.1,\,0,\,1,\,0,\,0,\,0,\,-1)\).
Censoring uses \(\beta_C=(0.05,\,-0.01,\,0.05,\,-0.01,\,0,\,1,\,0,\,0,\,-1,\,0)\).
The propensity model uses \(\beta_A=(0.05,\,-0.1,\,0.5,\,-0.1,\,-1,\,0,\,-1,\,0,\,0,\,0)\).
Under well specification, nuisance models use inputs \((X,Z(X))\); under misspecification, they use only \(X\) and the first four components of \(Z(X)\), i.e., \((X_1^2,X_2^2,X_3^2,X_4^2)\).

\begin{figure}[H]

\centering{
\hspace*{-0.03\textwidth}
\includegraphics[width=0.95\textwidth]{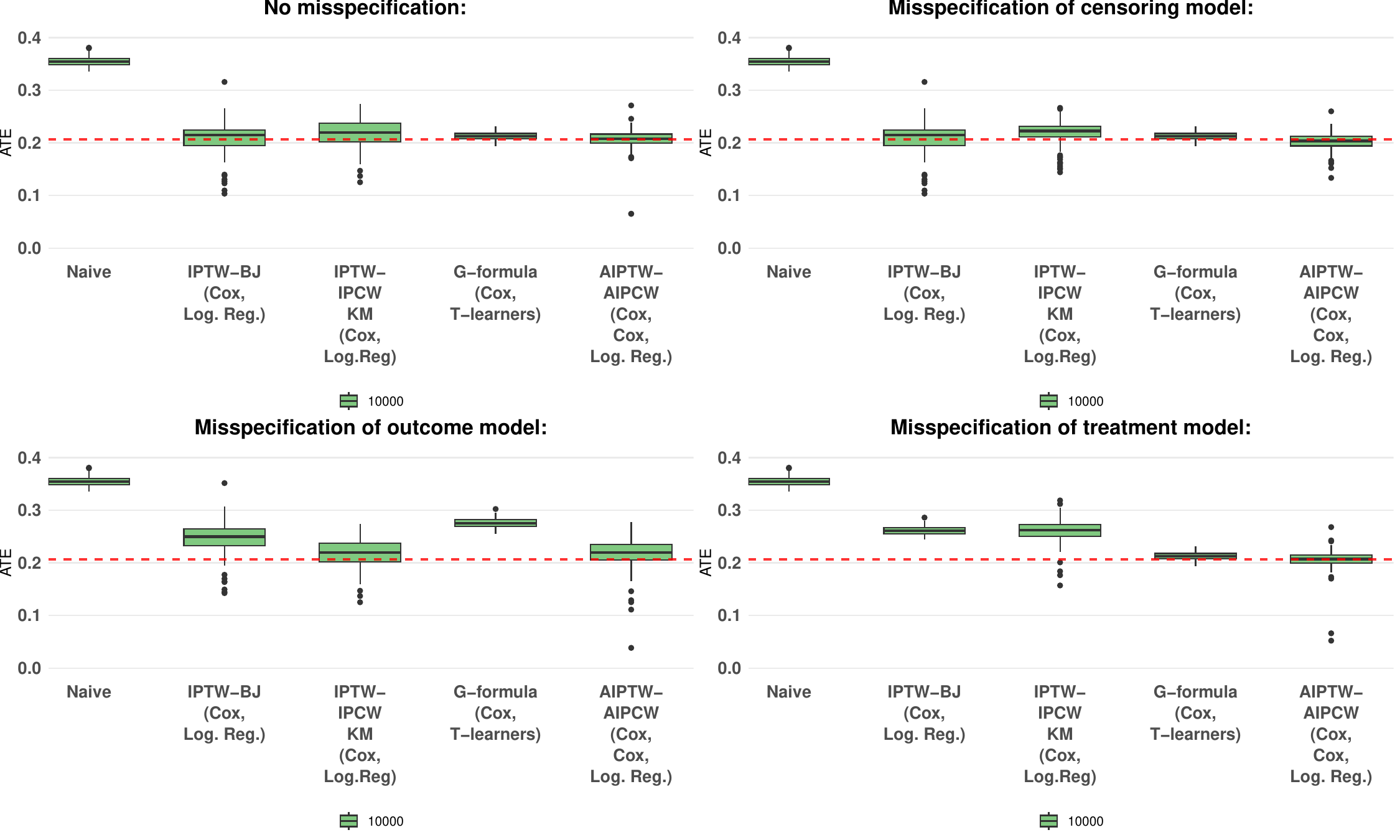}
}

\caption{\label{fig-mis2}Estimation results of the ATE for an
observational study with conditionally independent in case of a single
misspecification.}

\end{figure}%
\begin{figure}[h]
\centering{
\includegraphics[width=0.95\textwidth]{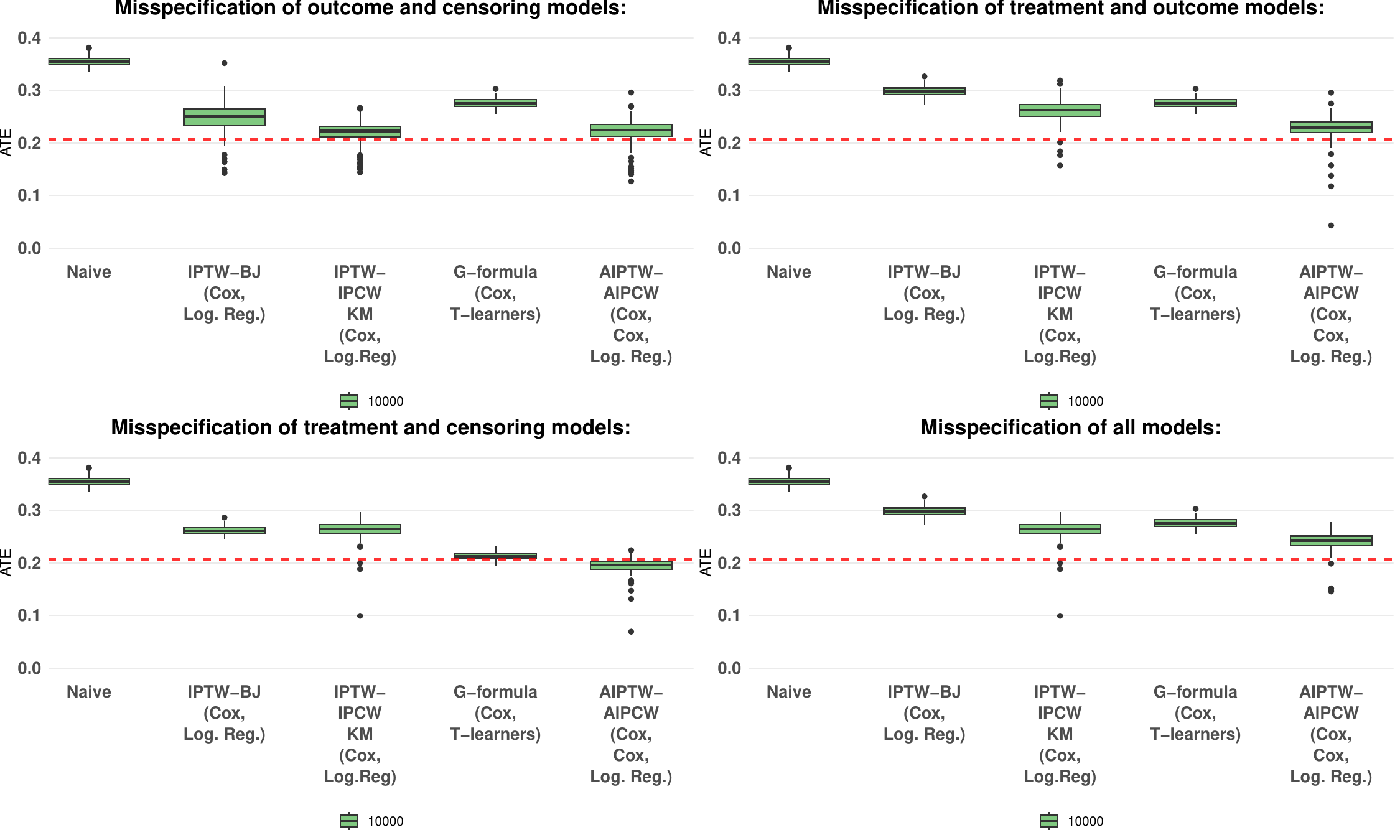}
}
\caption{\label{fig-mis3}{ATEs estimation in observational study with conditionally independent censoring with two or more misspecifications.}}
\end{figure}%

\end{document}